\documentclass{statsoc}
\usepackage{setspace}
\usepackage[utf8]{inputenc}
\usepackage{graphicx}
\usepackage{natbib}
\usepackage{microtype}

\usepackage{color}
\usepackage{xcolor}
\usepackage{amsmath}
\usepackage{amssymb}
\usepackage{mathtools}
\usepackage{booktabs}
\usepackage{bm}

\usepackage[algo2e]{algorithm2e} 
\usepackage{algorithm}
\usepackage{verbatim}
\usepackage{setspace}
\usepackage{multirow}
\usepackage{lineno}

\usepackage{verbatim}

\usepackage{doi}
\usepackage{enumerate}
\usepackage{tikz} 
\usetikzlibrary{arrows.meta,calc}

\usepackage{cancel}

\newcommand{\bc}{\boldsymbol c}
\newcommand{\bg}{\boldsymbol{g}}
\newcommand{\bh}{\boldsymbol{h}}

\newcommand{\bs}{\boldsymbol{s}}
\newcommand{\bt}{\boldsymbol{t}}
\newcommand{\bu}{\boldsymbol{u}}
\newcommand{\bx}{\boldsymbol{x}}
\newcommand{\bz}{\boldsymbol{z}}

\newcommand{\bG}{\boldsymbol{G}}

\newcommand{\bQ}{\boldsymbol{Q}}
\newcommand{\bS}{\boldsymbol{S}}
\newcommand{\bT}{\boldsymbol{T}}

\newcommand{\bZ}{\boldsymbol{Z}}

\newcommand{\bvarepsilon}{\boldsymbol{\varepsilon}}

\newcommand{\byobs}{\boldsymbol{y}}

\newcommand{\argmin}{\mathop{\mathrm{arg\,min}}}  

\newcommand{\bbeta}{\boldsymbol{\beta}}

\newcommand{\btheta}{\boldsymbol{\theta}}
\newcommand{\bphi}{\boldsymbol{\phi}}

\newcommand{\bnu}{\boldsymbol{\nu}}
\newcommand{\btau}{\boldsymbol{\tau}}
\newcommand{\boldeta}{\boldsymbol{\eta}}

\newcommand{\var}{\mathsf{Var}}

\definecolor{cFeas}{HTML}{0072B2}
\colorlet{cPen}{cFeas}              
\definecolor{cPost}{gray}{0.45}
\colorlet{cTraj}{black}
\definecolor{cGrad}{RGB}{99,90,191}    

\newcommand{\normal}[1]{\mathsf{Normal}\left(#1\right)}

\DeclareMathOperator{\elpd}{ELPD}

\DeclareMathOperator{\rmse}{RMSE}
\DeclareMathOperator{\relrmse}{rel\text{-}RMSE}

\DeclareMathOperator{\Rtd}{R2D2}
\DeclareMathOperator{\Rtdprox}{Prox-R2}

\DeclareMathOperator{\rhs}{Regularized\ Horseshoe}
\DeclareMathOperator{\nig}{Normal-Gamma}

\title[Proximal R2 priors]{%
  Proximal R2 Priors for\\
  Structured Variance Allocation}
\author{Javier Enrique Aguilar}
\address{TU Dortmund, RC TRUST (UA Ruhr); corresponding author}
\email{javier.aguilarr@icloud.com}
\author{David Kohns}
\address{Aalto University}
\author{Eric Yanchenko}
\address{Akita International University}
\author{Shonosuke Sugasawa}
\address{Keio University}
\author{Kaoru Irie}
\address{University of Tokyo}
\author{Aki Vehtari}
\address{Aalto University}
\author[Aguilar \textit{et al.}]{Paul-Christian B\"urkner}
\address{TU Dortmund}

\newtheorem{theorem}{Theorem}[section]
\newtheorem{proposition}[theorem]{Proposition}
\newtheorem{lemma}[theorem]{Lemma}
\newtheorem{corollary}[theorem]{Corollary}

\newtheorem{assumption}{Assumption}[section]

\makeatletter
\long\def\@makecaption#1#2{%
  \ifx\statsoc@figure\@captype
    \vskip\abovecaptionskip
    \sbox\@tempboxa{\sffamily\footnotesize\textbf{#1.} #2\strut}%
    \ifdim \wd\@tempboxa >\hsize
      \sffamily\footnotesize\textbf{#1.} #2\strut\par
    \else
      \global\@minipagefalse
      \hb@xt@\hsize{\hfil\box\@tempboxa\hfil}%
    \fi
    \vskip\belowcaptionskip
  \else
    \parbox{\textwidth}{\sffamily\footnotesize\textbf{#1.} #2\strut}\aftergroup\\
  \fi}
\makeatother

\begin{document}


\begin{abstract}

Scientific domain knowledge in regression often concerns structural relationships among model terms, which are typically encoded as parameter constraints. Since such relationships rarely hold exactly in practice, prior distributions should encourage them while at the same time retaining support for scientifically plausible departures. Accounting for these relationships becomes more pressing in modern regression, where high-dimensionality is usually present and regularisation is essential. $R^2$ decomposition priors provide an interpretable foundation for regularisation by placing a prior on total explained variance and distributing the implied signal across model terms. We introduce the Proximal R2 prior, which uses proximal relaxations to encode these relationships as soft constraints on the variance allocation. Our formulation yields differentiable posteriors compatible with standard gradient based samplers. Theoretically, we prove that our prior preserves the characteristic marginal properties of global-local shrinkage priors. Moreover, we establish posterior consistency and contraction rates in growing dimensions, showing that the inclusion of structural relationships and proximal methods do not alter the shrinkage properties. Notably, these results demonstrate that the structural constraints in the model do not hinder posterior recovery. Simulation studies and real-data applications to interaction and generalised additive models demonstrate that our prior improves parameter recovery, out-of-sample prediction, and sampling efficiency over existing constrained and unconstrained shrinkage priors, even under structural misspecification.
\end{abstract}



\section{Introduction}
\label{sec:introduction}

Modern regression models frequently contain terms linked by scientifically meaningful relationships.  In an interaction model, for instance, an interaction is often considered more plausible when the main effects are active, a principle known as heredity \citep{chipman_bayesian_1997, griffin_hierarchical_2017}. In a generalised additive model, a parsimonious linear effect might be preferred unless the data provide sufficient evidence for a non-linear departure \citep{hastie_generalized_1986, marra2011practical}. Likewise, domain knowledge often suggests that certain treatment exposures or mechanisms should receive greater prior signal than others \citep{hahn_decoupling_2015,sewell_posterior_2025}. 
Such structural beliefs govern the relative relevance of model terms, yet they rarely hold exactly in practice. A useful prior should encourage the proposed structure without preventing the data from supporting scientifically credible departures from it \citep{robert_bayesian_2007}.

Incorporating such structural guidance is especially critical in high-dimensional regression problems, where the number of covariates is large relative to the sample size \citep{wainwright_high-dimensional_2019}. Bayesian inference responds to the complexity of these problems by designing prior distributions that encode structure \citep{castillo_needles_2012,castillo_bayesian_2015}. Global-local shrinkage priors provide a natural starting point to encourage proposed structures since they combine model-wide regularization with term-specific adaptation \citep{carvalho_horseshoe_2010, polson_local_2012, armagan_generalized_2011, bhattacharya_dirichletlaplace_2015, piironen_sparsity_2017,bai_beta_2021}. 
An array of shrinkage priors encode relationships among the local scales, facilitating regularization of one model term dependent on its relationship to others \citep{zellner_assessing_1986, griffin_structured_2024,aguilar2025dependencyawareshrinkagepriorshigh}. R2 type priors allow for interpretable model-level prior specification by separating priors beliefs about total explained variance, $R^2$, from signal allocation to model terms \citep{zhang_bayesian_2022}. This provides a principled route for incorporating structure into the model while still allowing domain knowledge to be easily elicited. Originally designed for ordinary linear regression, the framework has since been extended to multilevel \citep{aguilar_intuitive_2023, yanchenko_r2d2_2025}, time-series \citep{kohns_arr2_2025}, ordinal \citep{yanchenko_pseudo-r2d2_2025}, spatial \citep{yanchenko_spatial_2024} and grouped \citep{aguilar_r2_2025, yanchenko_group_2026} settings. Despite these advances, a general framework for encoding flexible parameter constraints and relational preferences directly into the signal allocation mechanism remains lacking.

To this end, we develop an R2 prior which embeds prespecified structural relationships among model terms and parameter constraints directly as preferences using proximal methods \citep{polson_proximal_2015}. In view of this, we name our prior distribution Proximal R2, which we denote by $\Rtdprox$. Our prior retains the model-level interpretation of the R2 prior similar to \citet{zhang_bayesian_2022}, while treating relationships as guidance, rather than properties that must hold exactly. Furthermore, the uncertainty related to constraints and relationships is expressed via proximal relaxations of the corresponding hard-constrained prior, i.e., one that would enforce the constraints strictly \citep{polson_proximal_2015, duan_bayesian_constraint_2020, presman2023distance, zhou_proximal_2024}. Our approach results in a family of softly constrained posteriors which approach the hard-constrained formulation as the amount of smoothing decreases; while still retaining support for scientifically plausible departures from the preferred relationships. Moreover, our models can be sampled efficiently with off-the-shelf state-of-the-art gradient-based samplers \citep{neal_mcmc_2011,hoffman_no-u-turn_2014}; a consequence of using proximal methods.

The combination of dependence and parametric constraints raises the question whether the  additional model structure preserves typical shrinkage prior behaviour, namely, strong shrinkage of noise and signal detection \citep{polson_shrink_2011,pas_theoretical_2021,tadesse_handbook_2021}. While explicit dependence among proportions of variance has been considered before \citep{aguilar_generalized_2025}, their interaction with explicit structural relationships has not been studied. In particular, we focus on problems whose structural relationships of interest can be expressed as sets of linear constraints (although our framework is sufficiently flexible to account for general forms). 
These constraints induce a non-standard joint distribution on the simplex, whose marginal allocation densities are generally unavailable in closed form, making their implications for shrinkage difficult to anticipate. We show that the proposed prior preserves the desired marginal shrinkage behaviour. Our analysis characterises how the global scale and the distribution of allocation shares interact to jointly determine the marginal coefficient density near zero and in the tails. 

We also establish posterior consistency and contraction for Gaussian regression with growing dimension, importantly, without requiring the true coefficients to obey the favoured structural constraints. To the best of our knowledge, these represent the first theoretical results for shrinkage behaviour and posterior consistency and contraction for continuous shrinkage priors coupled with proximal methods. The theoretical contribution also extends to the unconstrained logistic-normal R2 prior \citep{aguilar_generalized_2025}, for which the marginal origin, tail asymptotics, and posterior contraction rates had not previously been established.

We demonstrate the practical utility and scalability of the $\Rtdprox$ prior through simulations and real-data applications involving interaction and additive models. Across the considered regimes, $\Rtdprox$ consistently outperforms established shrinkage priors, regardless of whether they take constraints into account or not in their formulations. The performance gains are most pronounced under moderate-to-high signal-to-noise ratios, especially for identifying non-zero coefficients. Our method remains useful even when the constraints hold only approximately. Our results illustrate how structural knowledge, even when uncertain, can guide high-dimensional regularisation while retaining interpretable control in complex high-dimensional settings.

The paper is organized as follows. Section~\ref{sec:method} introduces the proposed prior while Section~\ref{sec:comp} describes the computational approach. The theoretical results are presented in Section~\ref{sec:theory} with Section~\ref{sec:experiments} presenting the simulation and real-data results. Finally, Section~\ref{sec:Discussion} discusses the findings and outlines directions for future research.

\section{Methodology}\label{sec:method}

\subsection{Model}
Given responses $y_i$, covariates $\bx_i^\prime=(x_{i1},\dots,x_{ip})$ and $\bz_i=h(\bx_i)\in\mathbb R^d$, which collects model terms constructed from the original covariates, consider the additive model
\begin{equation}
\label{eq:resp}
    y_i = \bz_i^\prime \bbeta + \varepsilon_i,
\end{equation}
where $\varepsilon_i\stackrel{\text{iid.}}{\sim}\mathsf{Normal}(0,\sigma^2)$ for $i=1,\dots,n$. In ordinary linear regression, $h(\bx_i)=\bx_i$ and $d=p$ while in models that contain added structure, $h(\cdot)$ may include interactions, spline basis functions, polynomial terms, or other transformations \citep{griffin_hierarchical_2017}.

\subsection{R2 priors}

We adopt the global--local shrinkage framework \citep{polson_local_2012,tadesse_handbook_2021}, in which the prior for each regression coefficient is represented as a scale mixture of normals \citep{andrews_scale_1974, west_scale_1987}. Specifically, conditional on the coefficient-specific mixing scales, the regression coefficients are independent and Gaussian:
\begin{equation}
\label{eq:normal-kernel}
    \beta_j \mid \sigma^2,\psi_j^2 \sim \mathsf{Normal}(0,\sigma^2\psi_j^2), \qquad j=1,\dots,d,
\end{equation}
where $\psi_j^2$ is the coefficient-specific prior scale, which we will decompose into global signal and term-specific allocation parameters.

Following \cite{zhang_bayesian_2022}, we construct a prior for a model-level coefficient-of-determination $R^2$ which induces a prior distribution on each individual $\psi_j^2$. Treating $\bz$ as random and independent of $\bbeta$, we define $R^2$ under the joint distribution induced by the coefficient prior and the covariate distribution as 
\begin{equation}\label{eq:bayes-r2-def}
    R^2 =  \operatorname{corr}^2(y, \bz^\prime\bbeta)= \frac{\mathsf{Var}(\bz^\prime\bbeta)} {\mathsf{Var}(\bz^\prime\bbeta)+\sigma^2}.
\end{equation}
Using the law of total variance, we obtain
\begin{equation}
\label{eq:linear-predictor-var-total}
\begin{aligned}
    \mathsf{Var}(\bz^\prime\bbeta) &= \mathbb{E}_{\bz}\{\mathsf{Var}(\bz^\prime\bbeta\mid\bz)\} + \mathsf{Var}_{\bz}\{\mathbb{E}(\bz^\prime\bbeta\mid\bz)\}
\end{aligned}
\end{equation}
where the second term is zero and
$
\mathsf{Var}(\bz^\prime\bbeta\mid\bz) = \sigma^2\sum_{j=1}^d z_j^2\psi_j^2
$
such that
\begin{equation}\label{eq:expected-var-linear-predictor}
\mathsf{Var}(\bz^\prime\bbeta)  = \sigma^2\sum_{j=1}^d \psi_j^2\mathbb{E}_{\bz}(z_j^2).
\end{equation}
We assume that $\mathbb{E}_{\bz}(z_j^2) < \infty$ for all $j$.  To separate the allocation of prior signal from the scale of the associated model term, we define
$
\rho_j = \mathbb{E}_{\bz}(z_j^2)
$
for $j=1,\dots,d$ \citep{aguilar_intuitive_2023},
and parametrise the coefficient-specific variances as
\begin{equation}\label{eq:lambda-rho-scaling}
    \psi_j^2 = \frac{\phi_j\omega^2}{\rho_j}, \qquad j=1,\dots,d,
\end{equation}
where $\bphi=(\phi_1,\dots,\phi_d)^\prime$ lies on the simplex, i.e $\phi_j\geq 0$ and $\sum_{j=1}^d\phi_j=1$. Then
$
    \sum_{j=1}^d \psi_j^2\mathbb{E}_{\bz}(z_j^2)
    =
    \omega^2
$
such that
\begin{equation}\label{eq:r2-omega}
    R^2 = \frac{\sigma^2\omega^2} {\sigma^2\omega^2+\sigma^2} = \frac{\omega^2}{\omega^2+1}.
\end{equation}
 Therefore, $\omega^2$ controls the total prior signal relative to the residual variance, while $\phi_j$ is the variance allocation share assigned to the $j$th model term. As in previous work on R2 priors, we let $R^2$ follow a Beta distribution, which induces a beta-prime prior distribution \citep{johnson1994continuous} on $\omega^2$  i.e., $R^2\sim\mathsf{Beta}(a,b)$ implies $\omega^2\sim\mathsf{BetaPrime}(a,b)$.

\subsection{ Logistic-normal allocation}
The R2 prior framework requires that $\bphi$ lie on the simplex so that the total prior signal $\omega^2$ is apportioned to the different model terms \citep{zhang_bayesian_2022}. Generally speaking, if $(T_1,\dots,T_d)$ is a vector of positive random variables, then we can define
\begin{equation}
    \phi_j = \frac{T_j}{\sum_{r=1}^dT_r}, \qquad j=1,\dots,d.
\end{equation}
For example, taking $T_j$'s to be independent gamma random variables with a common rate parameter induces a Dirichlet distribution on  $\boldsymbol{\phi}$ \citep[e.g.,][]{kruijer_adaptive_2010,lin_dirichlet_2016,zhang_bayesian_2022,aguilar_generalized_2025}. Alternatively, we can consider a latent Gaussian vector $\boldsymbol{\eta} \sim \mathsf{Normal} (\boldsymbol{\mu}_{\boldsymbol{\eta}}, \boldsymbol{\Sigma}_{\boldsymbol{\eta}})$ and map it to the simplex through the softmax operator, thus setting $T_j=\exp(\eta_j)$. The resulting distribution  of $\boldsymbol{\phi}$ is a logistic-normal \citep{aitchison_logistic-normal_1980, aguilar_generalized_2025, yanchenko_group_2026}. Through the covariance matrix $\boldsymbol{\Sigma}_{\eta}$, the logistic-normal distribution can accommodate a richer range of dependence structures among the components of $\boldsymbol{\phi}$ than the Dirichlet distribution, whose covariance structure is fully determined by its mean vector and concentration vector.

Although constraints could be imposed directly on $\boldsymbol{\phi}$, doing so entangles the scientific restrictions with the geometry of the simplex. In particular, the constraints must be compatible with both positivity and the unit-sum condition. Restricting any one component necessarily changes the feasible ranges of the others. Such restrictions can therefore alter the marginal behavior and dependence structure of the entire composition in ways that are difficult to anticipate. A construction based on positive variables $(T_1,\dots,T_d)$ does not fully resolve this issue. Even when the joint distribution is specified flexibly, normalization can obscure the implications for the induced distribution of $\boldsymbol{\phi}$ \citep{kruijer_adaptive_2010}. 

For these reasons, we adopt the logistic-normal construction and impose constraints on the latent Gaussian parameter $\boldsymbol{\eta}$ rather than directly on $\boldsymbol{\phi}$. The softmax transformation automatically enforces positivity and the unit-sum condition, thereby separating the compositional normalization from the scientific restrictions of interest. Moreover, the identity
$
\log\left(\phi_j/\phi_k\right)
=
\eta_j-\eta_k
$ provides direct control over relative allocations. Consequently, ordering constraints, bounds on relative magnitudes, and more general multiplicative relationships among the components of $\boldsymbol{\phi}$ can often be represented as linear constraints on contrasts of $\boldsymbol{\eta}$ \citep{aguilar_generalized_2025}. Finally, imposing restrictions on $\boldsymbol\eta$ directly, does not alter the  distribution of $R^2$ unlike working directly with the regression coefficients $\boldsymbol{\beta}$. 

\subsection{Motivating examples}
\label{subsec:examples}

We now describe several modelling situations in which constraints on the latent allocation parameters naturally arise.

\paragraph{Domain expertise.}
 One of the simplest examples comes from domain information where practitioners believe that the prior variance assigned to one term should be bounded by that assigned to another. In the latent scale, this means
\begin{equation}
\label{eq:domain-ordering}
    \eta_j\leq\eta_{k}.
\end{equation}
Since the softmax denominator is common to all coordinates, this is equivalent to $\phi_j\leq\phi_{k}$. More generally, if the prior allocation of term $j$ is expected to be at most $c>0$ times the allocation of term $k$, then we may impose
\begin{equation}
\label{eq:domain-ratio}
    \eta_j-\eta_{k}\leq \log c,
\end{equation}
which is equivalent to $\phi_j\leq c\phi_{k}$.

\paragraph{Heredity in interaction models.}
Interaction models are another common source of variance constraints \citep{chipman_bayesian_1997, griffin_hierarchical_2017}. Consider a regression model containing all main effects and second-order interactions,
\begin{equation}
\label{eq:interaction-model}
    y_i  = \sum_{j=1}^p x_{ij}\beta_j + \sum_{j<k} x_{ij}x_{ik}\beta_{jk} + \varepsilon_i.
\end{equation}
Letting
$
    \bz_i = \left(x_{i1},\dots,x_{ip},x_{i1}x_{i2},x_{i1}x_{i3},\dots,x_{i,p-1}x_{ip}\right)^\prime
$
and the vector of coefficients 
$\bbeta = \left(\beta_1,\dots,\beta_p,\beta_{12},\beta_{13},\dots,\beta_{p-1\,p}\right)^\prime
$
places this model in the general form of \eqref{eq:resp}, with
$d=p+p(p-1)/2$ model terms.

Heredity expresses the prior belief that an interaction should receive considerable signal only when the corresponding main effects do \citep{Haris_Witten_Interactions_2016}. On the latent scale, we encode this belief by requiring the interaction parameter to be no greater than the average of the corresponding main-effect parameters:
\begin{equation}\label{eq:latent-heredity}
    \eta_{jk}
    \leq
    \frac{1}{2}(\eta_j+\eta_k),
    \qquad j<k,
\end{equation}
which is equivalent to  $\phi_{jk} \leq \sqrt{\phi_j\phi_k}$ on the allocation scale. 
Therefore the interaction term can only be considered if both main effects are present, a condition known as strong heredity \citep{roshan_interactions}.


\paragraph{Generalised Additive Models (GAMs).}

Generalized additive models provide another setting in which structured prior allocations arise \citep{hastie_generalized_1986,griffin_hierarchical_2017}.  Consider a GAM
\begin{equation}
\label{eq:gam-model}
    y_i = \sum_{j=1}^p f_j(x_{ij})+\varepsilon_i
\end{equation}
where $f_j(\cdot)$ are unknown functions. A common representation uses basis expansions of the form
\begin{equation}
 f_j(x_{ij})  = x_{ij}\beta_j + \sum_{\ell=1}^{L_j}\beta_{j\ell}B_{j\ell}(x_{ij}),
\end{equation}
where the basis functions $B_{j\ell}(\cdot)$ are chosen to represent departures from linearity.  Collecting the linear terms $x_{ij}$ and nonlinear basis terms $B_{j\ell}(x_{ij})$ in $\bz_i$ shows that this model is a special case of \eqref{eq:resp}.

A natural hierarchy principle is that each nonlinear basis term should receive no more variance allocation than the corresponding linear term. On the latent scale, this can be written as
\begin{equation}
\label{eq:gam-individual-hierarchy}
    \eta_{j\ell}\leq\eta_j,
    \qquad j=1,\dots,p,\quad \ell=1,\dots,L_j,
\end{equation}
which is equivalent to $\phi_{j\ell}\leq\phi_j$. A group-level version constrains the average latent nonlinear allocation,
\begin{equation}\label{eq:gam-average-hierarchy}
    \frac{1}{L_j}\sum_{\ell=1}^{L_j}\eta_{j\ell}
    \leq
    \eta_j,
    \qquad j=1,\dots,p.
\end{equation}
This implies
\begin{equation}\label{eq:gam-geometric-hierarchy}
    \left(\prod_{\ell=1}^{L_j}\phi_{j\ell}\right)^{1/L_j}
    \leq
    \phi_j,
\end{equation}
that is, the geometric mean of the nonlinear allocations is bounded by the corresponding linear allocation.  

\subsection{Proposed prior}
\label{sec:proposed-prior}
For a chosen collection of structural relationships like those described above, let $\mathcal{E}\subseteq\mathbb{R}^d$ denote the set of latent allocation vectors that satisfy them. We write  $\mathsf{Normal}_{\mathcal{E}} (\boldsymbol{\mu}_{\eta},\boldsymbol{\Sigma}_{\eta})$ to denote the smoothly constrained Gaussian prior on $\boldsymbol{\eta}$ which incorporates the relationships defining $\mathcal{E}$. We show how these relationships are imposed in Section~\ref{sec:comp}.

Combining this latent distribution with the R2 hierarchy gives the proposed prior
\begin{equation}
\label{eq:proposed-prior}
\begin{aligned}
\beta_i&\mid\sigma^2,\phi_i,\omega^2 \stackrel{\text{ind.}}{\sim}\mathsf{Normal}(0,\sigma^2\phi_i \omega^2/\rho_i)
&& i=1,\dots,d,\\
\boldsymbol{\eta} &\sim \mathsf{Normal}_{\mathcal{E}} \left( \boldsymbol{\mu}_{\eta}, \boldsymbol{\Sigma}_{\eta} \right), \ \phi_i = \exp(\eta_i)/\sum_{r=1}^d\exp(\eta_r) && i=1,\dots,d,\\
\omega^2&\sim\mathsf{BetaPrime}(a,b), \ \sigma^2\sim \pi(\sigma^2)
\end{aligned}
\end{equation}
The softmax transformation induces a corresponding smoothly constrained logistic-normal distribution on $\boldsymbol{\phi}$, which we denote by $\mathsf{LN}_{\mathcal{E}} (\boldsymbol{\mu}_{\eta},\boldsymbol{\Sigma}_{\eta})$. When the population second moments $\rho_j$ are unavailable, we use $\widehat{\rho}_j = \frac{1}{n}\sum_{i=1}^n z_{ij}^2$ for  $j=1,\dots,d$. We refer to the resulting hierarchy as the $\Rtdprox$ prior.

\section{Computation}\label{sec:comp}

The hierarchy shown in Section\ref{sec:proposed-prior} requires a prior for the latent allocation vector that can encode the structural relationships represented by $\mathcal E$. This could, for instance, be achieved by truncating the prior directly to $\mathcal E$ and attempting to sample from it \citep{pakman_exact_2014,botev_normal_2017}, possibly even by using modern probabilistic programming languages \citep{ge_turing_2018,abril-pla_pymc_2023,stan_development_team_stan_2026}. We prefer, however, to induce a smooth version of the prior that favours configurations in $\mathcal E$ while retaining support outside it \citep{duan_bayesian_constraint_2020,zhou_proximal_2024}. Not only does direct truncation suffer from inefficient sampling in priors with hierarchical structure, but the soft enforcement safeguards against constraint misspecification by expressing the proposed structural information as a prior preference without ruling out alternative configurations. Equally important, this approach also yields a differentiable posterior that can be explored using state-of-the-art gradient-based MCMC methods \citep{neal_mcmc_2011,hoffman_no-u-turn_2014, zhou_proximal_2024}.

\subsection{Proximal MCMC}


We first describe an exact enforcement of the constraints before presenting the (smoothed) soft formulation. Let  $\btheta$ collect  all model parameters, $\btau$ the parameters that are subject to constraints and $\bnu$ be the unconstrained parameters such that $\btheta = (\btau', \bnu')'$. In our case, $\btau$ is the set of latent variables $\eta_j$'s which are constrained while $\bnu$ refers to all other parameters. We define the feasible set $\mathcal E\subseteq\mathbb R^d$ as the collection of parameters for which the structural relationships of interest are satisfied. Define the indicator function of $\mathcal{E}$, denoted as $\delta_\mathcal{E}(\bt)$ as
\begin{equation}
\label{eq:indicator}
    \delta_\mathcal{E}(\bt) =  
    \begin{cases}
        0 & \bt \in \mathcal{E},\\
        \infty & \bt \notin \mathcal{E}.
    \end{cases}
\end{equation}
Let $\pi(\btau)$ denote the prior for $\btau$ in the unconstrained setting which assigns positive mass to  $\mathcal E$. The corresponding hard-constrained prior is $\pi(\btau)\exp\left\lbrace-\delta_\mathcal{E}(\btau)\right \rbrace$. Since $ \exp\left\lbrace -\delta_\mathcal{E}(\btau) \right\rbrace =  \mathbf{1}_{\mathcal E}(\btau)$ this expression is simply $\pi(\btau)\mathbf{1}_{\mathcal E}(\btau)$, that is, the prior density restricted to the feasible set $\mathcal E$. Consequently, the prior and posterior assign zero density to parameter values outside $\mathcal E$. Working on the log scale, the constrained posterior of interest becomes
\begin{equation}
\label{eq:logposterior_approximation}
\begin{aligned}
    \log \pi( \btheta \mid \byobs) &\propto \log p(\byobs \mid \btheta ) + \log \pi(\btheta) 
    \propto \log p(\byobs \mid \btheta ) + \log \pi(\btau, \bnu)   - \delta_\mathcal{E}(\btau).
\end{aligned}
\end{equation}



The hard constraints included in the posterior complicate the sampling problem. Specialised algorithms are available for particular constrained targets, including truncated multivariate Gaussian distributions \citep{pakman_exact_2014, botev_normal_2017}, but a general sampler for, e.g., the full hierarchical posterior considered here, is unavailable. Moreover, the indicator introduces a nondifferentiable boundary that prevents direct use of gradient-based MCMC such as Hamiltonian Monte Carlo (HMC) \citep{neal_mcmc_2011}. Modern probabilistic programming languages such as Stan \citep{stan_development_team_stan_2026} can bypass this boundary for simple parameter-dependent bounds by transforming individual coordinates to an unconstrained space. However, these coordinate-wise transformations do not extend to general systems of linear inequalities in which parameters are constrained jointly. Moreover, they still enforce a hard truncation that assigns zero support outside $\mathcal{E}$.

To address these challenges and soften exact enforcement, we  adopt proximal methods \citep{pereyra_proximal_2016, durmus_efficientbayesiancomputationproximal_2016, zhou_proximal_2024}, replacing the indicator $\delta_\mathcal{E}(\btau)$ by a smooth approximation given by its Moreau-Yosida envelope \citep{neal_boyd_proximal_popular_2014,polson_proximal_2015,murphy_probabilistic_2022}. In this way, we quantify departures from the proposed structural relationships through the Euclidean distance to $\mathcal E$. 

Let $g : \Theta \to \mathbb{R} \cup \{+\infty\}$ be a proper, lower semicontinuous, and convex function.  For $\lambda > 0$, its Moreau–Yosida envelope is defined as
\begin{equation}
\label{eq:moreau-yosida-envelope}
    g^\lambda(\bt) = \inf_{\bs} \left\{ g(\bs) + \frac{1}{2\lambda}\|\bs-\bt\|^2 \right\}.
\end{equation}
The quadratic term renders the objective strongly convex, so the minimiser is unique. Moreover, $g^\lambda$ is continuously differentiable even when $g$ is not. The parameter $\lambda$ controls the amount of smoothing, where small values of $\lambda$ give a sharper approximation to $g$, while larger values produce a smoother but less faithful approximation \citep{polson_proximal_2015}.  

In our setting, $g=\delta_{\mathcal E}$, where $\mathcal E$ is the feasible set defined above. For a nonempty, closed, and convex set $\mathcal E$, the minimiser in \eqref{eq:moreau-yosida-envelope} is the Euclidean projection \citep{pereyra_proximal_2016} which we denote by $\mathcal{P}_\mathcal{E}$,
\begin{equation}
\label{eq:projection-operator}
\mathcal{P}_\mathcal{E}(\bt) =  \displaystyle \argmin_{\bs \in \mathcal{E}} \|  \bs - \bt   \|.
\end{equation}
In words, we find the point closest to $\bt$ that lies in the  constrained (allowable) space. The Moreau-Yosida envelope, $\delta_{\mathcal E}^{\lambda}(\cdot)$, of the indicator function $\delta_\mathcal{E}(\cdot)$ is then
\begin{equation}
\label{eq:my-envelope-indicator}
\delta_\mathcal{E}^\lambda(\bt) = \frac{1}{2 \lambda} \|  \bt - \mathcal{P}_\mathcal{E}(\bt)   \|^2,
\end{equation}
which vanishes when $\bt \in \mathcal E$ and increases quadratically with the distance from $\mathcal E$ otherwise. 

\cite{zhou_proximal_2024} propose a smooth, constrained posterior $\pi^\lambda(\btheta\mid\byobs)$ by replacing $\delta_\mathcal{E}$ with its Moreau-Yosida envelope, $\delta^\lambda_\mathcal{E}$, i.e.,
\begin{equation}
\label{eq:smoothed-posterior}
\begin{aligned}
    \log \pi^\lambda( \btheta \mid \byobs) &\propto \log p(\byobs \mid \btheta ) + \log \pi(\btau, \bnu)   - \delta^\lambda_\mathcal{E}(\btau).
\end{aligned}
\end{equation}

If $\bt\in\mathcal E$, then $\mathcal P_{\mathcal E}(\bt)=\bt$, so $\delta_{\mathcal E}^{\lambda}(\bt)=0$ meaning that feasible configurations receive no distance-based adjustment. Outside $\mathcal E$, parameter values remain in the support inherited from the unconstrained posterior but are increasingly penalized with their distance from the feasible set. Importantly, we use the projection $\mathcal{P}_\mathcal{E}(\cdot)$ only to evaluate the penalty $\delta_\mathcal{E}^\lambda(\cdot)$, since the sampled parameter is not replaced by the projection. If the unconstrained posterior is proper and assigns positive probability to $\mathcal E$, the softly constrained posterior is proper as well for every $\lambda>0$ and one can show that it converges in total variation to the hard-constrained posterior as $\lambda \downarrow 0$ \citep{zhou_proximal_2024}.

Soft enforcement therefore treats the relationships represented by $\mathcal E$ as substantive prior information rather than as inviolable properties of the model. This provides a safeguard against misspecification of the constraints, where configurations that violate the proposed relationships are discouraged, but can still receive posterior support when favoured by the likelihood. Moreover, replacing the nondifferentiable indicator with its differentiable Moreau--Yosida envelope makes the constraint component compatible with gradient-based samplers. Finally, we stress that we are not proposing a new MCMC algorithm, but rather applying the proximal MCMC construction to the latent parameters under constraints of the R2 prior \citep{pereyra_proximal_2016,duan_bayesian_constraint_2020, zhou_proximal_2024}. 

\subsection{Computing the projections}
\label{sec:projections}

The principal computation introduced by the softly constrained
prior is the evaluation of $\mathcal P_{\mathcal E}(\cdot)$. Because this projection determines both the Moreau--Yosida envelope and its gradient, it must be computed whenever the log posterior or its gradient is evaluated. Although the projection problem admits a closed-form solution for some constraint sets, in general it must be solved numerically
\citep{chaari_proximal_hmc_2015,guo_proximal_hmc_2024,
shukla_proximal_2025}. We consider both cases below before specializing to the  constraints introduced in Section~\ref{subsec:examples}.

\paragraph{Closed-form projections.}

For many convex sets $\mathcal E$ of practical interest, a closed-form projection exists. For the Euclidean ball $\mathcal E=\{\bs\in\mathbb R^d:\|\bs\|_2\leq r\}$, the projection is the rescaling $\mathcal{P}_{\mathcal{E}}(\bt) = r\,\bt / \max(r, \|\bt\|)$. For the box $\mathcal E= \{\bs \in \mathbb{R}^d \mid \ell_i \le s_i \le u_i\}$, the projection acts coordinate-wise, $[\mathcal{P}_{\mathcal{E}}(\bt)]_i = \max(\ell_i, \min(u_i, t_i))$. Projection onto an $\ell_1$ ball is also possible \citep{murphy_probabilistic_2022}. 

\paragraph{Quadratic programs.} 

When no closed-form expression is available, the projection must be obtained by solving an optimization problem. An important class for the present applications is formed by quadratic programs, which combine a quadratic objective with linear equality or inequality constraints \citep{neal_boyd_proximal_popular_2014}. Given $\bs  \in \mathbb{R}^d$ and a system of $m$ linear inequality constraints, a quadratic program has the form
\begin{equation}
\label{eq:quadratic_program}
\begin{aligned}
\min_{\bs \in \mathbb R^d} \quad 
& \tfrac{1}{2}\bs' \bQ \bs + \bc' \bs \\
\text{subject to} \quad 
& \bG' \bs \le \bh,
\end{aligned}
\end{equation}
where $\bQ \in \mathbb{R}^{d \times d}$ is positive semidefinite, $\bc \in \mathbb{R}^{d}$, $\bG \in \mathbb{R}^{d \times m}$, $\bh \in \mathbb{R}^m$.  When $\bQ$ is positive definite, the objective is strictly convex and the problem admits a unique minimiser \citep{murphy_probabilistic_2022}. 

All constraints described in Section~\ref{subsec:examples} can be expressed in this form.  In particular, if $\mathcal E=\{\bs\in\mathbb R^d:\bG^\prime\bs\leq\bh\}$, then projecting a point $\bt$ onto $\mathcal E$ gives Equation~\eqref{eq:quadratic_program} with $\bQ=\mathbf I_d$ and $\bc=-\bt$ (after dropping the constant $\tfrac12\bt^\prime\bt$). Thus, the projection itself is a strictly convex quadratic program. More specifically, for the interaction heredity constraint in \eqref{eq:latent-heredity}, let $d =p +\tfrac{1}{2}p(p-1)$ denote the total number of main effects and pairwise interactions. There is one constraint for each interaction, so $m=p(p-1)/2=d-p$. Taking $\bh=\mathbf{0}$, the column $\bg_{jk}$ of $\bG$ associated with interaction $(j,k)$ has entries $-1$ in the coordinates of $\eta_j$ and $\eta_k$, an entry $2$ in the coordinate of $\eta_{jk}$, and zeros elsewhere. The corresponding inequality is $\bg_{jk}^\prime\boldsymbol\eta =-\eta_j-\eta_k+2\eta_{jk}\leq0$.

To solve quadratic programs of the form \eqref{eq:quadratic_program}, we make use of Dykstra's algorithm \citep{dykstra_algorithm_1983}. This algorithm applies projections onto an intersection of convex sets $\mathcal{E} = \bigcap_i \mathcal{E}_i$ by iteratively projecting onto each individual set $\mathcal{E}_i$, adding correction terms to guarantee convergence to the true projection. The efficiency of the algorithm rests on the fact that projecting onto a single half-space $\mathcal{E}_i = \{\bs \in \mathbb{R}^d \mid \bg_i'\bs \leq h_i\}$ admits a closed-form solution. 
Given a point $\bt \in \mathbb{R}^d$, the projection is
\begin{equation}
\label{eq:halfspace-projection}
\mathcal{P}_{\mathcal{E}_i}(\bt) = \bt - \frac{\max \{0,\, \bg_i^\prime\bt-h_i \}}{\|\bg_i\|^2}\,\bg_i.
\end{equation}
If $\bt$ already satisfies the constraint $\bg_i'\bt \le 0$, the projection leaves it unchanged; otherwise,  it moves $\bt$ along the perpendicular direction $\bg_i$ to the boundary hyperplane. Since each subproblem reduces to evaluating Equation~\eqref{eq:halfspace-projection}, the algorithm is well suited to high-dimensional settings. The projection steps described above are summarised in Figure \ref{fig:proximal-disk}. We also give the full procedure in Algorithm~\ref{alg:dykstra} in Appendix~\ref{sec:supplement} as well as providing more sampling details and further discussing the difference between projecting the parameters into the constrained space and truncating the MCMC samples.

\paragraph{A two-variable heredity example.}
\label{para:two_var_example}

The interaction model with $p = 2$ has only one constraint, so the distance to the feasible set can be written in closed form, providing useful geometric intuition.  Let $\boldsymbol{\eta} = (\eta_1, \eta_2, \eta_{12})^\prime$ be the latent allocation vector for the two main effects and their interaction respectively. The heredity constraint~\eqref{eq:latent-heredity} reduces to a single inequality $-\eta_1-\eta_2+2\eta_{12} \leq 0$, defining the half-space $\mathcal{E} = \left\lbrace
\bs\in\mathbb R^3:
\bg^\prime\bs\leq0
\right\rbrace $ with $\bg = (-1, -1, 2)^\prime$ and $\|\bg\|^2 = 6$. 
When the constraint is violated ($\boldsymbol{\eta}\notin\mathcal E$), let $v= \bg^\prime\boldsymbol{\eta} =-\eta_1-\eta_2+2\eta_{12}>0$ denote the amount of violation. A single application of \eqref{eq:halfspace-projection} gives the projection
\begin{equation}
\label{eq:projection-p2}
\mathcal{P}_{\mathcal{E}}(\boldsymbol{\eta}) 
= \boldsymbol{\eta} - \frac{\bg'\boldsymbol{\eta}}{\|\bg\|^2}\bg
= \frac{1}{6}
\begin{pmatrix}
    5\eta_1 - \eta_2 + 2\eta_{12} \\
    -\eta_1 + 5\eta_2 + 2\eta_{12} \\
    2\eta_1 + 2\eta_2 + 2\eta_{12}
\end{pmatrix}.
\end{equation}
To reach the nearest point on the constraint boundary, the projection increases each main-effect coordinate symmetrically by $v/6$ while decreasing the interaction coordinate by $v/3$. This yields the projected interaction value of  $(\eta_1+\eta_2+\eta_{12})/3 < \eta_{12}$, which lies on the heredity boundary $\phi_{12}=\sqrt{\phi_1\phi_2}$ on the variance allocation scale.

In the softly constrained model, the projection is used to measure the distance to $\mathcal{E}$ rather than to replace the sampled value of $\boldsymbol{\eta}$. Specifically, the displacement $\boldsymbol{\eta} - \mathcal{P}_{\mathcal{E}}(\boldsymbol{\eta})$ gives the Euclidean distance $\|\boldsymbol{\eta} - \mathcal{P}_{\mathcal{E}}(\boldsymbol{\eta})\| = v/\sqrt{6}$. In this case, the Moreau--Yosida penalty is given by
\begin{equation}
\delta_{\mathcal{E}}^\lambda(\boldsymbol{\eta})
= \frac{\|\boldsymbol{\eta} - \mathcal{P}_{\mathcal{E}}(\boldsymbol{\eta})\|^2}{2\lambda}
= \frac{v^2}{12\lambda}
= \frac{1}{3\lambda}\left\{\log\left(\frac{\phi_{12}}{\sqrt{\phi_1\phi_2}}\right)\right\}^2.
\end{equation}

The penalty therefore grows quadratically with the log-ratio by which the interaction allocation exceeds the geometric mean of the main-effect allocations. During HMC sampling, each leapfrog step outside $\mathcal{E}$ applies the negative gradient $-\nabla\delta_{\mathcal{E}}^\lambda(\boldsymbol{\eta}) = -\lambda^{-1}(\boldsymbol{\eta} - \mathcal{P}_{\mathcal{E}}(\boldsymbol{\eta}))$ \citep{zhou_proximal_2024}, which pulls the interaction coordinate down by $v/(3\lambda)$ and each main-effect coordinate by $v/(6\lambda)$ toward $\mathcal{P}_{\mathcal{E}}(\boldsymbol{\eta})$.

\begin{figure}[t]
\centering
\begin{tikzpicture}[>=Stealth,scale=1,line cap=round]
 
  \fill[cFeas!8] (0,0) circle (2);
  \draw[cPen!45,dashed,thin] (0,0) circle (2.8);
  \draw[cPen!65,dashed,thin] (0,0) circle (2.4);
  \draw[cFeas,thick]         (0,0) circle (2);
 
  \draw[black!40,->] (-2.7,0) -- (3.9,0);
  \draw[black!40,->] (0,-2.7) -- (0,3.0);
  \node[right=1pt] at (3.9,0) {$\eta_1$};
  \node[above=1pt] at (0,3.0) {$\eta_2$};
 
  \begin{scope}[shift={(2.9,1.7)},rotate=-25]
    \foreach \aa/\bb in {0.35/0.2, 0.7/0.4, 1.05/0.6, 1.35/0.78} {
      \draw[cPost,thin] (0,0) ellipse [x radius=\aa cm, y radius=\bb cm];
    }
  \end{scope}
  \fill[cPost] (2.9,1.7) circle (1.4pt);   
  \node[cPost!80!black,font=\small] at (3.1,2.95) {unconstrained posterior};
 
  \draw[cTraj,dashed,semithick,->,shorten >=2.6pt]
    plot[smooth,tension=0.8] coordinates
    {(-0.7,0.1)(0.4,0.9)(1.4,1.4)(2.2,1.25)(2.55,0.6)(2.45,-0.1)(2.05,-0.55)};
 
  \draw[cGrad,thick,->] (2.2,1.25)  -- (1.81,1.03);
  \draw[cGrad,thick,->] (2.55,0.6)  -- (2.11,0.50);
  \draw[cGrad,thick,->] (2.45,-0.1) -- (2.00,-0.08);
  \node[cGrad!85!black,font=\small] at (1.15,0.35) {$-\nabla\delta^{\lambda}_{\mathcal{E}}$};
 
  \fill[cTraj] (-0.7,0.1) circle (2.2pt);
  \node[below=2pt,font=\small] at (-0.7,0.1) {$\boldsymbol{\eta}^{(t)}$};
  \draw[cTraj,fill=white,semithick] (2.05,-0.55) circle (2.2pt);
  \draw[black!55,thin] (2.13,-0.64) -- (2.72,-1.12);
  \node[anchor=west,font=\small,inner sep=2pt] at (2.72,-1.16)
        {proposal $\tilde{\boldsymbol{\eta}}$};
 
  \draw[black!55,thin] (-2.12,-2.22) -- (-1.72,-1.72);
  \node[cPen!80!black,font=\small] at (-2.55,-2.45)
        {smoothed penalty $\delta^{\lambda}_{\mathcal{E}}$};
 
\end{tikzpicture}

\caption{Hamiltonian Monte Carlo under a Moreau--Yosida circular constraint. Gray ellipses show the unconstrained posterior, whose mode lies outside
the feasible disk $\mathcal{E}=\{\boldsymbol{\eta}:\lVert\boldsymbol{\eta}\rVert
\le r\}$. The dashed circles are level sets of the Moreau--Yosida penalty $\delta^{\lambda}_{\mathcal{E}}$, which are concentric with $\mathcal{E}$. Their spacing is set by $\lambda$ (a small $\lambda$ shows closer level sets, whereas a large $\lambda$ shows spaced levels). The Hamiltonian trajectory leaves the feasible region and is redirected by the gradient of the Moreau–Yosida penalty $-\nabla\delta^{\lambda}_{\mathcal{E}}=-\lambda^{-1}(\boldsymbol{\eta} -\mathcal{P}_{\mathcal{E}}(\boldsymbol{\eta}))$ applied at every step (purple arrows). The proposal $\tilde{\boldsymbol{\eta}}$ may end slightly outside due to soft enforcement. }
\label{fig:proximal-disk}
\end{figure}

\subsection{Prior calibration of the smoothing parameter $\lambda$}

The value of $\lambda$ is typically specified to control the degree to which the constraints are enforced \citep{shukla_proximal_2025}. This leads practitioners to treat $\lambda$ as a tuning parameter that determines how probability mass is shifted from the unconstrained model to the softly constrained one. Smaller values will shift  mass toward configurations that satisfy the constraints more closely, whereas larger values allow to place more mass on departures (i.e the unconstrained model). We consider that the influence of $\lambda$ can additionally be studied in terms of the effect it has in model itself. To do this we propose quantities that are easier to interpret. We showcase our approach for interactions models.

We can specify $\lambda$ by measuring the largest local departure from heredity at the allocation level, which we denote by $R_{\max} = \max_{j<k}\frac{\phi_{jk}}{\sqrt{\phi_j\phi_k}}.$
The event $R_{\max}\leq1$ is equivalent to satisfying every heredity constraint, whereas $R_{\max}>r$ means that at least one interaction allocation exceeds the geometric mean of its main effects allocations by a factor greater than $r$. Therefore, for a fixed value of $\lambda$, we can use the probability $P(R_{\max}>r)$ to express how strongly the constraint is enforced.
\begin{figure}[t]
\includegraphics[width=\linewidth]{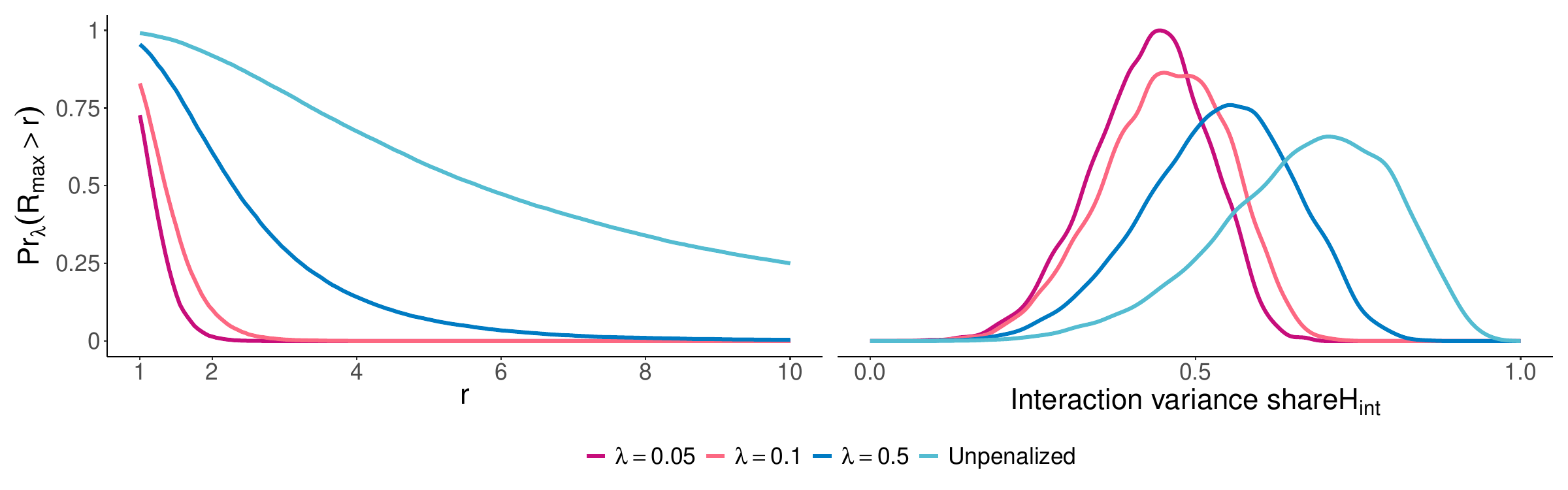}
\caption{Prior implications of the smoothing parameter for an interaction model with $p=5$, $\boldsymbol\mu_\eta=\boldsymbol 0$, and $\boldsymbol\Sigma_\eta=I$. The left panel shows the probability that the largest interaction-to-main-effects allocation ratio exceeds $r$. Panel shows the prior distribution of the interaction share of linear-predictor variance. }
   \label{fig:lambda-interaction-calibration}
\end{figure}

We can also consider the total allocation of variance to the interaction coefficients, namely $H_{\mathrm{int}} = \sum_{j<k}\phi_{jk}$. While $R_{\max}$ measures the largest local violation, the distribution of $H_{\mathrm{int}}$ shows how $\lambda$ shifts prior mass across different levels of total interaction allocation. As $\lambda$ decreases, large multiplicative violations become less probable and the distribution of $H_{\mathrm{int}}$ moves towards smaller values. Under an unconstrained exchangeable prior, we have $\mathbb{E}(H_{\mathrm{int}})= \frac{p-1}{p+1}$. This provides us with a reference point for how the value we specify for $\lambda$ moves the mass that is assigned to all two-way interactions. 

We recommend to evaluate candidate values of $\lambda$ by direct simulation from the prior using the approach of Section~\ref{sec:comp} \citep{gelman_bayesian_2020}. For each candidate, we plot the survival curve of $R_{\max}$ against the frequency and magnitude of heredity violations, while the distribution of $H_{\mathrm{int}}$ describes the corresponding allocation of prior variance. We exemplify this in Figure~\ref{fig:lambda-interaction-calibration}. In practice, one can specify a meaningful violation factor $r_0>1$ and an acceptable probability $\alpha_0$, and  retain values satisfying $P(R_{\max}>r_0)\leq\alpha_0$. The distribution of $H_{\mathrm{int}}$ can then be used to assess the implied balance between main effects and interactions. While smaller values of $\lambda$ enforce the constraints more, higher computational costs and lower quality sampling are bound to appear. Therefore, diagnostics and computation time should be compared across the retained candidates.
\section{Theoretical results}\label{sec:theory}

In this section, we prove various theoretical properties of the proposed $\Rtdprox$ prior including deriving the marginal prior distribution, showing posterior consistency and proving posterior contraction around the true model. All proofs are given in Appendix~\ref{sec:supplement}.

\subsection{Marginal prior distributions}

Shrinkage priors are designed to simultaneously preserve large signals and suppress small ones, and this behaviour is primarily studied through the marginal distributions of the regression coefficients $\beta_i$ \citep{castillo_bayesian_2015,pas_uncertainty_2017,ghosh_asymptotic_2017}. Ideally, the marginal density exhibits heavy tails, so that large coefficients escape shrinkage, and a spike at the origin, so that small coefficients are aggressively shrunk. Characterizing these two features of the marginal distribution is central to understanding whether a shrinkage prior behaves as intended and to understand the effect of hyperparameter choices \citep{pas_conditions_2016}. 

Without loss of generality, we set $\sigma^2 = 1$ and $\rho_i = 1$. 
Under the hierarchy $\beta_i \mid \phi_i, \omega^2 \sim \mathsf{Normal}(0, \phi_i \omega^2)$, $\omega^2 \sim \mathsf{BetaPrime}(a, b)$, and $\phi_i \in (0,1)$ independent of $\omega^2$, integrating over the allocation and global scale gives the marginal density
\begin{equation}
\label{eq:marginal}
p(\beta_i) = \int_0^\infty  \int_0^1 p(\beta_i \mid \phi_i, \omega^2)\, 
    p(\phi_i)\, p(\omega^2)\, d\phi_i\, d\omega^2,
\end{equation}
Characterizing $p(\beta_i)$ is challenging since our procedure renders the marginal density $p(\phi_i)$ not available in closed form. Although the prior on the latent vector $\boldsymbol\eta$ is specified explicitly, the (smoothed) constraints act jointly on its coordinates. Moreover, mapping to the simplex followed by marginalization does not yield an explicit density for $\phi_i$. Our objective, however, is not to derive this density directly, but to determine how different classes of allocation distributions for $\phi_i$ affect the tail and origin behaviour of $p(\beta_i)$. We consider allocations with polynomial boundary behaviour, such as the beta marginals induced by Dirichlet priors, and the constrained logistic-normal allocation used here. To this end, a useful representation is
\begin{equation}
\label{eq:beta-product}
\beta_i=\sqrt{\phi_i}\,\gamma_i, \qquad \gamma_i\mid\omega^2\sim\mathsf{Normal}(0,\omega^2),
\end{equation}
where $\gamma_i$ is independent of $\phi_i$. Thus, the beta-prime scale first determines the shape of the density of $\gamma_i$, and the allocation then randomly rescales it. We use this representation to characterise $p(\beta_i)$, first through its behaviour as $|\beta_i| \to \infty$ (tail behaviour), and then as $|\beta_i| \to 0$ (origin behaviour).
\begin{proposition}
\label{prop:tail}
Let $\omega^2 \sim \mathsf{BetaPrime}(a,b)$ with $a>0$ and $b>0$.  Let $\beta_i \mid \phi_i, \omega^2 \sim \mathsf{Normal}(0, \phi_i \omega^2)$  and $\phi_i \sim p(\phi_i)$ where $\phi_i \in (0,1]$ is a continuous 
random variable independent of $\omega^2$. Then, as $|s| \to \infty$,
\[
p_{\beta_i}(s) \sim C_{a,b}\, \mathbb{E}[\phi_i^{b}]\, |s|^{-(2b+1)},
\]
where $C_{a,b} > 0$.
\end{proposition}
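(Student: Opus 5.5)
We use the product representation \eqref{eq:beta-product}, $\beta_i=\sqrt{\phi_i}\,\gamma_i$, with $\gamma_i$ independent of $\phi_i$. The argument has three steps: find the exact tail of the density of $\gamma_i$, transfer it through the random scale $\sqrt{\phi_i}$, and justify the limit interchange. Conditioning on $\phi_i$ gives
\[
p_{\beta_i}(s)=\int_0^1 \phi_i^{-1/2}\, p_{\gamma}\!\left(s/\sqrt{\phi_i}\right) p(\phi_i)\,d\phi_i ,
\]
where $p_\gamma$ is the normal scale mixture $p_\gamma(t)=\int_0^\infty (2\pi\omega^2)^{-1/2}e^{-t^2/(2\omega^2)}p(\omega^2)\,d\omega^2$.

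\textbf{Step 1: tail of $p_\gamma$.} Write the beta-prime density as $p(\omega^2)=B(a,b)^{-1}(\omega^2)^{a-1}(1+\omega^2)^{-a-b}$, so that $p(\omega^2)\sim B(a,b)^{-1}(\omega^2)^{-b-1}$ as $\omega^2\to\infty$. Substitute $\omega^2=t^2 u$ to get
\[
p_\gamma(t)=|t|^{-(2b+1)}\int_0^\infty (2\pi u)^{-1/2}e^{-1/(2u)}\, t^{2b+2}p(t^2u)\,du .
\]
As $|t|\to\infty$, the factor $t^{2b+2}p(t^2u)$ converges pointwise to $B(a,b)^{-1}u^{-b-1}$. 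A dominating function is obtained from $p(x)\le C x^{-b-1}$ for $x\ge1$ and $p(x)\le C x^{a-1}$ for $x<1$. The small-$u$ region is harmless because $e^{-1/(2u)}$ kills any polynomial factor. The large-$u$ region is integrable since the bound there is $u^{-1/2}u^{-b-1}$. Dominated convergence then gives $p_\gamma(t)\sim C_{a,b}|t|^{-(2b+1)}$ with
\[
C_{a,b}=B(a,b)^{-1}\int_0^\infty(2\pi)^{-1/2}u^{-b-3/2}e^{-1/(2u)}\,du=\frac{2^{b}\Gamma(b+1/2)}{\sqrt{2\pi}\,B(a,b)}>0 .
\]

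\textbf{Step 2: transfer through $\phi_i$.} Substituting the asymptotics formally,
\[
\phi_i^{-1/2}p_\gamma(s/\sqrt{\phi_i})\approx C_{a,b}\,\phi_i^{-1/2}\,\phi_i^{b+1/2}|s|^{-(2b+1)}=C_{a,b}\,\phi_i^{b}|s|^{-(2b+1)} .
\]
Integrating against $p(\phi_i)$ produces the factor $\mathbb{E}[\phi_i^b]$, which is finite because $\phi_i\le1$. The claim is therefore $|s|^{2b+1}p_{\beta_i}(s)\to C_{a,b}\mathbb{E}[\phi_i^b]$, and this requires dominated convergence in $\phi_i$.

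\textbf{Step 3 (main obstacle): uniform bound.} We need a bound on $|s|^{2b+1}\phi_i^{-1/2}p_\gamma(s/\sqrt{\phi_i})$ that is uniform in $s$ and integrable in $\phi_i$. Step 1 together with boundedness of $p_\gamma$ away from the origin gives $p_\gamma(t)\le K|t|^{-(2b+1)}$ for $|t|\ge1$. Since $|s|/\sqrt{\phi_i}\ge|s|\ge1$ for $\phi_i\le1$, the integrand is bounded by $K\phi_i^{b}\le K$. This bound holds uniformly in $s$ on $|s|\ge1$ and is integrable with respect to $p(\phi_i)$ on $(0,1]$.

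The region $\phi_i\to0$ is the only potential trouble spot. There the argument $s/\sqrt{\phi_i}$ is even further in the tail, so the bound still applies and no moment condition beyond $\phi_i\le1$ is needed. Pointwise convergence for each fixed $\phi_i\in(0,1]$ follows from Step 1. Dominated convergence then completes the proof.

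The care is concentrated in constructing the dominating functions in Steps 1 and 3, in particular checking behaviour near $\omega^2\to0$ when $a<1$. This is handled by the factor $e^{-1/(2u)}$, which beats any power $u^{a-1}$.
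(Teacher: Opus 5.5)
Your proposal is correct and is essentially the paper's argument: a change of variables plus dominated convergence gives the exact tail of the beta-prime normal mixture $p_\gamma$, and a second dominated-convergence step in $\phi_i$ uses a bound proportional to $\phi_i^{b}\le 1$. The paper obtains that bound from the exact inequality $r^{2b+1}p_\gamma(r)\le C_{a,b}$ for all $r>0$. Your representation yields the same inequality through $p(x)\le B(a,b)^{-1}x^{-b-1}$ for all $x>0$, which would also remove the $t$-dependence from your Step~1 dominating function on $\{t^2u<1\}$.

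One arithmetic slip: the $u$-integral equals $2^{b}\Gamma(b+1/2)/\sqrt{\pi}$, so $C_{a,b}=2^{b}\Gamma(b+1/2)/\{\sqrt{\pi}\,B(a,b)\}$ rather than having $\sqrt{2\pi}$ in the denominator. This is harmless here, since the statement only asserts $C_{a,b}>0$.
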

Proposition~\ref{prop:tail} shows that the marginal density of $\beta_i$ has polynomial tails with decay exponent $2b+1$, controlled entirely by the second parameter $b$ of the beta-prime distribution. Smaller values of $b$ produce heavier tails and allow large signals to escape shrinkage more easily. The allocation prior, and hence the structural constraints, affects only the coefficient-specific multiplicative constant through $\mathbb E(\phi_i^b)$; it can change the relative tail weights across coefficients without changing their rate of decay.

We now consider the behaviour near the origin, which depends on how much mass the local allocation scale places near the origin. We first study allocations whose marginal densities behave like a power of $\phi_i$ near the boundary \citep{barndorff-nielsen_normal_1982}. This class includes the beta marginals induced by Dirichlet distributions and provides a direct comparison with existing R2 priors, revealing a competition between the global and local scales. The logistic-normal allocation we use does not have polynomial boundary behaviour because no finite power exponent describes its behaviour as $\phi_i \downarrow 0$ (Appendix~\ref{sec:supplement}). We therefore analyse it separately through a negative-moment condition. We show that, although the logistic-normal allocation does not itself generate a polynomial spike at zero, it does not alter the spike produced by the beta-prime global scale. Consequently, when ($a< 1/2$), the  marginal density of $\beta_i$ remains unbounded at the origin.

We first isolate the contribution of the global scale. The following lemma provides the origin asymptotic and shows how negative moments are involved when the allocation scale dominates.
\begin{lemma}
\label{lem:global-origin}
Let $\gamma\mid\omega^2\sim\mathsf{Normal}(0,\omega^2)$ and $\omega^2\sim\mathsf{BetaPrime}(a,b)$, where $a,b>0$. If $0<a<1/2$, then $p_\gamma(s)\sim A_{a,b}|s|^{2a-1}$, where $A_{a,b} > 0$. Moreover, if $0<c<\min(a,1/2)$, then $\mathbb E|\gamma|^{-2c} < \infty$.
\end{lemma}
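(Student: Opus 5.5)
We write the marginal density of $\gamma$ explicitly as a mixture and analyse it as $s\to 0$. With the beta-prime density $p(\omega^2)=\{B(a,b)\}^{-1}(\omega^2)^{a-1}(1+\omega^2)^{-a-b}$ we have
\[
p_\gamma(s)=\frac{1}{B(a,b)\sqrt{2\pi}}\int_0^\infty v^{a-3/2}(1+v)^{-a-b}\exp\{-s^2/(2v)\}\,dv .
\]
Since $a<1/2$, the factor $v^{a-3/2}$ is not integrable at $v=0$. The Gaussian kernel is what keeps the integral finite, and the mass responsible for the blow-up sits at $v$ of order $s^2$. So we substitute $v=s^2u$, which gives
\[
p_\gamma(s)=\frac{|s|^{2a-1}}{B(a,b)\sqrt{2\pi}}\int_0^\infty u^{a-3/2}(1+s^2u)^{-a-b}e^{-1/(2u)}\,du .
\]
As $s\to0$ the integrand converges pointwise to $u^{a-3/2}e^{-1/(2u)}$. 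It is also dominated by this same function, because $(1+s^2u)^{-a-b}\le 1$. The dominating function is integrable on $(0,\infty)$:
\begin{itemize}
\item near $u=0$, the exponential factor kills the singularity;
\item as $u\to\infty$, it behaves like $u^{a-3/2}$, which is integrable exactly because $a<1/2$.
\end{itemize}
Dominated convergence then gives $p_\gamma(s)\sim A_{a,b}|s|^{2a-1}$. The substitution $t=1/(2u)$ evaluates the constant as
\[
A_{a,b}=\frac{2^{1/2-a}\,\Gamma(1/2-a)}{B(a,b)\sqrt{2\pi}}>0 .
\]

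For the negative moment, the cleanest route is conditioning on $\omega^2$ rather than using the density asymptotics. Conditionally, $\gamma=\omega Z$ with $Z\sim\mathsf{Normal}(0,1)$ independent of $\omega$. Hence
\[
\mathbb E|\gamma|^{-2c}=\mathbb E|Z|^{-2c}\,\mathbb E(\omega^2)^{-c}.
\]
The first factor is finite since $2c<1$; it equals $2^{-c}\Gamma(1/2-c)/\Gamma(1/2)$. The second factor is a beta-prime moment, $\mathbb E(\omega^2)^{-c}=B(a-c,b+c)/B(a,b)$, which is finite precisely when $c<a$. The combined condition $c<\min(a,1/2)$ therefore suffices.

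As a consistency check, the same conclusion follows from the first part. Near the origin the integrand behaves like $|s|^{-2c}|s|^{2a-1}$, integrable iff $c<a$. In the tails it decays, since $p_\gamma$ is bounded there.

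The only delicate step is justifying the domination in the first part, namely integrability at $u\to\infty$. This is where the hypothesis $a<1/2$ enters. For $a\ge 1/2$ the rescaled integral diverges, and the leading behaviour changes: a logarithmic blow-up at $a=1/2$, and a finite limit for $a>1/2$. Everything else is routine Gamma/Beta function bookkeeping.
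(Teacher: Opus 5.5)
Your proof is correct and is essentially the paper's argument. The paper substitutes $u=s^2/(2t)$, the reciprocal of your rescaling up to a factor $2$, so its dominating function $e^{-u}u^{-a-1/2}$ is integrable at $0$ exactly when yours is integrable at $\infty$; it then factorises $\mathbb{E}|\gamma|^{-2c}=\mathbb{E}\{(\omega^2)^{-c}\}\,\mathbb{E}|Z|^{-2c}$ with the same Beta and Gamma evaluations, and your constant $2^{1/2-a}\Gamma(1/2-a)/\{B(a,b)\sqrt{2\pi}\}$ equals the paper's $A_{a,b}=2^{-a}\Gamma(1/2-a)/\{\sqrt{\pi}\,B(a,b)\}$.
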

Lemma~\ref{lem:global-origin} shows that the beta-prime scale mixture alone has an unbounded density at the origin when $a<1/2$. The next proposition shows how an allocation with polynomial boundary behaviour modifies this exponent.
\begin{proposition}
\label{prop:origin}
Suppose the assumptions of Proposition~\ref{prop:tail} hold and that $\phi_i$ is absolutely continuous with density satisfying $f_{\phi_i}(\phi) \sim C\phi^{\alpha - 1}$ as $\phi \downarrow 0$ for some $C>0, \alpha > 0$. Then as $s \to 0$:
\begin{enumerate}
    \item If $0 < a < \alpha$ and $a < 1/2$, then 
        $p_{\beta_i}(s) \sim A_{a,b}\, \mathbb{E}[\phi_i^{-a}]\, |s|^{2a-1}$.
    \item If $0 < \alpha < a$ and $\alpha < 1/2$, then 
        $p_{\beta_i}(s) \sim K_{a,b,\alpha}\, |s|^{2\alpha - 1}$.
    \item If $\alpha = a < 1/2$, then 
        $p_{\beta_i}(s) \sim D\, |s|^{2a-1} \log(1/|s|)$.
\end{enumerate}
Here  $A_{a,b}, K_{a,b,\alpha},D>0$ are constants.
\end{proposition}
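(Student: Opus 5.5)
We use the product representation \eqref{eq:beta-product}, $\beta_i=\sqrt{\phi_i}\,\gamma_i$ with $\gamma_i$ independent of $\phi_i$. Conditioning on $\phi_i$ gives
\[
p_{\beta_i}(s)=\int_0^1 \phi^{-1/2}\,p_\gamma\!\left(s\phi^{-1/2}\right) f_{\phi_i}(\phi)\,d\phi .
\]
Two asymptotics are available. Lemma~\ref{lem:global-origin} gives $p_\gamma(u)\sim A_{a,b}|u|^{2a-1}$ as $u\to0$. For large arguments, the argument behind Proposition~\ref{prop:tail} gives $p_\gamma(u)\sim C_{a,b}|u|^{-(2b+1)}$ as $|u|\to\infty$. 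Also, $p_\gamma$ is continuous on $(0,\infty)$.

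Split the integral at $\phi=s^2$. On $\{\phi>s^2\}$ the argument $u=s\phi^{-1/2}$ lies in $(0,1)$. On $\{\phi<s^2\}$ it exceeds $1$. It helps to substitute $\phi=s^2t$, which gives
\[
p_{\beta_i}(s)=|s|\int_0^{s^{-2}} t^{-1/2}\,p_\gamma(t^{-1/2})\,f_{\phi_i}(s^2t)\,dt .
\]
Define $g(u):=p_\gamma(u)/|u|^{2a-1}$, which tends to $A_{a,b}$ as $u\to0$. The integrand then becomes $t^{-1/2}\,t^{-(2a-1)/2}g(t^{-1/2})\,f_{\phi_i}(s^2t)=t^{-a}g(t^{-1/2})f_{\phi_i}(s^2t)$.

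\textbf{Case 1 ($a<\alpha$).} Work in the $\phi$-form. The integrand is $\phi^{-1/2}|s\phi^{-1/2}|^{2a-1}g(s\phi^{-1/2})f(\phi)=|s|^{2a-1}\phi^{-a}g(s\phi^{-1/2})f(\phi)$. For fixed $\phi$, $g(s\phi^{-1/2})\to A_{a,b}$ as $s\to0$. Moreover $\mathbb E[\phi_i^{-a}]<\infty$, because $f\sim C\phi^{\alpha-1}$ near $0$ and $a<\alpha$. Dominated convergence would then yield $|s|^{2a-1}A_{a,b}\,\mathbb E[\phi_i^{-a}]$.

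The dominating function is where the care goes. $g$ is bounded on $(0,1]$. On $(1,\infty)$, $g(u)\le C u^{-(2b+1)-(2a-1)}=Cu^{-2a-2b}$, which is bounded as well, since $g$ is continuous and decays at infinity. Hence $g$ is bounded on $(0,\infty)$, and $\phi^{-a}f(\phi)\sup g$ is an integrable dominator.

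\textbf{Case 2 ($\alpha<a$).} Now $\mathbb E[\phi^{-a}]=\infty$, so the mass near $\phi\approx s^2$ dominates. Use the $t$-form. For fixed $t$, $f(s^2t)\sim C s^{2\alpha-2}t^{\alpha-1}$ as $s\to0$, so
\[
p_{\beta_i}(s)\approx C|s|^{2\alpha-1}\int_0^\infty t^{\alpha-1}\,t^{-1/2}p_\gamma(t^{-1/2})\,dt .
\]
Substituting $u=t^{-1/2}$ turns this limit integral into $2\int_0^\infty u^{-2\alpha}p_\gamma(u)\,du$, up to a constant, which equals $\mathbb E|\gamma|^{-2\alpha}$. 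That moment is finite by Lemma~\ref{lem:global-origin}, since $\alpha<\min(a,1/2)$. This gives $K_{a,b,\alpha}=C\,\mathbb E|\gamma|^{-2\alpha}$.

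Justifying the limit is the main obstacle: we need a Potter-type domination of $f(s^2t)/(s^{2\alpha-2})$ uniformly on $t\le s^{-2}$. Near the origin of $\phi$ we have $f(\phi)\le C'\phi^{\alpha-1}$. On $\phi\in[\epsilon,1]$ the region is $t\ge\epsilon s^{-2}$. There $t^{-a}g$ makes the contribution $O(|s|\cdot s^{2a-2})\cdot\int f=O(|s|^{2a-1})$, which is $o(|s|^{2\alpha-1})$ because $a>\alpha$. I would handle this by splitting the integral at $\phi=\epsilon$ and requiring $f$ to be integrable, which it is as a density.

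\textbf{Case 3 ($\alpha=a$).} Both mechanisms are borderline. In the $t$-form the integrand behaves like $C A_{a,b}\,s^{2a-2}\,t^{-a}t^{a-1}=CA\,s^{2a-2}t^{-1}$ over the range $1\lesssim t\lesssim s^{-2}$. That integral is $\log(s^{-2})=2\log(1/|s|)$. So $p_{\beta_i}(s)\sim 2CA_{a,b}|s|^{2a-1}\log(1/|s|)$, i.e. $D=2CA_{a,b}$.

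The remainders come from three regions. The region $t\le1$ gives $O(|s|^{2a-1})$ by the finiteness argument of Case 2. For $\phi>\epsilon$ the bound is as in Case 2. In the middle range, the errors $g\to A$ and $f/\phi^{\alpha-1}\to C$ contribute $o(\log(1/|s|))$, by a standard $\epsilon$-argument on logarithmic integrals.
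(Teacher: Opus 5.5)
Your route is essentially the paper's. Case 1 is the same dominated-convergence argument, with your $g$ equal to the paper's $q(r)=r^{1-2a}p_\gamma(r)$; the paper gets $q\le A_{a,b}$ directly from \eqref{eq:gamma-origin-exact} rather than from continuity plus tail decay. Your substitution $\phi=s^2t$ is the paper's $r=s/\sqrt{\phi}$ with $r=t^{-1/2}$. It gives the same dominator from $f_{\phi_i}(\phi)\le C'\phi^{\alpha-1}$ near zero, the same constant $K_{a,b,\alpha}=C\,\mathbb E|\gamma|^{-2\alpha}$, and the same $D=2CA_{a,b}$.

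One step needs repair. Case 2 assumes only $\alpha<a$ and $\alpha<1/2$, so $a\ge 1/2$ is allowed. Your bound on the region $\phi\in[\epsilon,1]$, however, uses boundedness of $g(u)=|u|^{1-2a}p_\gamma(u)$. That boundedness rests on Lemma~\ref{lem:global-origin}, and hence on $a<1/2$.

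For $a>1/2$ the density $p_\gamma$ is finite and positive at the origin, so $g(u)\to\infty$ as $u\to 0$. The contribution from $[\epsilon,1]$ then tends to $p_\gamma(0)\int_\epsilon^1\phi^{-1/2}f_{\phi_i}(\phi)\,d\phi$. This is of order one, not $O(|s|^{2a-1})$, so the estimate as you wrote it is false there.

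What you actually need, $o(|s|^{2\alpha-1})$, still holds because $\alpha<1/2$. The paper gets it uniformly in $a$: fix $\alpha<c<\min(a,1/2)$ and use $p_\gamma(x)=O(x^{2c-1})$ as $x\to 0$, which is the bound \eqref{eq:gamma-origin-upper} and is valid for every $a>0$. This gives $O(|s|^{2c-1})=o(|s|^{2\alpha-1})$ on that region.

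A smaller point concerns Case 3. There the region $t\le 1$ is controlled by the tail integral $\int_1^\infty u^{-2a}p_\gamma(u)\,du<\infty$. It is not controlled by $\mathbb E|\gamma|^{-2\alpha}$, which is infinite when $\alpha=a$. Your appeal to ``the finiteness argument of Case 2'' should be read in that restricted sense.
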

Proposition~\ref{prop:origin} covers, for example, Dirichlet allocations because their marginals are beta distributed; thereby recovering the results of \cite{zhang_bayesian_2022} and \cite{aguilar_intuitive_2023}. In particular, the minimum of $a$ and $\alpha$ determines the polynomial exponent. When $a = \alpha$, the two mechanisms introduce an additional logarithmic factor.

This proposition is a comparison result for allocation densities with polynomial behaviour near the boundary, but it does not apply to the logistic-normal distribution, whose boundary behaviour cannot be represented by a finite value of $\alpha$ (Appendix~\ref{sec:supplement}). Instead, the relevant property is that all negative moments of a logistic-normal coordinate are finite. This ensures that the allocation does not alter the origin behaviour generated by the beta-prime global scale. We make this statement precise in the next two results.
\begin{lemma}
\label{lem:neg-moments}
Let $\phi \sim \mathsf{LN}^\lambda_\mathcal{E}(\boldsymbol{\mu}_\eta, \boldsymbol{\Sigma}_{\boldsymbol{\eta}})  $. Then for every $c>0$,  $\mathbb{E}\left[\phi_i^{-c}\right]<\infty$. 
\end{lemma}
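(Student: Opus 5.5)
The plan is to reduce the negative moment of $\phi_i$ to an exponential moment of linear combinations of $\boldsymbol\eta$, and then control the latter by domination of the smoothed density by a Gaussian. Since
$\phi_i=\exp(\eta_i)/\sum_{r=1}^d\exp(\eta_r)$, we have
\[
\phi_i^{-1}=\sum_{r=1}^d \exp(\eta_r-\eta_i)\le d\,\max_{r}\exp(\eta_r-\eta_i),
\]
so that
\[
\phi_i^{-c}\le d^{c}\max_r \exp\{c(\eta_r-\eta_i)\}\le d^{c}\sum_{r=1}^d \exp\{c(\eta_r-\eta_i)\}.
\]
It therefore suffices to show that $\mathbb E\exp\{c(\eta_r-\eta_i)\}<\infty$ for each $r$ under the smoothed law of $\boldsymbol\eta$.

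Next, I will write the density of $\boldsymbol\eta$ under $\mathsf{Normal}^\lambda_{\mathcal E}$ as
\[
\pi^\lambda(\boldsymbol\eta)=Z_\lambda^{-1}\,\varphi(\boldsymbol\eta;\boldsymbol\mu_\eta,\boldsymbol\Sigma_\eta)\exp\{-\delta^\lambda_{\mathcal E}(\boldsymbol\eta)\},
\]
where $Z_\lambda=\int \varphi\, e^{-\delta^\lambda_{\mathcal E}}$. Because $\delta^\lambda_{\mathcal E}\ge0$, the numerator is bounded pointwise by the Gaussian density $\varphi$. For the denominator, the Gaussian assigns positive mass to $\mathcal E$ (the standing assumption that $\mathcal E$ is nonempty, closed and convex with $\pi(\mathcal E)>0$), and $\delta^\lambda_{\mathcal E}=0$ on $\mathcal E$. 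This gives $Z_\lambda\ge \int_{\mathcal E}\varphi>0$, with $Z_\lambda\le1$ as well. Consequently $\pi^\lambda\le Z_\lambda^{-1}\varphi$, and
\[
\mathbb E^\lambda \exp\{c(\eta_r-\eta_i)\}\le Z_\lambda^{-1}\,\mathbb E_{\varphi}\exp\{c(\eta_r-\eta_i)\}.
\]
Under the Gaussian, $\eta_r-\eta_i$ is univariate normal with mean $\mu_r-\mu_i$ and variance $v=\Sigma_{rr}+\Sigma_{ii}-2\Sigma_{ri}$. Its moment generating function $\exp\{c(\mu_r-\mu_i)+c^2v/2\}$ is therefore finite for every $c>0$. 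Summing over $r$ completes the argument; the $r=i$ term is trivially $1$.

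The main obstacle is the normalising constant, which requires $Z_\lambda>0$. This needs either $\mathcal E$ to carry positive Gaussian mass or, more weakly, only $Z_\lambda>0$, which always holds because $e^{-\delta^\lambda_{\mathcal E}}>0$ everywhere when $\mathcal E$ is nonempty. I will use the latter observation, so that no positivity-of-mass assumption is really needed: $Z_\lambda$ is the integral of a strictly positive function. This also covers a degenerate $\boldsymbol\Sigma_\eta$, with $\varphi$ read as the Gaussian law rather than a density, since the domination argument works at the level of measures with Radon–Nikodym derivative $Z_\lambda^{-1}e^{-\delta^\lambda_{\mathcal E}}\le Z_\lambda^{-1}$.
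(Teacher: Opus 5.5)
Your proposal is correct and follows essentially the same route as the paper. You both use the softmax bound $\phi_i^{-c}\le d^{c}\sum_{r}e^{c(\eta_r-\eta_i)}$, the finiteness of the Gaussian moment generating function of each contrast $\eta_r-\eta_i$, and the transfer to the smoothed law through $0<w_\lambda\le 1$ and $Z_\lambda>0$. Your explicit justification of $Z_\lambda>0$ (the weight $e^{-\delta^\lambda_{\mathcal E}}$ is strictly positive whenever $\mathcal E$ is nonempty) slightly tightens a step the paper only asserts.
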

The following result shows that the global-scale's influence on shrinkage is preserved.
\begin{proposition}
\label{prop:origin-ln}
Under the setup of Proposition~\ref{prop:tail} and~\ref{lem:neg-moments}, suppose that $0<a<1/2$ and $\mathbb{E}\left[\phi_i^{-a}\right]<\infty$. Then, as $s\to0$,
\begin{equation}
    p_{\beta_i}(s) \sim A_{a,b}\, \mathbb{E}\left[\phi_i^{-a}\right] |s|^{2a-1}.
\end{equation}
\end{proposition}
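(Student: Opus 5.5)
We use the product representation \eqref{eq:beta-product}, $\beta_i=\sqrt{\phi_i}\,\gamma_i$, with $\gamma_i$ independent of $\phi_i$. Conditioning on $\phi_i$ gives the mixture formula
\[
p_{\beta_i}(s)=\mathbb E\!\left[\phi_i^{-1/2}\,p_\gamma\!\left(s\,\phi_i^{-1/2}\right)\right],
\]
where $p_\gamma$ is the beta-prime normal scale mixture of Lemma~\ref{lem:global-origin}.

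\textbf{Pointwise limit.} By Lemma~\ref{lem:global-origin}, for fixed $\phi\in(0,1]$ we have $p_\gamma(s\phi^{-1/2})\sim A_{a,b}|s|^{2a-1}\phi^{-(2a-1)/2}$ as $s\to0$. Dividing by $|s|^{2a-1}$ therefore gives
\[
\frac{\phi^{-1/2}\,p_\gamma(s\phi^{-1/2})}{|s|^{2a-1}}\;\longrightarrow\;A_{a,b}\,\phi^{-1/2}\phi^{1/2-a}=A_{a,b}\,\phi^{-a}.
\]
The proposition then follows once the limit can be taken inside the expectation, by dominated convergence.

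\textbf{Domination.} This is the main step. We need a bound $p_\gamma(u)\le M|u|^{2a-1}$ valid for all $u\neq0$, not only near the origin.
\begin{itemize}
\item Near zero, the bound holds by the asymptotic in Lemma~\ref{lem:global-origin}.
\item For $|u|\ge1$, we have $p_\gamma(u)\le p_\gamma(0^+)$-type boundedness away from the origin. Indeed $p_\gamma$ is continuous on $(0,\infty)$ and decreasing in $|u|$, being a normal scale mixture. Since $2a-1<0$, the factor $|u|^{2a-1}$ decays only polynomially, with exponent $1-2a<1$.
\item By Proposition~\ref{prop:tail} with $\phi\equiv1$, $p_\gamma(u)=O(|u|^{-(2b+1)})$. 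But $|u|^{-(2b+1)}\le|u|^{2a-1}$ for $|u|\ge1$ only because $2b+1>1-2a$, which always holds.
\end{itemize}
Continuity on compact subsets of $(0,\infty)$ then yields a global constant $M$.

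\textbf{Conclusion.} With this bound, the integrand is dominated by
\[
\phi^{-1/2}M|s|^{2a-1}\phi^{1/2-a}\,/\,|s|^{2a-1}=M\phi^{-a},
\]
which is integrable by the hypothesis $\mathbb E[\phi_i^{-a}]<\infty$. For the constrained logistic-normal case this moment is finite by Lemma~\ref{lem:neg-moments}. Dominated convergence gives $p_{\beta_i}(s)/|s|^{2a-1}\to A_{a,b}\,\mathbb E[\phi_i^{-a}]$.

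A possible subtlety is verifying the global domination constant cleanly. If monotonicity of $p_\gamma$ is not taken for granted, we would instead bound $p_\gamma(u)\le\int(2\pi\omega^2)^{-1/2}p(\omega^2)\,d\omega^2$ restricted appropriately, or split $\mathbb E$ over $\{\phi\le s^2\}$ and $\{\phi>s^2\}$. On $\{\phi>s^2\}$ we have $|u|<1$ and the origin bound applies. On $\{\phi\le s^2\}$ we have $|u|\ge1$, so $p_\gamma(u)\le M'$, and the contribution is at most $M'\,\mathbb E[\phi^{-1/2}\mathbf 1\{\phi\le s^2\}]\le M'|s|^{2a-1}\,\mathbb E[\phi^{-a}\mathbf 1\{\phi\le s^2\}]=o(|s|^{2a-1})$, using $\phi^{-1/2}=\phi^{-a}\phi^{a-1/2}\le\phi^{-a}|s|^{2a-1}$ there.
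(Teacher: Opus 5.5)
Your main argument is correct and is essentially the paper's proof. You condition on $\phi_i$, identify the pointwise limit $A_{a,b}\phi^{-a}$, and apply dominated convergence with a dominating function proportional to $\phi^{-a}$, which is integrable by hypothesis.

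The only difference is how you obtain the global bound $|u|^{1-2a}p_\gamma(u)\le M$ for all $u\neq0$. The paper reads it off in one line from the exact representation \eqref{eq:gamma-origin-exact}. Since $(1+u^2/(2v))^{-(a+b)}\le 1$, that representation gives $|u|^{1-2a}p_\gamma(u)\le A_{a,b}$ for every $u$, with the sharp constant and no appeal to the tail result. Your patchwork also works: the origin asymptotic near $0$, continuity on a compact middle range, and tail decay for $|u|\ge1$.

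Note, though, that the third bullet is what carries the argument for large $|u|$. Mere boundedness of $p_\gamma$ there is useless, because $|u|^{2a-1}\to0$. Also, $p_\gamma(0^+)=+\infty$ when $a<1/2$, so the phrase ``$p_\gamma(0^+)$-type boundedness'' is misleading. Finally, invoking Proposition~\ref{prop:tail} with $\phi\equiv1$ is formally outside its continuous-$\phi_i$ hypothesis. Cite instead the bound $|u|^{2b+1}p_\gamma(u)\le C_{a,b}$, which follows directly from \eqref{eq:gamma-tail-exact}.

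Your fallback in the last paragraph is wrong as written.
\begin{itemize}
\item On $\{\phi\le s^2\}$ the exponent $a-1/2$ is negative, so $\phi^{a-1/2}\ge |s|^{2a-1}$, not $\le$. Your inequality $\phi^{-1/2}\le\phi^{-a}|s|^{2a-1}$ is therefore reversed.
\item Using only $p_\gamma\le M'$ on that set leaves the term $M'\,\mathbb{E}[\phi^{-1/2}\mathbf 1\{\phi\le s^2\}]$. This can be infinite under the sole hypothesis $\mathbb{E}[\phi_i^{-a}]<\infty$; take, for example, a density behaving like $\phi^{\alpha-1}$ near $0$ with $a<\alpha\le 1/2$.
\item The split only works if you use the decaying bound $p_\gamma(u)\le M|u|^{2a-1}$ for $|u|\ge1$. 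That gives a contribution of at most $M|s|^{2a-1}\,\mathbb{E}[\phi^{-a}\mathbf 1\{\phi\le s^2\}]=o(|s|^{2a-1})$, but at that point the split is unnecessary.
\end{itemize}
Drop that paragraph; the main route stands on its own.
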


Combining Lemma~\ref{lem:neg-moments} with Propositions~\ref{prop:tail} and~\ref{prop:origin-ln} yields the desired characterization of the marginal prior actually used in this paper.

\begin{corollary}{}
\label{cor:marginal-ln}
Let $\bphi \sim \mathsf{LN}^\lambda_\mathcal{E}(\boldsymbol{\mu}_\eta, \boldsymbol{\Sigma}_{\boldsymbol{\eta}}) $ and let $\omega^2\sim\mathsf{BetaPrime}(a,b)$. Then,
\begin{equation}
p_{\beta_i}(s)\sim C_{a,b}\,\mathbb E[\phi_i^{\,b}]\,|s|^{-(2b+1)} \quad (|s|\to\infty).
\end{equation}
When $0 < a < \tfrac{1}{2}$, 
\begin{equation}
\qquad p_{\beta_i}(s)\sim A_{a,b}\,\mathbb E[\phi_i^{-a}]\,|s|^{2a-1} \quad (s\to0).
\end{equation}
In particular, the marginal prior is unbounded at the origin with a polynomial rate.
\end{corollary}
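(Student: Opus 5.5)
The corollary follows by checking that the softly constrained logistic-normal allocation meets the hypotheses of Propositions~\ref{prop:tail} and~\ref{prop:origin-ln}, and then reading off their conclusions.

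\emph{Hypotheses of Proposition~\ref{prop:tail}.} Under $\mathsf{LN}^\lambda_{\mathcal E}(\boldsymbol\mu_\eta,\boldsymbol\Sigma_\eta)$ the latent vector $\boldsymbol\eta$ has density proportional to $\mathsf{Normal}(\boldsymbol\eta;\boldsymbol\mu_\eta,\boldsymbol\Sigma_\eta)\exp\{-\delta^\lambda_{\mathcal E}(\boldsymbol\eta)\}$. This density is absolutely continuous and proper, because the penalty factor is bounded by one. The coordinate $\phi_i=\exp(\eta_i)/\sum_r\exp(\eta_r)$ is a smooth, non-constant function of $\boldsymbol\eta$ and takes values in $(0,1)$ almost surely. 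It is therefore a continuous random variable on $(0,1]$. It is independent of $\omega^2$ by the construction in \eqref{eq:proposed-prior}. Since $\phi_i\le 1$, we have $\mathbb E[\phi_i^b]\le 1$, and the expectation is strictly positive because $\phi_i>0$. Proposition~\ref{prop:tail} then gives the tail statement directly.

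\emph{Hypotheses of Proposition~\ref{prop:origin-ln}.} Here the only extra condition is $\mathbb E[\phi_i^{-a}]<\infty$ with $0<a<1/2$. This is exactly Lemma~\ref{lem:neg-moments} with $c=a$. Proposition~\ref{prop:origin-ln} then yields $p_{\beta_i}(s)\sim A_{a,b}\mathbb E[\phi_i^{-a}]|s|^{2a-1}$ as $s\to0$. Because $2a-1<0$, $A_{a,b}>0$ by Lemma~\ref{lem:global-origin}, and $\mathbb E[\phi_i^{-a}]\ge 1>0$, the marginal density diverges polynomially at the origin. This proves the final claim.

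The only step with real content is the verification of Lemma~\ref{lem:neg-moments}, which we take as given. If it had to be justified, we would argue as follows. Write $\phi_i^{-c}=(\sum_r e^{\eta_r-\eta_i})^{c}\le d^{c}\max_r e^{c(\eta_r-\eta_i)}\le d^{c}\sum_r e^{c(\eta_r-\eta_i)}$. Since the penalty factor is at most one, each term's expectation is bounded by the normalising-constant ratio times a Gaussian moment generating function, and that is finite. Everything else is bookkeeping: confirm that the constants are positive and that independence from $\omega^2$ holds.
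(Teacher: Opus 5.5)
Your proposal is correct and follows the paper's proof. Both verify $0<\phi_i\le 1$ and independence from $\omega^2$ to invoke Proposition~\ref{prop:tail}, then take $c=a$ in Lemma~\ref{lem:neg-moments} to invoke Proposition~\ref{prop:origin-ln}; your sketch of the negative-moment bound is the paper's own argument via $\phi_i^{-c}\le d^{c}\sum_r e^{c(\eta_r-\eta_i)}$, the Gaussian moment generating function, and the weight bound $w_\lambda\le 1$ divided by $Z_\lambda>0$. Note that ``smooth and non-constant'' does not by itself imply a continuous law, but this is harmless because the proof of Proposition~\ref{prop:tail} only uses $0<\phi_i\le 1$ and independence, and your use of $2a-1<0$ for the divergence is cleaner than the paper's phrasing ``$a\le 1/2$''.
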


Taken together, these results show that the tail and origin behaviour of $p(\beta_i)$ are governed by different parts of the hierarchy, and that neither is compromised by the constraints. The hyperparameter $b$ exclusively controls the tail decay, while, for the logistic-normal prior, the spike at the origin is produced entirely by the global scale through $a<1/2$. Although the (smooth) logistic-normal does not itself generate a polynomial spike near $\phi_i = 0$, its finite negative moments ensure that we can still preserve the spike generated by the beta-prime global scale.  In the absence of constraints, these results also characterise the origin and tail behaviour of the logistic-normal R2 prior proposed by \citet{aguilar_generalized_2025}. 
This approach differs from global--local constructions in which the local scales can generate the singularity at zero \citep{pas_conditions_2016,bai_beta_2021}. 

We can use these results to translate prior beliefs about sparsity and signal magnitude into choices of $(a,b)$. A value $a<1/2$ combined with a relatively small $b$ favours many negligible coefficients while retaining heavy tails for potentially large signals, whereas weaker concentration near zero and lighter tails may be more appropriate for a denser collection of moderate effects. Candidate choices can be compared through their induced shrinkage factors and effective model sizes, which summarise the strength and extent of shrinkage in more interpretable terms \citep{piironen_sparsity_2017,aguilar2025dependencyawareshrinkagepriorshigh}. The allocation distribution and structural constraints then determine where the available prior variance is concentrated, such as between main effects and interactions.

\subsection{Posterior consistency}

We have shown that the constrained model still preserves the desired marginal shrinkage behaviour. We now establish strong posterior consistency for the smoothed constrained prior as the number of terms $d_n$ may grow with the sample size. Our result ensures asymptotic concentration around the true coefficient vector. We also show that our prior is robust to structural model misspecification, understood here as disagreement between the pattern favoured by the prior and the true signal. For instance, heredity, may be a plausible prior principle even when the true coefficient vector contains interactions whose parent effects are weak or even absent. 

We describe the growing-dimensional setting through a triangular-array version of \eqref{eq:resp}. That is, each sample size $n$ represents its own regression problem, in which the response vector, design matrix, and the dimension of the coefficient vector may change with $n$ \citep{Petrov_sums_1975,ghosh_bayesian_2003}. We also let the number of constraints vary with $n$, implying that the smoothing parameter $\lambda_n$ is free to change as well. In the following, we denote the unconstrained, softly constrained, and hard-constrained prior distributions by $\Pi$, $\Pi^\lambda$, and $\Pi_{\mathcal E}$, respectively, with corresponding densities $\pi_0$, $\pi^\lambda$, and $\pi_{\mathcal E}$. Their induced posterior distributions are denoted by $\Pi(,\cdot\mid\byobs)$, $\Pi^\lambda(,\cdot\mid\byobs)$, and $\Pi_{\mathcal E}(,\cdot\mid\byobs)$, with the same convention used for their densities.

Let
\begin{equation}
\label{eq:model-array}
    \byobs_n = \bZ_n\bbeta^*_n+\bvarepsilon_n,
    \qquad
    \bvarepsilon_n\sim\mathsf{Normal}(0,\sigma^2 I_n),
\end{equation}
where $\bZ_n\in\mathbb R^{n\times d_n}$ collects the (possibly transformed) model terms and $\sigma^2$ is known. The true coefficient vector is a sequence $\bbeta^*=\{\bbeta_n^*\}_{n\geq1}$ with $\bbeta_n^*\in\mathbb R^{d_n}$. Let $\Pr_{\bbeta^*}$ denote the joint sampling law of the data sequence $\{\byobs_n\}_{n\geq1}$. For each $n$, let $\Pi_n$ denote the marginal prior on $\bbeta_n$ induced by the hierarchy of Section~\ref{sec:comp}, i.e.,
\begin{equation}
\label{eq:beta-marginal-prior}
    \beta_{nj}\mid\bphi,\omega^2 \sim \mathsf{Normal}(0,\sigma^2\phi_j\omega^2/\rho_{nj}),     \qquad 
    \omega^2\sim\mathsf{BetaPrime}(a,b), \qquad     \bphi=\operatorname{softmax}(\boldsymbol\eta),
\end{equation}
 where $\rho_{nj}=n^{-1}\sum_{i=1}^n z_{ij}^2$. Throughout this subsection, $\boldsymbol\eta$ follows the smoothed prior $\Pi^{\lambda_n}$ for some $\lambda_n>0$. The sequence $\lambda_n$ is otherwise unrestricted. 

To establish posterior consistency, we need  the prior to retain enough mass in a shrinking neighbourhood of $\bbeta_n^*$ as the dimension grows \citep{armagan_posterior_2013}. We work under the following assumptions.
 
\begin{assumption}[Dimension]\label{asmp:dimension}
The dimension satisfies $d_n\log n = o(n)$.
\end{assumption}

\begin{assumption}[Design]\label{asmp:design}
Let $\Lambda_{n\min}$ and $\Lambda_{n\max}$ denote the smallest and largest singular values of $\bZ_n$, respectively. There exist $0<\Lambda_{\min}\leq\Lambda_{\max}<\infty$ such that  $\Lambda_{\min}<\liminf_n \Lambda_{n\min}/\sqrt n \le \limsup_n \Lambda_{n\max}/\sqrt n<\Lambda_{\max}$.
\end{assumption}

\begin{assumption}[Bounded truth]\label{asmp:truth}
There exists a constant $M_\beta<\infty$ such that $\sup_{n}\max_{j\leq d_n}|\beta^*_{nj}|\le M_\beta<\infty$.
\end{assumption}

\begin{assumption}[Hyperparameters]\label{asmp:hyper}
The hyperparameters $a,b>0$ are fixed. Moreover, there exists a constant $M_\mu<\infty$ such that $\sup_n\|\boldsymbol\mu_{\eta,n}\|_\infty\leq M_\mu<\infty$ and the eigenvalues of $\Sigma_{\eta,n}$ lie in $[\sigma_L^2,\sigma_U^2]$ for constants $0<\sigma_L^2\le\sigma_U^2<\infty$.
\end{assumption}

\begin{assumption}[Uniform strict feasibility]\label{asmp:feasibility}
There exist $M_\eta<\infty$, independent of $n$, such that for every $n$, there exists a point $\boldeta^*_n\in\mathbb{R}^{d_n}$ satisfying $\|\boldsymbol\eta^*_n\|_\infty\le M_\eta$ such that $\bg_{i}^\prime\boldsymbol\eta^*_n\le-\tfrac{1}{2}\|\bg_{i}\|_1$ for every constraint $i=1,\dots,m_n$, where $m_n$ is the number of constraints.
\end{assumption}

Assumptions~\ref{asmp:dimension}--\ref{asmp:truth} are standard in growing-dimensional consistency arguments \citep{armagan_posterior_2013}. The design condition~\ref{asmp:design} rules out asymptotically collinear or explosive designs and implies $\rho_{nj}\in[\Lambda_{\min}^2,\Lambda_{\max}^2]$ for every $j$ and all sufficiently large $n$, because the $\rho_{nj}$ are diagonal entries of $\bZ_n^\prime\bZ_n/n$. Together with the bounds on the truth and the hyperparameters, these conditions keep the coefficient and latent-allocation at a sufficiently stable size as the dimension grows. We link the regularity conditions to the geometry of the constraint set using uniform strict feasibility. The margin considered implies that $\boldsymbol\eta_n^*$ lies entirely in the feasible set. Thus, the penalty is zero for every $\lambda$ independently of $n$. We impose this condition on the allocation geometry and therefore require neither $\bbeta_n^*$ to be sparse nor force it to follow the structure favoured by the prior.

\begin{theorem}[Strong posterior consistency]
\label{thm:consistency}
Suppose Assumptions~\ref{asmp:dimension}--\ref{asmp:feasibility} hold, and let $\Pi_n$ be the marginal prior on $\bbeta_n$ defined by \eqref{eq:beta-marginal-prior}. If $\boldsymbol\eta$ has the smoothed prior $\Pi^{\lambda_n}$ for an arbitrary sequence $\lambda_n\in(0,\infty)$, then for every $\varepsilon>0$,
\begin{equation}
\Pi_n\left( \bbeta_n: \|\bbeta_n-\bbeta_n^*\|_2 >\varepsilon \,\middle|\, \byobs_n \right) 
 = \int_{R^{d_n}}
 \mathbf{1}\left\{
    \|\boldsymbol t-\bbeta_n^*\|_2>\varepsilon
\right\}
  \pi_n\left(\boldsymbol t\mid\byobs_n\right) \,d\boldsymbol t 
\longrightarrow 0,
\end{equation}
$\Pr_{\bbeta_n^*}$-almost surely as $n\to\infty$.
\end{theorem}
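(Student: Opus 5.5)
The plan is to use the standard sufficient condition of \citet{armagan_posterior_2013}, Theorem~1. In the linear model \eqref{eq:model-array} with known $\sigma^2$, and under Assumptions~\ref{asmp:dimension} and~\ref{asmp:design}, strong posterior consistency follows once the prior puts enough mass near the truth. Concretely, for every $\Delta>0$ and all sufficiently large $n$, we need
\[
\Pi_n\bigl(\bbeta_n:\|\bbeta_n-\bbeta_n^*\|_2<\Delta/n^{\rho/2}\bigr)>\exp(-d\,n)
\]
for all $0<d<\Delta^2/(8\sigma^2\Lambda_{\max}^2)$ and some $\rho\in(0,1)$; equivalently, the prior mass of the $\Delta n^{-\rho/2}$-ball is at least $e^{-o(n)}$. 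Since the posterior is taken under the marginal prior $\Pi_n$ on $\bbeta_n$, the whole task reduces to lower bounding this prior probability. The consistency argument itself (test construction and control of the denominator) is inherited unchanged.

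To lower bound the small-ball mass, I will condition on the latent variables and localise them. Take the event
\[
\mathcal A_n=\{\|\boldsymbol\eta-\boldsymbol\eta_n^*\|_\infty\le 1/2,\ \omega^2\in[1,2]\},
\]
where $\boldsymbol\eta_n^*$ is the point from Assumption~\ref{asmp:feasibility}. On $\mathcal A_n$ we have $\bg_i'\boldsymbol\eta\le \bg_i'\boldsymbol\eta_n^*+\tfrac12\|\bg_i\|_1\le0$ for every $i$, so $\boldsymbol\eta\in\mathcal E$ and the penalty $\delta^{\lambda_n}_{\mathcal E}(\boldsymbol\eta)=0$, whatever $\lambda_n$ is. The smoothed prior density satisfies $\pi^{\lambda_n}(\boldsymbol\eta)=\pi_0(\boldsymbol\eta)e^{-\delta^{\lambda_n}_{\mathcal E}(\boldsymbol\eta)}/Z_{\lambda_n}$ with $Z_{\lambda_n}\le1$. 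Hence on $\mathcal A_n$ it dominates the unconstrained Gaussian density $\pi_0$, and $\Pi^{\lambda_n}(\mathcal A_n)\ge \Pi_0(\|\boldsymbol\eta-\boldsymbol\eta_n^*\|_\infty\le1/2)$. Using Assumption~\ref{asmp:hyper}, the Gaussian density on this cube is at least
\[
(2\pi\sigma_U^2)^{-d_n/2}\exp\{-d_n(M_\eta+M_\mu+1/2)^2/(2\sigma_L^2)\}.
\]
Multiplying by the cube's volume $1$ gives $\Pi^{\lambda_n}(\mathcal A_n)\ge e^{-C_1 d_n}$, and the factor for $\omega^2\in[1,2]$ is a fixed positive constant.

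On $\mathcal A_n$, softmax gives $\phi_j\asymp 1/d_n$: each ratio $\phi_j/\phi_k$ lies in $[e^{-2M_\eta-1},e^{2M_\eta+1}]$. Also $\rho_{nj}\in[\Lambda_{\min}^2,\Lambda_{\max}^2]$, so each conditional prior variance $v_j=\sigma^2\phi_j\omega^2/\rho_{nj}$ lies in $[c_1/d_n,c_2/d_n]$. Conditionally the $\beta_{nj}$ are independent Gaussians, and $\{\|\bbeta_n-\bbeta_n^*\|_2<\Delta n^{-\rho/2}\}$ contains the box $\{|\beta_{nj}-\beta^*_{nj}|<\Delta n^{-\rho/2}/\sqrt{d_n}\ \forall j\}$. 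Each coordinate has probability at least
\[
(2\Delta n^{-\rho/2}d_n^{-1/2})\,(2\pi c_2/d_n)^{-1/2}\exp\{-(M_\beta+1)^2 d_n/(2c_1)\}.
\]

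This last exponential factor, coming from the bounded truth (Assumption~\ref{asmp:truth}), is the step I expect to be the main obstacle. With variances of order $1/d_n$, a coefficient of size $M_\beta$ costs a factor of roughly $e^{-cd_n}$, so the product over $d_n$ coordinates gives $e^{-c d_n^2}$. That is not $e^{-o(n)}$ under Assumption~\ref{asmp:dimension} alone. My first fix is to localise $\omega^2$ at a larger value, $\omega^2\in[d_n,2d_n]$, so that the variances become order one. The beta-prime tail gives $\Pr(\omega^2\in[d_n,2d_n])\gtrsim d_n^{-b}=e^{-b\log d_n}$, which is negligible. Each coordinate probability is then at least $C\,n^{-\rho/2}d_n^{-1/2}e^{-C'}$, so the product is at least $\exp\{-C''d_n\log n\}$.

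Combining the three factors gives
\[
\Pi_n(\text{ball})\ge \exp\{-C_1d_n-b\log d_n-C''d_n\log n\}=e^{-o(n)}
\]
by Assumption~\ref{asmp:dimension}, uniformly over $\lambda_n$. Invoking the Armagan et al.\ theorem then yields the almost sure convergence. The proof never uses sparsity of $\bbeta_n^*$ or any agreement between $\bbeta_n^*$ and the structure $\mathcal E$; it only uses the feasible latent point $\boldsymbol\eta_n^*$, which is where the robustness to misspecification comes from.
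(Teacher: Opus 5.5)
Your proposal is correct and follows the paper's proof essentially step for step. It uses the Armagan et al.\ prior-mass criterion and a coordinatewise box inside the Euclidean ball, then localises $\boldsymbol\eta$ on the unit-side cube around the strictly feasible point $\boldsymbol\eta_n^*$, where the penalty vanishes and $Z_{\lambda_n}\le 1$ transfers the Gaussian lower bound to $\Pi^{\lambda_n}$; localising $\omega^2\in[d_n,2d_n]$ makes the conditional variances order one, giving $-\log\Pi_n(\text{ball})\le C d_n\log n=o(n)$ uniformly in $\lambda_n$. The remaining differences are cosmetic: the $\omega^2\in[1,2]$ detour you rightly discard, a generic $\rho$ instead of $\rho=1$, and misquoted constants in the criterion, which do not matter because your bound holds for every fixed $\Delta$.
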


The theorem provides the desired consistency guarantee while clarifying the effect of structural model misspecification. Since the constraints are encoded in the prior on $\bphi$, the posterior of $\bbeta$ can still concentrate around the true coefficient vector. This happens even when the prior favours coefficient configurations that disagree with the specified structure, and remains true whether $\lambda_n$ shrinks, grows, or stays fixed, provided that it is positive.

The proof follows the consistency criterion of \cite{armagan_posterior_2013}, which requires at least $e^{-cn}$ mass in a shrinking Euclidean neighbourhood of the truth. The criterion allows the model dimension to satisfy $d_n=o(n)$. Here we verify a stronger quantitative bound, specifically, for some $\rho>0$ and every $\Delta>0$, we show that
\begin{equation}
-\log\Pi_n\left\{ \bbeta_n: \|\bbeta_n-\bbeta_n^*\|_2 < \Delta n^{-\rho/2} \right\} \leq C_\Delta d_n\log n  = o(n).
\end{equation}
The proof obtains this bound by restricting $\boldsymbol\eta$ to the feasible box supplied by Assumption~\ref{asmp:feasibility}, where every softmax coordinate is of order $d_n^{-1}$. Restricting the global scale $\omega^2$ to $[d_n,2d_n]$ offsets this decay and keeps each conditional coefficient variance bounded above and away from zero. The resulting Gaussian probabilities can then be bounded uniformly over a coordinatewise box of radius $\Delta/\sqrt{nd_n}$ around $\bbeta_n^*$, which is contained in the Euclidean ball of radius $\Delta n^{-1/2}$. Since the Moreau--Yosida penalty vanishes throughout the chosen feasible box, our calculations are unaffected by the rate at which $\lambda_n$ varies.

In shrinkage-prior theory, sparse recovery rates are commonly indexed by $q_n=\#\{j:\beta_{nj}^*\neq0\}$, the number of true nonzero coefficients \citep{castillo_needles_2012,pas_horseshoe_2014,pas_adaptive_2017,zhang_bayesian_2022}. Our  Theorem~\ref{thm:consistency} imposes no condition on $q_n$ and therefore covers both dense sequences, for which $q_n\asymp d_n$, and sparse sequences, for which $q_n\ll d_n$. Both cases remain subject to $d_n\log n=o(n)$ and the full-rank design condition, placing regimes with $d_n\geq n$ outside the scope of the result. Different joint priors for $\bphi$, particularly those that place more mass near sparse corners of the simplex, may support contraction rates indexed by $q_n$ \citep{zhang_bayesian_2022}. Establishing such adaptation under structural constraints remains a direction for future research.

\subsection{Posterior contraction}

We have established that the posterior eventually concentrates around the true coefficient vector, but it remains to show the rate at which this occurs. We strengthen the consistency result by deriving a posterior contraction rate in the same growing-dimensional regime. Let
\begin{equation}
\label{eq:rate}
    \varepsilon_n= \left( \frac{d_n\log n}{n} \right)^{1/2}.
\end{equation}
Under Assumption~\ref{asmp:dimension} we have $\varepsilon_n\to 0$ and $n\varepsilon_n^2=d_n\log n\to\infty$. We study both the Euclidean loss $\|\bbeta_n-\bbeta_n^*\|_2$ and the prediction norm $\|\bbeta_n-\bbeta_n^*\|_{n} \coloneqq n^{-1/2} \| \bZ_n(\bbeta_n-\bbeta_n^*) \|_2$.

\begin{theorem}[Posterior contraction]
\label{thm:contraction}
Suppose Assumptions~\ref{asmp:dimension}--\ref{asmp:feasibility} hold and $\sigma^2$ is known. Let $\Pi_n$ be the marginal prior \eqref{eq:beta-marginal-prior} on $\bbeta_n$ with $\boldsymbol\eta$ distributed according the smoothed prior $\Pi^{\lambda_n}$ for an arbitrary sequence $\lambda_n\in(0,\infty)$. Then there exists a sufficiently large constant $M<\infty$, such that
\begin{equation}
\Pi_n \left( \bbeta_n:\|\bbeta_n -\bbeta^*_n\|_2>M\varepsilon_n\;\middle|\;\byobs_n\right)
\longrightarrow 0,
\end{equation}
in $\Pr_{\bbeta_n^*}$ probability as $n\to\infty$. Likewise, changing $M$ if needed,
\begin{equation}
\Pi_n \left(\bbeta_n:\|\bbeta_n-\bbeta^*_n\|_n>M\varepsilon_n\;\middle|\;\byobs_n\right)
\longrightarrow 0,
\end{equation}
in the same sense.
\end{theorem}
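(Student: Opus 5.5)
Because $\sigma^2$ is known and the model is Gaussian linear regression with a fixed design, I plan to use the general posterior contraction theorem of Ghosal, Ghosh and van der Vaart in the form for independent non-identically distributed observations. With the rate $\varepsilon_n=(d_n\log n/n)^{1/2}$ from \eqref{eq:rate}, I will verify three conditions:
\begin{enumerate}
\item a prior mass condition $\Pi_n\{B_n(\bbeta_n^*,\varepsilon_n)\}\ge e^{-Cn\varepsilon_n^2}$ on a Kullback--Leibler-type neighbourhood;
\item a sieve $\mathcal F_n$ with $\Pi_n(\mathcal F_n^c)\le e^{-(C+4)n\varepsilon_n^2}$;
\item tests of $\bbeta_n^*$ against $\{\bbeta_n\in\mathcal F_n:\|\bbeta_n-\bbeta_n^*\|_n>M\varepsilon_n\}$ whose errors are bounded by $e^{-Kn M^2\varepsilon_n^2}$.
\end{enumerate}
In this model both the Kullback--Leibler divergence and its variance reduce to multiples of $n\|\bbeta_n-\bbeta_n^*\|_n^2/\sigma^2$. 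Assumption~\ref{asmp:design} gives the equivalence $\Lambda_{\min}\|\cdot\|_2\le\|\cdot\|_n\le\Lambda_{\max}\|\cdot\|_2$ for large $n$. The two claimed statements are therefore equivalent up to changing $M$, so I will prove the prediction-norm statement and then transfer it.

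For the prior mass condition I will reuse the construction behind Theorem~\ref{thm:consistency}:
\begin{enumerate}
\item Restrict $\boldeta$ to an $\ell_\infty$ box around $\boldeta_n^*$. By Assumption~\ref{asmp:feasibility}, a box of half-width $1/2$ lies inside $\mathcal E$, because $|\bg_i'(\boldeta-\boldeta_n^*)|\le\tfrac12\|\bg_i\|_1$. Hence the Moreau--Yosida penalty vanishes there for every $\lambda_n$.
\item On this box, Assumption~\ref{asmp:hyper} bounds the Gaussian density of $\boldeta$ below by $e^{-Cd_n}$ times the normalising ratio. I must check that the smoothed normalising constant is at most one relative to the unsmoothed one, which holds since the penalty is nonnegative.
\item The box has Lebesgue volume $1$, so its prior probability is at least $e^{-Cd_n}$.
\item Each $\phi_j$ is then of order $d_n^{-1}$. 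Restricting $\omega^2\in[d_n,2d_n]$ costs only polynomially in $d_n$ under the beta-prime law, that is $e^{-C\log d_n}$.
\item Conditionally, the $\beta_{nj}$ are independent Gaussians with variances bounded above and below. By Assumption~\ref{asmp:truth}, each coordinate has probability at least $c\,\varepsilon_n/\sqrt{d_n}$ of falling within $\varepsilon_n/\sqrt{d_n}$ of $\beta^*_{nj}$.
\end{enumerate}
Multiplying these bounds gives $-\log\Pi_n\{\|\bbeta_n-\bbeta_n^*\|_2<\varepsilon_n\}\lesssim d_n\log(d_n/\varepsilon_n)\lesssim d_n\log n=n\varepsilon_n^2$, using $d_n\le n$.

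For the sieve I take $\mathcal F_n=\{\|\bbeta_n\|_\infty\le n^{\kappa}\}$ with $\kappa$ large. Tail control requires $\Pr(|\beta_{nj}|>n^\kappa)$ summed over $j$ to be smaller than $e^{-(C+4)d_n\log n}$. This is the step I expect to be the main obstacle. By Proposition~\ref{prop:tail} and Corollary~\ref{cor:marginal-ln}, the marginals have only polynomial tails $|s|^{-2b}$, so a union bound gives only $d_n n^{-2b\kappa}$, which is not exponentially small in $d_n\log n$.

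I will first try to avoid the sieve by using the exact Gaussian structure. The likelihood-ratio (least-squares) test with errors $e^{-cnM^2\varepsilon_n^2}$ works uniformly over all of $\mathbb R^{d_n}$, because the alternative is a linear model with full-rank design. The argument runs as follows:
\begin{enumerate}
\item Let $\hat\bbeta_n$ be the least-squares estimator. It satisfies $\|\hat\bbeta_n-\bbeta_n^*\|_n^2\sim\sigma^2\chi^2_{d_n}/n$.
\item Use the test $\phi_n=\mathbf 1\{\|\hat\bbeta_n-\bbeta_n^*\|_n>M\varepsilon_n/2\}$. Its type I error is at most $e^{-cM^2 d_n\log n}$ by chi-square concentration.
\item For every $\bbeta_n$ with $\|\bbeta_n-\bbeta_n^*\|_n>M\varepsilon_n$, the type II error is bounded by the same quantity. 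This bound is uniform, so no entropy or sieve is needed.
\end{enumerate}
With uniformly consistent tests of exponential power and the prior mass bound, the standard argument (the evidence lower bound $\int p_{\bbeta}/p_{\bbeta^*}\,d\Pi_n\ge e^{-(C+2)n\varepsilon_n^2}$ with probability tending to one) yields the prediction-norm contraction for $M$ large. The Euclidean statement then follows from Assumption~\ref{asmp:design}.
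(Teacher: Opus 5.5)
Your argument is correct. Its prior-mass step is the paper's: your box for $\boldeta$, the restriction $\omega^2\in[d_n,2d_n]$ and the coordinatewise Gaussian bound are exactly Lemma~\ref{lem:prior-mass}. The paper proves that bound on the smaller ball of radius $n^{-1/2}$ and then uses containment in the ball of radius $(\sigma/\Lambda_{\max})\varepsilon_n$.

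Where you differ is on the testing side. The paper applies Theorem~1 of \citet{ghosal_convergence_2007} with $\Theta_n=\mathbb R^{d_n}$. It uses likelihood-ratio tests against single points $\bbeta_1$, a volumetric local-entropy bound $d_n\log 17\le n\varepsilon_n^2$, and the trivial shell-ratio bound. Because that theorem is stated for multipliers $M_n\to\infty$, the paper then has to re-enter its proof (bound (8.6)) to extract a fixed $M$.

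You instead use one global test built from the least-squares estimator. Its type I and type II errors are controlled by the same $\chi^2_{d_n}$ tail, uniformly over the whole alternative $\{\|\bbeta_n-\bbeta_n^*\|_n>M\varepsilon_n\}$. Combined with the evidence lower bound, this gives $\mathbb E_{\bbeta_n^*}\Pi_n(\|\bbeta_n-\bbeta_n^*\|_n>M\varepsilon_n\mid\byobs_n)\lesssim e^{-cM^2n\varepsilon_n^2}+e^{(C+2-cM^2)n\varepsilon_n^2}+o(1)$. A fixed $M$ therefore falls out directly, and no entropy or sieve is needed. You are right that the sieve $\{\|\bbeta_n\|_\infty\le n^\kappa\}$ would fail under polynomial tails; both routes avoid it by working on all of $\mathbb R^{d_n}$.

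Your route is shorter and more elementary. It relies on the exact Gaussian linear structure and the full column rank of $\bZ_n$, which make $\|\hat{\bbeta}_n-\bbeta\|_n^2\sim\sigma^2\chi^2_{d_n}/n$ under every $\bbeta$. The paper's route is more modular and would carry over to likelihoods without a convenient global estimator-based test, at the cost of the proof inspection needed for a fixed $M$.

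One detail to state explicitly: the $\chi^2$ bound $e^{-cM^2 d_n\log n}$ requires the threshold $M^2 d_n\log n/(4\sigma^2)$ to dominate $d_n$. This holds for all large $n$ because $\log n\to\infty$.
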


Theorem~\ref{thm:contraction} establishes that the posterior concentrates within Euclidean and prediction neighbourhoods whose radii are proportional to $\varepsilon_n$. Our proof adapts Theorem~1 of \citet{ghosal_convergence_2007} to the present sequence of Gaussian regression models. To apply their Theorem, we verify the relevant conditions and additionally prove the prior-specific requirement that a suitable neighbourhood of $\bbeta_n^*$ receives sufficient prior mass. Finally, $M$ can be chosen independently of the smoothing sequence $\lambda_n$.

The theorem does not impose conditions on the positive smoothing sequence $\lambda_n$. This is, however, an asymptotic result and different values of $\lambda_n$ can still enforce the constraints with different strength, leading to different computational behaviour in finite samples. The result also covers models without structural preferences, i.e  taking $\mathcal E_n=\mathbb R^{d_n}$, making the Moreau--Yosida penalty zero; thus rendering the strict-feasibility assumption vacuous. Under the remaining assumptions, Theorem~\ref{thm:contraction} gives the same rate for the unconstrained logistic-normal R2 prior, hence extending the theoretical characterization of the prior proposed by \citet{aguilar_generalized_2025}.

To place this rate in context, the original R2D2 prior achieves a near-minimax rate of order ${q_n\log(d_n)/n}^{1/2}$ in sparse high-dimensional regression \citep{zhang_bayesian_2022}, while adaptive horseshoe results give related rates in sparse Gaussian sequence models \citep{pas_horseshoe_2014,pas_adaptive_2017,zhang_bayesian_2022}. These rates depend on the number $q_n$ of nonzero coefficients and can therefore be sharper when $q_n\ll d_n$. By contrast, Theorem~\ref{thm:contraction} makes no sparsity assumption, and its rate is indexed by the full dimension $d_n$. Under a well-conditioned design, this rate differs from the classical dense benchmark $(d_n/n)^{1/2}$ by a factor $\sqrt{\log n}$ \citep{lehmann_casella_1998,wainwright_high-dimensional_2019}. Hence for fixed $d_n$, the established contraction bound requires a sample size larger by roughly a logarithmic factor to guarantee the same accuracy as the classical dense benchmark. This results provides a contraction guarantee for the proposed structurally constrained prior under both sparse and dense truths, without requiring sparsity.

\subsection{Global-scale collapse}

\label{subsec:asymptotics}

The preceding results establish asymptotic recovery, but they do not describe the finite-sample shrinkage induced by the global scale. Since every $\phi_j \leq 1$, no coordinate can offset an arbitrarily small value of $\omega^2$. The following proposition quantifies this behaviour uniformly over the allocation simplex. Our results extend the global-scale analysis of grouped global-local shrinkage priors to the interaction setting \citep{boss_group_2023} .

\begin{proposition}
\label{prop:collapse}
Let $\bz\in\mathbb R^{n\times d}$ have full column rank and define $\hat\bbeta^{\mathrm{ols}}=(\bz^\prime \bz)^{-1}\bz^\prime\byobs$. For $\bphi$ in the open simplex and $\omega^2>0$, let $D=\operatorname{diag}(\phi_1\omega^2/\rho_1,\dots,\phi_d\omega^2/\rho_d)$.  The conditional posterior mean $\bar\bbeta=\mathsf E(\bbeta\mid\byobs,\omega^2,\bphi,\sigma^2)=(\bz^\prime \bz+D^{-1})^{-1}\bz^\prime\byobs$ does not depend on $\sigma^2$ and satisfies
\begin{equation}
\|\bar\bbeta\|_2 \le \Lambda_{\max}^2(\bz)\,\frac{\omega^2}{\rho_{\min}}\, \|\hat\bbeta^{\mathrm{ols}}\|_2, \qquad \|\hat\bbeta^{\mathrm{ols}}-\bar\bbeta\|_2 \ge \frac{\|\hat\bbeta^{\mathrm{ols}}\|_2} {1+\Lambda_{\max}^2(\bz)\,\omega^2/\rho_{\min}}.    
\end{equation}
where $\rho_{\min}=\min_{j}\rho_j>0$. Therefore, for every $\varepsilon\in(0,1)$,
\begin{equation}
\omega^2\le\frac{(1-\varepsilon)\,\rho_{\min}}{\varepsilon\,\Lambda_{\max}^2(\bz)} \quad\Longrightarrow\quad \|\hat\bbeta^{\mathrm{ols}} -\bar\bbeta\|_2\ge\varepsilon\,\|\hat\bbeta^{\mathrm{ols}}\|_2,    
\end{equation}
uniformly over the allocation simplex.
\end{proposition}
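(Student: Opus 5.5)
Write $A=\bz^\prime\bz$, which is positive definite by the full-rank assumption, so $\hat\bbeta^{\mathrm{ols}}=A^{-1}\bz^\prime\byobs$ and $\bz^\prime\byobs=A\hat\bbeta^{\mathrm{ols}}$. I would first record that $\bar\bbeta$ is free of $\sigma^2$. Under the hierarchy, $\bbeta\mid\omega^2,\bphi,\sigma^2\sim\mathsf{Normal}(0,\sigma^2D)$ and $\byobs\mid\bbeta\sim\mathsf{Normal}(\bz\bbeta,\sigma^2 I)$. The posterior precision is therefore $\sigma^{-2}(A+D^{-1})$, and the mean is $(A+D^{-1})^{-1}\bz^\prime\byobs$, because the factor $\sigma^{-2}$ cancels. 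Substituting gives the key representation
\[
\bar\bbeta=(A+D^{-1})^{-1}A\,\hat\bbeta^{\mathrm{ols}} .
\]
Since $D$ is diagonal with positive entries on the open simplex, $D^{-1}$ exists and is positive definite.

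For the first bound, I would factor $(A+D^{-1})^{-1}A=(I+DA)^{-1}DA$, and I would rather use the symmetric form $D^{1/2}(I+D^{1/2}AD^{1/2})^{-1}D^{1/2}A$. The plan is to bound operator norms:
\begin{itemize}
\item $\|D\|\le\omega^2\max_j\phi_j/\rho_j\le\omega^2/\rho_{\min}$, using $\phi_j\le 1$;
\item $\|A\|=\Lambda_{\max}^2(\bz)$;
\item $\|(I+B)^{-1}\|\le1$ for $B=D^{1/2}AD^{1/2}\succeq0$.
\end{itemize}
Putting these together gives
\[
\|\bar\bbeta\|\le\|D^{1/2}\|\,\|D^{1/2}\|\,\|A\|\,\|\hat\bbeta^{\mathrm{ols}}\|\le\frac{\Lambda_{\max}^2\omega^2}{\rho_{\min}}\|\hat\bbeta^{\mathrm{ols}}\|.
\]
This holds for any $\bphi$ in the simplex, hence uniformly.

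For the second bound, I would use $\hat\bbeta^{\mathrm{ols}}-\bar\bbeta=(A+D^{-1})^{-1}D^{-1}\hat\bbeta^{\mathrm{ols}}=(I+DA)^{-1}\hat\bbeta^{\mathrm{ols}}$. Set $\kappa=\Lambda_{\max}^2\omega^2/\rho_{\min}$. Writing $\hat\bbeta^{\mathrm{ols}}=(I+DA)\bigl[(I+DA)^{-1}\hat\bbeta^{\mathrm{ols}}\bigr]$ gives
\[
\|\hat\bbeta^{\mathrm{ols}}\|\le\|I+DA\|\,\|\hat\bbeta^{\mathrm{ols}}-\bar\bbeta\|\le(1+\|D\|\|A\|)\|\hat\bbeta^{\mathrm{ols}}-\bar\bbeta\|\le(1+\kappa)\|\hat\bbeta^{\mathrm{ols}}-\bar\bbeta\|,
\]
which is exactly the stated inequality.

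The implication then follows directly. If $\omega^2\le(1-\varepsilon)\rho_{\min}/(\varepsilon\Lambda_{\max}^2)$, then $\kappa\le(1-\varepsilon)/\varepsilon$, so $1/(1+\kappa)\ge\varepsilon$. Every bound depended on $\bphi$ only through $\phi_j\le1$, so the conclusion holds uniformly over the simplex.

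The main point requiring care is the algebraic identity $(A+D^{-1})^{-1}D^{-1}=(I+DA)^{-1}$ and checking that $DA$ is not symmetric. The submultiplicative norm bound does not need symmetry, so it is fine. Everything else is routine.
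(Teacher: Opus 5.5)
Your proposal is correct and follows essentially the same route as the paper. Both use the representation $\bar\bbeta=(A+D^{-1})^{-1}A\hat\bbeta^{\mathrm{ols}}$, the identity $\hat\bbeta^{\mathrm{ols}}=(I+DA)(\hat\bbeta^{\mathrm{ols}}-\bar\bbeta)$ with submultiplicativity for the lower bound, and $\phi_j\le1$ as the only use of the allocation. The differences are minor: you bound $\|(A+D^{-1})^{-1}\|_2\le\|D\|_2$ via the symmetric factorization $D^{1/2}(I+D^{1/2}AD^{1/2})^{-1}D^{1/2}$ rather than the paper's Loewner bound $A+D^{-1}\succeq(\rho_{\min}/\omega^2)I$, and you verify the $\sigma^2$-cancellation explicitly, which the paper leaves implicit.
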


Unlike standard global--local priors with unbounded local scales (e.g., horseshoe), the simplex constraint in the R2 prior framework places a deterministic upper bound on every local variance. Proposition~\ref{prop:collapse} shows that the global scale retains uniform control over the conditional posterior mean, which collapses to the origin as $\omega^2\to0$ regardless of the allocation values. Nonetheless, this does not imply collapse of the marginal posterior mean, but shows the influence of the global-scale prior in preserving large signals.

\section{Experiments}\label{sec:experiments}
We now study the finite sample performance of $\Rtdprox$ on synthetic data using an interaction model, and real-world data fit with GAMs.

\subsection{Synthetic interaction models}
\label{sec:experiment_exp1}

We generate data from~\eqref{eq:resp} where the data-generating-processes (DGPs) differ by sparsity, dimensionality of the main effect, signal-to-noise and true heredity type. Our primary metrics of interest are the recovery of regression coefficients and predictive performance measured by the expected log predictive density score \citep[$\elpd$; ][]{vehtari_survey_2012}.  

Let $\mathcal{A}_{\mathrm{main}} = \{j \in \{1,\dotsc,p\} : \beta_{\mathrm{main},j} \neq 0\}$ denote the active set of main effects. We consider dense settings where $\#\mathcal{A}_{\mathrm{main}} = p$ and sparse settings with exactly six active main effects.
In the sparse setting, we set $\beta_{\mathrm{main},j} = 0.5$ for $j \in \{1,2,3,p-2,p-1,p\}$ and $\beta_{\mathrm{main},j} = 0$ for the remaining indices. We set $p \in \{10, 15, 20, 25\}$ and true DGP-$R^2 \in \left\{0.1,0.5,0.9\right\}$.  We consider two DGPs that differ in whether heredity holds or not.

\textbf{(H1)} \textit{Heredity on coefficients satisfied}: This setting enforces (weak) heredity in the regression coefficients.  For all pairs $j < k$, define $\bar{\beta}_{jk} = \tfrac{1}{2}(|\beta_{\mathrm{main},j}| + |\beta_{\mathrm{main},k}|)$. Then each element of $\bbeta_{\mathrm{int}}$ is drawn as
\begin{equation}
\beta_{\mathrm{int},jk} \sim \normal{0, \sigma^2_{\mathrm{int}}} \cdot \mathbf{1}\!\left\{|\beta_{\mathrm{int},jk}| \leq \bar{\beta}_{jk}\right\},
\end{equation}
i.e., a truncated normal centred at zero with support $[-\bar{\beta}_{jk},\, \bar{\beta}_{jk}]$. Note that $\beta_{\mathrm{main},j} = \beta_{\mathrm{main},k} = 0$ implies $\bar{\beta}_{jk} = 0$ and hence $\beta_{\mathrm{int},jk} = 0$ such that strong heredity is also enforced among the inactive main effects. 

\textbf{(H2)} \textit{Heredity violated}: 
Elements of $\bbeta_{\mathrm{int}}$ are assigned as follows:
        $$
            \beta_{\mathrm{int},jk} = 
            \begin{cases}
                0 & \text{if } \beta_{\mathrm{main},j} = 0 \text{ and } \beta_{\mathrm{main},k} = 0, \\
                \beta_{\mathrm{main},j} & \text{if } \beta_{\mathrm{main},j} \neq 0 \text{ and } \beta_{\mathrm{main},k} = 0, \\
                \beta_{\mathrm{main},k} & \text{if } \beta_{\mathrm{main},j} = 0 \text{ and } \beta_{\mathrm{main},k} \neq 0, \\
                \tfrac{1}{2}(|\beta_{\mathrm{main},j}| + |\beta_{\mathrm{main},k}|) & \text{otherwise}.
            \end{cases}
        $$
        Hence, weak heredity is violated when one of the main coefficients is non-zero while the other is zero. 
The covariate matrix ${\bf X}$ is generated from a normal distribution with zero mean and a Toeplitz correlation structure with correlation coefficient $\rho = 0.5$. 

Following common practice, we centre and scale the main covariates, say $\tilde x_{ij}$ and $\tilde x_{ik}$, but do not re-standardise the interaction columns $z_{i,jk}=\tilde x_{ij}\tilde x_{ik}$ \citep{echambadi_mean-centering_2007, gelman_scaling_2008, schielzeth_simple_2010}. For all DGPs, we set the number of observations $n = 100$. In this case, none of the DGPs match any of the priors, i.e., no true model included in the comparison.

We compare $\Rtdprox$ with a range of existing shrinkage priors. In particular,  we consider $\rhs$ of \citet{piironen_sparsity_2017}, the unconstrained $\Rtd$ prior of \citet{zhang_bayesian_2022}, as well as $\nig$ prior of \citet{griffin_hierarchical_2017} which enforces weak heredity exactly. This is achieved via a Normal-Gamma-Gamma hierarchy where $\beta \sim \normal{0, \Psi}$, and weak heredity is strictly enforced by defining the interaction variances as $\Psi_{jk} = \eta_{jk} \frac{\Psi_j + \Psi_k}{2}$, with the parent variances following $\Psi_j \sim \text{GG}(\lambda_1, c, d_1)$ and the latent scales following $\eta_{jk} \sim \text{GG}(\lambda_2, c, d_2)$, where $\text{GG}$ stands for the double gamma distribution\footnote{Following Example 5.1 of \citet{griffin_hierarchical_2017}, $c = 2$ and the hyperpriors are specified as $\lambda_1 \sim \text{Exponential}(1)$ and $\lambda_2 = r \lambda_1$, where $r \sim \text{Beta}(2, 6)$. $d$ is assigned a heavy-tailed prior with density $p(d) \propto (1+d)^{-2}$.}

We report all recovery results \emph{relative} to the unconstrained $\Rtd$, which we regard as the natural benchmark model. We prefer reporting performance metrics in relative terms as it removes scale, signal-to-noise and dimensionality effects across DGPs. For a coefficient group $g \in \{\mathrm{main}, \mathrm{int}\}$ partitioned into active and inactive sets $\mathcal{A}_g$ and $\mathcal{A}_g^c$, define the group-wise root-mean-squared error of model~$m$ at MCMC simulation replicate $s$ as
\begin{equation}
      \rmse_{m,g,\mathcal{S}}^{(s)} \;=\; \sqrt{\frac{1}{\#\mathcal{S}}\sum_{j \in \mathcal{S}}\bigl(\hat{\beta}^{(s)}_{m,g,j} - \beta^{(s)}_{g,j}\bigr)^{2}}, \qquad \mathcal{S} \in \{\mathcal{A}_g,\, \mathcal{A}_g^c\},
\end{equation}
where $\hat{\beta}^{(s)}_{m,g,j}$ denotes the posterior mean for model~$m$ and $\#\mathcal S$ is the number of elements in $\mathcal A$. The \emph{relative RMSE} of model~$m$ against the $\Rtd$ baseline is 
\begin{equation}
     \relrmse_{m,g,\mathcal{S}} \;=\; \frac{1}{S_{\mathrm{sim}}} \sum_{s=1}^{S_{\mathrm{sim}}} \frac{\rmse_{m,g,\mathcal{S}}^{(s)}}{\rmse_{\Rtd,g,\mathcal{S}}^{(s)}},
\end{equation}
where $S_{\mathrm{sim}} = 100$ is the number of replications. Values below one indicate an improvement over the $\Rtd$ baseline. Analogously, we report out-of-sample predictive performance via the $\elpd$-difference on held-out data (of size $n$),
\begin{equation}
    \Delta\elpd_{m}^{(s)} \;=\; \elpd_{m}^{(s)} - \elpd_{\Rtd}^{(s)},
\end{equation}
with positive values favouring model~$m$ over the baseline \citep{vehtari_pareto_2022}. \citet{mclatchie_robust_2023} and \citet{sivula2025uncertainty} provide theoretical support for treating $|\Delta\elpd| > 4$ as the threshold above which two models have meaningful difference in predictive performance (assuming it is scaled by the number of in-sample observations). 
Replication code for the simulations can be found at \url{https://github.com/davkoh/prox-r2}.
\subsubsection{Parameter recovery}\label{subsubsec:recovery}

We show the main recovery results in Figure~\ref{fig:rmse_recovery_interaction} and briefly summarise the findings before providing a more detailed discussion.

\begin{figure}[t]
    \centering
    \resizebox{1.05\textwidth}{!}{\input{figs/all_dgps_nobs_100_rho_0.5_rel_R2D2_rmse_clustered_lines.tex}
    }
    \caption{Recovery of coefficients, measured by the  $\relrmse$, relative to the unconstrained $\Rtd$ baseline, for increasing main effects $p$. Panels correspond to the four DGPs and sub-panels to the low and high signal regimes (DGP-$R^2 = 0.1$ and $0.9$). Curves are averaged over the $S_{\mathrm{sim}}=100$ replications. The dashed line marks $\relrmse = 1$, and the dotted lines are the standard-error estimate. Tables~\ref{tab:mcmc_diagnostics_beta_p10}-\ref{tab:mcmc_diagnostics_beta_p25} in the Supplementary Materials report the associated MCMC diagnostics.}
    \label{fig:rmse_recovery_interaction}
\end{figure}

\paragraph{Summary.}
For interaction coefficients, $\Rtdprox$ performs as well as or better than the competing priors for true non-zero coefficients, most clearly in the Dense (H2) setting. This pattern is present across signal-to-noise ratios and under both heredity designs. In contrast, $\nig$ prior performs well for true zero coefficients, benefiting from its strong shrinkage towards the origin. For main-effect coefficients, $\Rtdprox$ improves on $\Rtd$ for non-zero coefficients when the DGP is dense and there is a high signal-to-noise ratio. In the remaining settings, priors that do not model heredity are comparable or perform slightly better for main effects, particularly for true zeros. This results from the main effect variance parameters being lower bounded by the heredity constraints, thus leading to underestimation of the posterior variance.

\paragraph{Sparse, correctly specified (H1).}
For interaction effects, all priors are essentially indistinguishable from the $\Rtd$ on the truly non-zero coefficients ($\relrmse \approx 1$ across $p$ and DGP-$R^2$ levels). On true zeros, $\Rtdprox$ and $\nig$ deliver markedly stronger shrinkage, with $\nig$ achieving the smallest error ($\relrmse \approx 0.2\text{--}0.5$) and $\Rtdprox$ also improving over $\Rtd$ as $p$ grows ($\relrmse$ falling from $\approx 0.95$ at $p=10$ to $\approx 0.8$ at $p=25$). As $p$ increases, the active main-effect set remains fixed while the number of interactions with at least one inactive parent grows quadratically, so the heredity constraint applies to an increasing share of the interaction block and shrinks the corresponding $\phi_{jk}$ toward zero.
For main effects, non-zero coefficients are recovered comparably across all models. On true-zero main coefficients, however, both  $\Rtdprox$ and $\nig$ models perform worse than $\Rtd$ ($\relrmse$ growing from $\approx 1.5$ at $p=10$ to $> 2.5$ at $p=25$) as imposing the heredity constraint at the variance level bounds the main-effect decomposition weights away from zero (see Section \ref{sec:projections}), leading to overestimation of the variance of the true zero coefficients.

\paragraph{Dense, correctly specified (H1).}
At low signal (DGP-$R^2=0.1$), the value of explicitly modelling weak heredity is muted since all constrained priors track the $\Rtd$ closely on both effect groups. Hence, recovery is sufficiently dominated by noise so that all priors behave similarly. With high signal for interaction effects (DGP-$R^2=0.9$), $\Rtdprox$ model attains the lowest $\relrmse$ (below one at $\approx 0.95$), whereas $\nig$ over-shrinks the dense signal and is the worst performer ($\relrmse \approx 1.05$). For main effects, $\Rtdprox$ again clearly performs best, achieving $\relrmse \approx 0.85\text{--}0.90$ while $\nig$ deteriorates to $\approx 1.18$ and the remaining competitors remain near $\Rtd$ performance. This likely results from combining structured variance allocation with direct control of total prior signal.

\paragraph{Sparse, misspecified (H2).}
When the heredity constraint is violated by the DGP, $\Rtdprox$ model nevertheless remains the most robust choice on non-zero interaction coefficients. For example, with large signal it attains $\relrmse \approx 0.85$, outperforming all competitors. This is consistent with the constraint acting as a \emph{soft} penalty toward the heredity region: informative enough to regularise beyond the unconstrained $\Rtd$, yet permissive enough to allow truly non-zero off-heredity interactions to escape shrinkage. For true-zero interactions, $\nig$ remains the strongest performer ($\relrmse \approx 0.2$), again reflecting its aggressive shrinkage behaviour. Similar to (H1), for main effects, non-zero coefficients are recovered uniformly well across priors, while $\Rtdprox$ cannot recover the true zeros as well. In contrast to (H1), however, the error for $\nig$ on true-zero main coefficients is larger in the low-signal DGP.

\paragraph{Dense, misspecified (H2).}
For interaction effects with low signal, both interaction priors improve modestly on the $\Rtd$ ($\relrmse \approx 0.85-0.90$). With large signal, $\Rtdprox$ again performs best ($\relrmse \approx 0.87$), possibly because the (incorrect) pull toward the heredity region still provides useful regularization even though the DGP violates the constraint. For main effects, the cost of imposing an incorrectly specified constraint is more apparent as both constrained priors deteriorate substantially relative to $\Rtd$, though $\Rtdprox$ does so less than $\nig$.$\rhs$ behaved similarly to the $\Rtd$~for sparse DGPs, but performed noticeably worse on average for the dense DGPs -- a finding consistent with \citet{kohns_arr2_2025}.

\subsubsection{Predictive performance}
\begin{figure}[t]
    \centering
    \resizebox{1.05\textwidth}{!}{
        \input{figs/all_dgps_nobs_100_rho_0.5_rel_R2D2_elpd_clustered_lines.tex}
    }
    \caption{Predictive performance, measured by the out-of-sample $\Delta\elpd$ relative to the unconstrained $\Rtd$ baseline, for increasing main effects, $p$. Panels correspond to the four DGPs and sub-panels to the low and high signal regimes (DGP-$R^2 = 0.1$ and $0.9$). Curves are averaged over the $S_{\mathrm{sim}}=100$ replicates. The dotted line marks $\Delta\elpd = 0$ and the dashed lines the $|\Delta\elpd| = 4$ threshold of \citet{sivula2025uncertainty}. The vertical axes are clipped because at high signal and large $p$ the $\nig$ and $\rhs$ curves fall far below the plotted range. Tables~\ref{tab:mcmc_diagnostics_beta_p10}-\ref{tab:mcmc_diagnostics_beta_p25} in the Supplementary Materials report the associated MCMC diagnostics.}
    \label{fig:elpd_interaction_r2_01_09}
\end{figure}

We show predictive performance, measured by the held-out $\Delta\elpd$ relative to the unconstrained $\Rtd$ baseline, in Figure~\ref{fig:elpd_interaction_r2_01_09}. The curves averaged over all $S_{\mathrm{sim}}=100$ simulation replications and mixing diagnostics are reported in Tables~\ref{tab:mcmc_diagnostics_beta_p10}-\ref{tab:mcmc_diagnostics_beta_p25} in the Supplementary Materials. Following \citet{sivula2025uncertainty}, we treat $|\Delta\elpd| > 4$ (dashed lines) as the threshold for a practically meaningful difference. 

At low signal-to-noise (DGP-$R^2 = 0.1$), all models are predictively indistinguishable from the $\Rtd$ across the four DGPs and all $p$ ($|\Delta\elpd| \approx 0$). The only exception is a marginal advantage for $\nig$ on the sparse, misspecified DGP, consistent with its strong recovery of the true-zero interaction coefficients.

At high signal-to-noise (DGP-$R^2 = 0.9$), the priors separate sharply. Under both correct specification (H1) and misspecification (H2), $\Rtdprox$ improves substantially over $\Rtd$ and scales favourably with $p$. $\nig$ is competitive at low dimension but falls far below the baseline by $p=25$, alongside $\rhs$. As shown in Section~\ref{subsubsec:recovery}, the robustness of $\Rtdprox$ appears to be driven mainly by its more accurate recovery of the non-zero interaction coefficients rather than the main effects. 

The poor performance of $\nig$ and $\rhs$ for large $p$ is predominantly stemming from computational rather than modelling issues. Both priors are comparable to $\Rtd$ at small $p$, where the chains of all models mix well (Table~\ref{tab:mcmc_diagnostics_beta_p10}, Supplementary Materials), while their deterioration tracks the loss of mixing in higher dimensions (Tables \ref{tab:mcmc_diagnostics_beta_p15}-\ref{tab:mcmc_diagnostics_beta_p25} in the Supplementary Materials): at high signal, $\nig$ and $\rhs$ have $\hat{R}$ up to $1.76$ and $1.49$, respectively, and both have effective sample sizes in the single digits. In summary, $\Rtdprox$ is competitive-to-best on predictive accuracy across all four DGPs, most robust under heredity misspecification, and computationally more reliable in regimes where the competing priors break down.

\subsubsection{Posterior probability of constraint realization}
We conduct a more detailed investigation of the posterior behaviour of the proposed and competing models. First, a key aspect of $\Rtdprox$ is the proximal MCMC step that projects the variance allocation parameters~$\bphi$ into the constrained set defined by the heredity condition, i.e., $\eta_{jk} \leq \tfrac{1}{2}(\eta_j + \eta_k)$. Ideally, the proximal step regularises the posterior so that it concentrates in the desired constrained region. Since the constraint is imposed on the latent variance decomposition parameters $\boldsymbol{\eta}$ rather than on the regression coefficients~$\bbeta$ directly, however, it is not guaranteed that the constraint will also be realised at the coefficient level, i.e., $|\beta_{jk}|\leq\frac12(|\beta_j|+|\beta_k|)$.

To investigate this, we report heatmaps of constraint realization probabilities for  $\Rtdprox$, $\Rtd$ and $\nig$ at $p=20$ under the sparse, correctly specified DGP~(H1) with DGP-$R^2 = 0.5$ in Figure~\ref{fig:feasibility_sparse_correct}. The analogous heatmaps for the remaining DGP permutations (sparse misspecified, dense correctly specified, and dense misspecified) are reported in the Supplemental Materials, Figures~\ref{fig:feasibility_sparse_misp}--\ref{fig:feasibility_dense_misp}, and exhibit the same qualitative pattern. Comparing these models reveals that the proximal step with $\lambda = 1$ achieves near-perfect constraint realization at the~$\bphi$ level. The regression coefficients, however, satisfy the constraints much less frequently and exhibit minimal differences with the other priors. This makes sense given that our prior satisfies geometric constraints on the allocation scale.

$\nig$ prior, in contrast, applies the heredity constraint strictly on the variances, yet even this results in only a marginal improvement in constraint realization at the coefficient level relative to $\Rtdprox$. This suggests that the soft constraint of our proximal approach achieves a comparable degree of constraint realization while retaining the computational advantages and high degree of interpretability of the R2 prior framework.

\begin{figure}[t]
    \centering
    \includegraphics[width=0.9\linewidth]{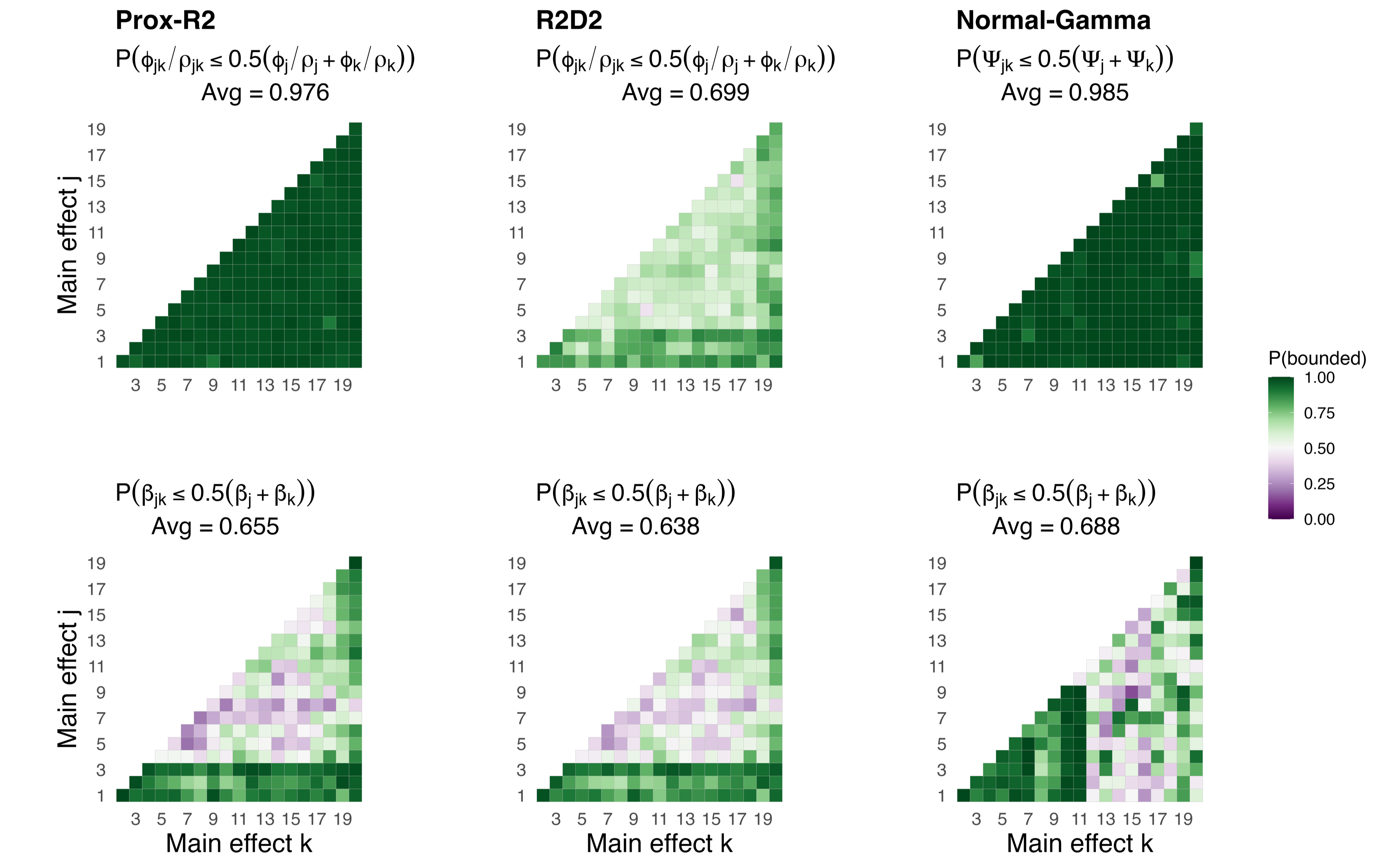}
    \caption{Constraint feasibility heatmaps for $p=20$, sparse, correctly specified DGP~(H1) with $R^2 = 0.5$. Panels compare  $\Rtdprox$ model against the unconstrained $\Rtd$ and $\nig$ models.}
    \label{fig:feasibility_sparse_correct}
\end{figure}

Additionally, in the Supplemental Materials, we report MCMC diagnostics in Tables~\ref{tab:mcmc_diagnostics_beta_p10}-\ref{tab:mcmc_diagnostics_beta_p25}. Overall, we find that all models struggle in high-dimensions and large signal. Since, we adjust the true $\sigma^2$ in order to achieve the desired DGP-$R^2$, the curvature of the likelihood sharpens, making exploration with gradient-based samplers more difficult. Nevertheless, we find that the geometry of the posterior induced by soft rather than hard penalization allows for improved sampling.

\subsection{Real-data analysis}

We investigate the performance of the proposed method on several real-world datasets. Inspired by \citet{griffin_hierarchical_2017}, we apply $\Rtdprox$ and competing priors to a generalised additive model, where we employ the linear-over-nonlinear allocation constraint of Equation~\eqref{eq:gam-average-hierarchy}.

\begin{figure}[t]
    \centering
    \includegraphics[width=0.9\linewidth]{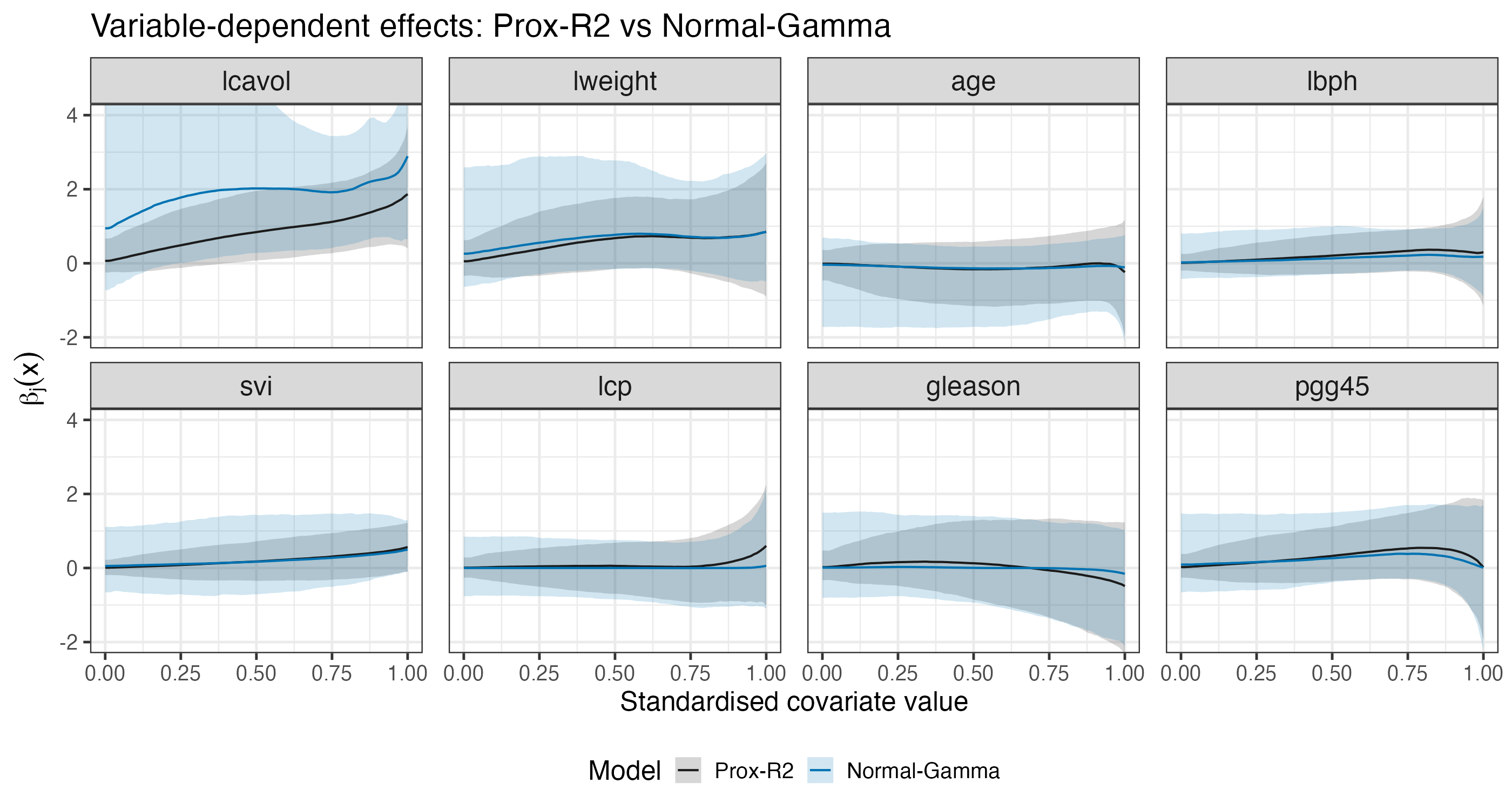}
    \caption{Non-linear effects estimates for the prostate cancer data set: $\Rtdprox$ vs.\ $\nig$}
    \label{fig:non_linear_effects}
\end{figure}

We first use the prostate cancer data set, prepared as in Example~2 of \citet{griffin_hierarchical_2017}, with the log prostate-specific antigen (\texttt{lpsa}) as the response and a spline basis expansion for each of the eight covariates. Here, we apply the group-level allocation constraint of Equation~\eqref{eq:gam-average-hierarchy}, which bounds the average latent allocation of the non-linear basis terms of covariate~$j$ by the allocation of its linear term, $\tfrac{1}{L_j}\sum_{\ell=1}^{L_j}\eta_{j\ell}\leq\eta_j$, and may be interpreted as a preference for linear over non-linear covariate effects. Since the non-linear effects are modelled as spline basis expansions, such models are very high-dimensional.

Figure~\ref{fig:non_linear_effects} shows the posteriors of the non-linear functions over the covariate domains. $\Rtdprox$ yields narrower pointwise posterior intervals than $\nig$, and a markedly different non-linear effect for the log cancer volume covariate (\texttt{lcavol}). As in the simulation section, this is partly explained by the substantially better sampling efficiency of the $\Rtdprox$ relative to $\nig$, reported in Table~\ref{tab:gam_efficiency}. $\Rtdprox$ mixes well (minimum bulk-ESS $\approx 3700$, $\hat{R}\leq 1.01$), whereas $\nig$ fails to converge ($\hat{R}=1.31$, minimum bulk-ESS $\approx 200$). Hence, the $\nig$ effect estimates are unreliable, and its convergence failure likely explains part of the discrepancy in Figure~\ref{fig:non_linear_effects}.

\begin{table}
\caption{Sampling efficiency on the prostate cancer GAM ($p=8$ covariates, each with $L_j = 60$ spline basis functions, $4$ chains of $1000$ warmup $+$ $1000$ sampling iterations). Min bulk-ESS and max $\hat{R}$ are taken across the regression coefficients $\boldsymbol{\beta}$.\label{tab:gam_efficiency}}
\begin{tabular}{lrrr}
\toprule
Model &  min bulk-ESS & max $\hat{R}$ & bulk-ESS/s \\
\midrule
$\Rtdprox$ &   3744 & 1.005 & 161.7 \\
$\nig$    &   207 & 1.307 &   0.2 \\
$\Rtd$    &   4608 & 1.007 & 112.5 \\
$\rhs$    &   3382 & 1.004 &  70.3 \\
\bottomrule
\end{tabular}
\end{table}

\subsubsection{Additional real-data results}

We further assess out-of-sample predictive performance on the Boston Housing \citep{harrison_hedonic_1978}, Ozone \citep{breiman_estimating_1985}, and Blood Glucose \citep{efron_least_2004} data sets. In each case, we include all $p$ main effects, all $p_{\mathrm{int}} = \binom{p}{2}$ interactions, and hold out $75\%$ of the observations, so that the number of interaction terms is large relative to the training sample. $\Rtdprox$ provides competitive performance on all three data sets, attaining the best point predictions on Boston Housing and the largest proportion of explained variance on the held-out data. Full results are reported in Appendix~\ref{app:predperf}.

\section{Discussion}\label{sec:Discussion}



Modern regression problems frequently contain structured collections of terms, such as main effects and interaction, linear and nonlinear basis expansions, or treatment groups, over which domain knowledge can naturally guide regularisation in high-dimensional settings. In this work, we introduced $\Rtdprox$, a prior distribution designed for the intermediate setting where we favour scientifically plausible relationships among variance allocations without treating them as inviolable rules. Our work connects two lines of prior research that have largely addressed different parts of the same modelling problem: (1) priors that take structure into account through dependence among the coefficient scales such as heredity, grouping, or ordered importance \citep{zellner_assessing_1986,chipman_bayesian_1997, griffin_hierarchical_2017, griffin_structured_2024, aguilar2025dependencyawareshrinkagepriorshigh}; and (2) R2 priors, which begin with an interpretable prior on overall model fit and then partition the implied signal across model terms \citep{zhang_bayesian_2022}. With our prior, we bring these perspectives together by expressing structural relationships as linear constraints on a latent space of variance allocations and relaxing them via proximal methods \citep{pereyra_proximal_2016}. In doing so, our prior uses these relationships to guide relative signal allocation across terms while preserving interpretable, model-level prior elicitation \citep{mikkola_prior_2021, bockting_simulation-based_2024, manderson_translating_2026}.

Broadly speaking, our work connects to the growing literature that leverages optimisation paradigms within Bayesian inference \citep{duan_bayesian_constraint_2020, presman2023distance, xu_bayesian_2023}. Although we use the same computational language, our method is differentiated by evaluating preferences over variance allocations directly rather than by defining a set of admissible coefficient values. The former perspective is particularly useful against structural misspecification, since we do not remove any region from the support of the coefficients. Optimisation has appeared in $R^2$ priors as a tool for selecting hyperparameters before model fitting \citep{yanchenko_r2d2_2025,yanchenko_pseudo-r2d2_2025} while we involve it here throughout posterior computation, yielding a continuously differentiable target that scales reliably with standard gradient-based sampler.

Our theoretical results complement the established theory of global-local shrinkage priors to coupled, proximally constrained variance allocations. Work on Horseshoe and R2D2 priors has shown how behaviour near zero, heavy tails, and posterior concentration can support recovery in sparse problems \citep{carvalho_horseshoe_2010, pas_horseshoe_2014, zhang_bayesian_2022}. These analyses rely on independent or exchangeable local behaviour, whereas our construction related the coefficient scales by design. Our results show that these modifications preserves the defining shrinkage behaviour of the prior. 
We also show that posterior recovery remains possible when the favoured structural relationship holds only approximately.

Our simulations provide evidence about when and to what degree favouring the structural relationship helps. In particular, under moderate to strong signal-to-noise ratio, we show efficient recovery of non-zero interactions, even when the true data-generating mechanism follows constraints only approximately.  At low signal, the competing priors are more difficult to distinguish. In general, when considering constraints, practitioners should formulate them as beliefs about relative regularisation taking into account how the prior distribution shifts mass from the standard exchangeable case to the one that reflects the preferred structure. To do this, one could resort to use of prior-predictive summaries as well as evaluating term-level behaviour. Likewise, the smoothing strength should be considered a modelling decision, capturing the degree to which we are willing to accept deviations from the constraints.

Our work faces several limitations which suggest promising directions for future research. First, our implementation is most convenient for closed convex sets with efficient projections at hand.  Extending $\Rtdprox$ to nonlinear or nonconvex relational constraints, or empirically evaluating it across the broader families of multilevel, generalized linear, and time-series models where R2 priors have been developed \citep{aguilar_intuitive_2023, yanchenko_r2d2_2025, kohns_arr2_2025}, will require new numerical strategies and validation. We have so far only worked with default hyperparameters for the logistic-normal allocation parameters, without considering the interaction that these could have the constraints. Jointly learning these hyperparameters, together with the smoothing strength, may provide a more flexible elicitation approach. Another promising direction is observational causal inference, where our framework could encode structural preferences over variance allocation among prognostic effects and treatment-effect modifiers, with particular attention to how prior specification and regularisation affect causal effect estimation \citep{hahn2019bayesianregressiontreemodels,linero_nonparametric_2024,orihara2026bayesiandoublyrobustcausal,riha2026select,aguilar2026dirichlettreeallocationpriorsbayesian}. Finally, a broader extension would replace R2 decomposition with other elicitable model-level or predictive quantities that could be allocated across terms, while still accounting for constraints. Simulation-based elicitation could provide one route to constructing such priors \citep{bockting_simulation-based_2024,manderson_translating_2026}.


\section{Acknowledgments}

Javier Enrique Aguilar and Paul-Christian Bürkner acknowledge  funding by the Deutsche Forschungsgemeinschaft (DFG, German Research Foundation) Project 500663361 as well as the computing time provided on the Linux HPC cluster at Technical University Dortmund (LiDO3), partially funded in the course of the Large-Scale Equipment Initiative by DFG Project 271512359. Paul-Christian Bürkner further acknowledges support of the DFG via the Collaborative Research Center 391 (Spatio-Temporal Statistics for the Transition of Energy and Transport) – 520388526. David Kohns acknowledges the computational resources provided by the Aalto Science-IT project, and the support of the Research Council of Finland Flagship programme: Finnish Center for Artificial Intelligence, Research Council of Finland project (340721), and the Finnish Foundation for Technology Promotion. Eric Yanchenko was supported by JSPS KAKENHI grant 24K22613. Shonosuke Sugasawa was supported by JSPS KAKENHI grant 24K12757 and 25H00699. Kaoru Irie was supported by JSPS KAKENHI grant 22K13374.

\appendix

\newpage
\clearpage

\section{Supplemental Materials}
\label{sec:supplement}

\subsection{Sampling the smoothed posterior}
\label{sec:sampling}

The smoothed posterior $\pi^\lambda(\btheta\mid\byobs)$ in \eqref{eq:smoothed-posterior} is proper for every $\lambda>0$ and can therefore be targeted by standard MCMC algorithms. The role of the Moreau--Yosida envelope depends on the sampler. In all cases, however, the Markov chain targets a single density, namely $\pi^\lambda$. The projection $\mathcal P_{\mathcal E}$ is evaluated only to compute the envelope or its gradient. Posterior draws are not projected onto $\mathcal E$.

\paragraph{Gradient-free samplers.}
Consider a Metropolis-Hastings update of $\btau$, conditional on the remaining parameters $\bnu$. A candidate $\btau^\star$ is drawn from a proposal distribution $q(\btau^\star\mid\btau)$ and accepted with probability $ \alpha(\btau,\btau^\star)$$ =
\min \left\{ 1, \frac{ \pi^\lambda(\btau^\star,\bnu\mid\byobs) q(\btau\mid\btau^\star)} { \pi^\lambda(\btau,\bnu\mid\byobs) q(\btau^\star\mid\btau)} \right\}.$ The envelope enters the acceptance probability only through the two scalar values $\delta^\lambda_{\mathcal E}(\btau^\star)$ and $\delta^\lambda_{\mathcal E}(\btau)$, each obtained from a projection evaluation via Equation~\eqref{eq:my-envelope-indicator}. The candidate point is generated without reference to $\mathcal E$, and derivatives of $\delta^\lambda_{\mathcal E}$ are not required. A candidate outside $\mathcal E$ is accepted or rejected according to its probability under the smoothed target $\pi^\lambda$.

\paragraph{Gradient-based samplers.}
For HMC \citep{neal_mcmc_2011}, the smoothed target determines the potential energy $ U(\btheta) = -\log\pi^\lambda(\btheta\mid\byobs).$ The Moreau--Yosida envelope contributes to this potential through the term $\delta^\lambda_{\mathcal E}(\btau)$, whose gradient is given by $\nabla\delta^\lambda_{\mathcal E}(\btau) = \frac{1}{\lambda} \left(\btau-\mathcal P_{\mathcal E}(\btau)\right).$ This gradient is zero when $\btau\in\mathcal E$. Outside $\mathcal E$, it points away from the feasible set, from the projection $\mathcal P_{\mathcal E}(\btau)$ toward $\btau$. The corresponding Hamiltonian force is the negative gradient $-\nabla\delta^\lambda_{\mathcal E}(\btau)$, which points back toward $\mathcal E$. Thus, in HMC or NUTS \citep{hoffman_no-u-turn_2014}, the constraint affects the leapfrog trajectory through the additional force represented by $-\nabla\delta^\lambda_{\mathcal E}(\btau)$. Each leapfrog step requires evaluating $\mathcal P_{\mathcal E}(\btau)$ in order to compute this.

Figure~\ref{fig:proximal-disk} shows the resulting dynamics for a circular constraint, where a trajectory that leaves $\mathcal E$ is driven back" toward it by the penalty each step and may terminate outside $\mathcal E$ because the constraint is enforced softly. As the shrinkage models of Section~\ref{sec:method} are fitted with gradient-based samplers, the differentiability provided by the Moreau--Yosida envelope is what renders the constraint compatible with these algorithms.

\subsection{ Algorithms }

\begin{algorithm}[H]
\caption{Dykstra's algorithm for projection onto the intersection of half-spaces
    $\mathcal{E} = \bigcap_{i=1}^m \mathcal{E}_i$, where 
    $\mathcal{E}_i = \{\bs \in \mathbb{R}^d \mid \bG_i^\prime\bs \le 0\}$}
\label{alg:dykstra}
\KwIn{$\bt \in \mathbb{R}^d$, constraint matrix $\bG \in \mathbb{R}^{d \times m}$, 
    tolerance $\varepsilon > 0$}
\KwOut{$\mathcal{P}_{\mathcal{E}}(\bt)$}
$\bs \leftarrow \bt$ \tcp*{initialise iterate}
$\bz_i \leftarrow \mathbf{0}$ for $i = 1, \ldots, m$ \tcp*{initialise correction vectors}
$\delta \leftarrow \infty$\;
\BlankLine
\While{$\delta \geq \varepsilon$}{
    $\bs^{\mathrm{prev}} \leftarrow \bs$\;
    \For{$i \leftarrow 1$ \KwTo $m$}{
        $\tilde{\bs} \leftarrow \bs + \bz_i$ 
            \tcp*{add back correction before projecting}
        $\bs \leftarrow \tilde{\bs} - \dfrac{\max(0,\, \bG_i^\prime\tilde{\bs})}{\|\bG_i\|^2}\,\bG_i$
            \tcp*{project onto $\mathcal{E}_i$}
        $\bz_i \leftarrow \tilde{\bs} - \bs$ 
            \tcp*{record displacement for next cycle}
    }
    $\delta \leftarrow \|\bs - \bs^{\mathrm{prev}}\|$\;
}
\Return{$\bs$}\;
\end{algorithm}

\subsection{Projection versus truncation}

A natural question is whether projecting a random variable onto a convex set $\mathcal{E}$ is equivalent to truncating it to $\mathcal{E}$. These are two ways to obtain a random variable supported on $\mathcal{E}$, but they lead to fundamentally different distributions.

Given a random variable $\bS$, truncation to $\mathcal{E}$ conditions on the event $\bS \in \mathcal{E}$, yielding a distribution with density proportional to
\begin{equation}
p_{\mathcal{E}}(\bs) \propto \mathbf{1}_{\mathcal{E}}(\bs)\, p_{\bS}(\bs).
\end{equation}

Projection, by contrast, maps every realization of $\bS$ to its nearest point in $\mathcal{E}$, producing the pushforward random variable $\bT = \mathcal{P}_{\mathcal{E}}(\bS)$. The two operations coincide for parameter values that lie in $\mathcal{E}$, but differ fundamentally for those outside it. Truncation discards the parameter values entirely, whereas projection maps them onto the boundary $\partial\mathcal{E}$. Because the projection is many-to-one on $\partial\mathcal{E}$, the pushforward distribution can accumulate positive probability mass on the boundary. On the other hand, a truncated distribution cannot do this, since it simply renormalises the original density on $\mathcal{E}$ and therefore assigns zero probability to sets of measure zero.

We can make this concrete with a one-dimensional example. Let $\mathcal{E} = [0,\infty)$ and $S \sim \normal{0,1}$. The projection is $\mathcal{P}_{\mathcal{E}}(s) = \max(0,s)$, so every $s < 0$ is mapped to zero. The pushforward $T = \mathcal{P}_{\mathcal{E}}(S)$ therefore has a point mass of $1/2$ at zero and a half-normal density on $(0,\infty)$. The truncated normal $S \mid S \ge 0$, by contrast, has density $p(s) = 2\phi(s)$ for $s \ge 0$ and no point mass anywhere, showing that the two distributions are clearly distinct.


\newpage
\section{Further simulation results}

\begin{figure}[t]
    \includegraphics[width=\textwidth]{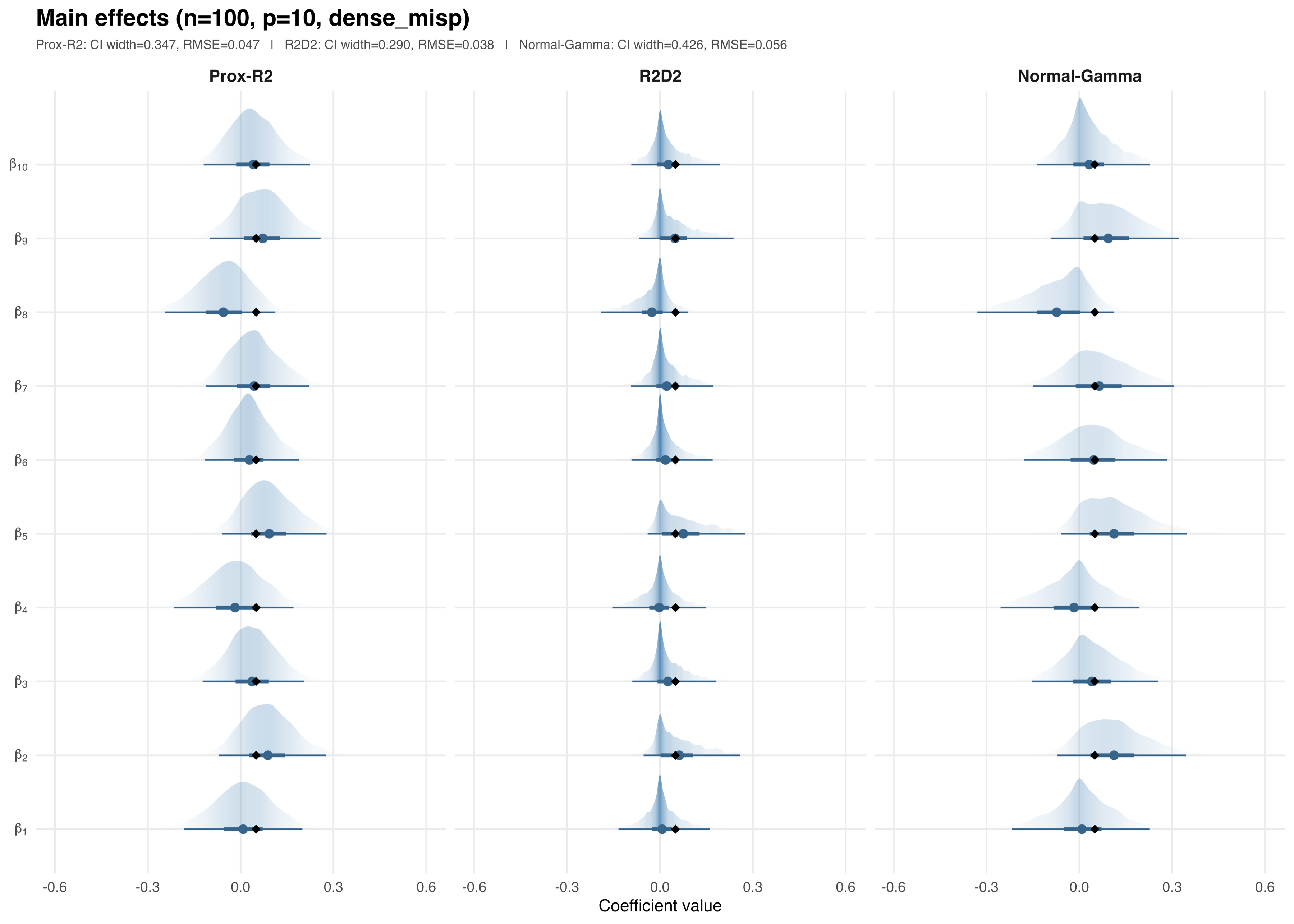}
    \caption{Main effect coefficient posteriors for misspecified, dense DGP (H2).}
    \label{fig:posteriors_dense_misp_main}
\end{figure}

\begin{figure}[t]
    \includegraphics[width=\textwidth]{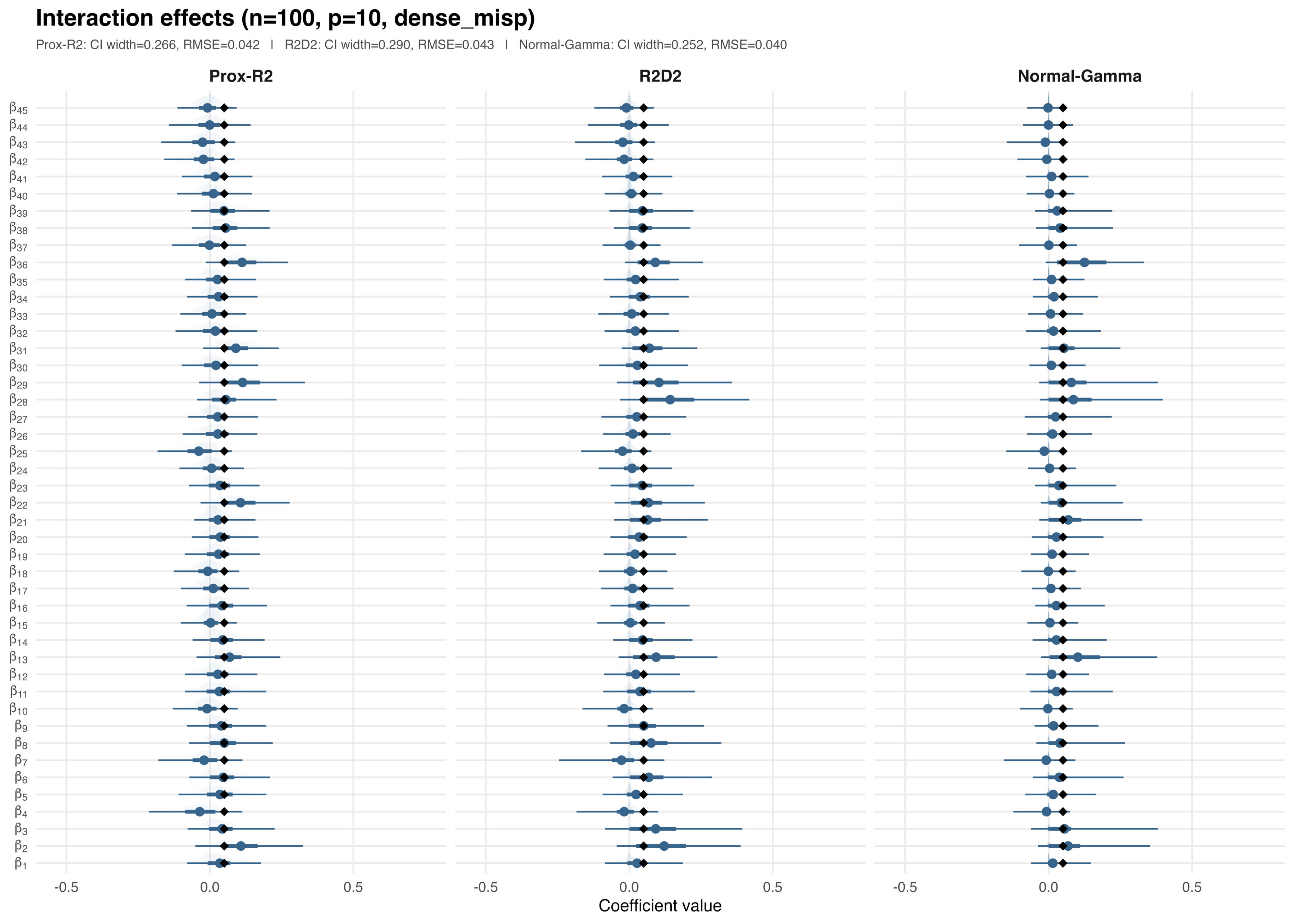}
    \caption{Interaction effect coefficient posteriors for misspecified, dense DGP (H2).}
    \label{fig:posteriors_dense_misp_int}
\end{figure}

\newpage

\subsection{MCMC diagnostics}
\label{sec:MCMC_diagnostics}

\begin{table}
\caption{MCMC diagnostics for $\boldsymbol{\beta}$ ($p = 10$). For each DGP condition and model: mean (over simulation repetitions) of the per-repetition maximum $\hat{R}$ and minimum bulk ESS across $\beta$ coefficients. Bold entries indicate $\bar{\hat{R}} > 1.01$.\label{tab:mcmc_diagnostics_beta_p10}}
\resizebox{\textwidth}{!}{%
\begin{tabular}{@{}llrrrrrrrr@{}}
\toprule
 &  & \multicolumn{2}{c}{$\Rtdprox$} & \multicolumn{2}{c}{$\nig$} & \multicolumn{2}{c}{$\Rtd$} & \multicolumn{2}{c}{$\rhs$} \\
\cmidrule(lr){3-4}\cmidrule(lr){5-6}\cmidrule(lr){7-8}\cmidrule(lr){9-10}
DGP & $R^2$ & $\bar{\hat{R}}$ & $\overline{\text{ESS}}$ & $\bar{\hat{R}}$ & $\overline{\text{ESS}}$ & $\bar{\hat{R}}$ & $\overline{\text{ESS}}$ & $\bar{\hat{R}}$ & $\overline{\text{ESS}}$ \\
\midrule
\multirow{3}{*}{Sparse, correct} & 0.1 & 1.004 & 3,397 & \textbf{1.147} & 1,184 & 1.003 & 2,553 & 1.003 & 2,674 \\
 & 0.5 & 1.003 & 2,595 & \textbf{1.046} & 751 & 1.003 & 1,604 & 1.003 & 1,797 \\
 & 0.9 & 1.003 & 2,486 & 1.008 & 1,120 & 1.004 & 1,508 & 1.003 & 1,665 \\
\midrule
\multirow{3}{*}{Dense, correct} & 0.1 & 1.004 & 3,515 & \textbf{1.167} & 918 & 1.003 & 2,620 & 1.003 & 2,241 \\
 & 0.5 & 1.003 & 2,738 & \textbf{1.085} & 489 & 1.003 & 1,636 & 1.003 & 1,890 \\
 & 0.9 & 1.003 & 2,460 & 1.007 & 1,105 & 1.004 & 1,417 & 1.004 & 1,628 \\
\midrule
\multirow{3}{*}{Sparse, misp.} & 0.1 & 1.004 & 3,580 & \textbf{1.171} & 1,372 & 1.003 & 2,842 & 1.003 & 3,416 \\
 & 0.5 & 1.003 & 2,804 & \textbf{1.011} & 1,354 & 1.003 & 1,768 & 1.003 & 2,588 \\
 & 0.9 & 1.003 & 2,925 & 1.003 & 2,395 & 1.004 & 1,466 & 1.003 & 1,967 \\
\midrule
\multirow{3}{*}{Dense, misp.} & 0.1 & 1.004 & 3,631 & \textbf{1.174} & 1,002 & 1.003 & 2,895 & 1.003 & 2,786 \\
 & 0.5 & 1.003 & 2,885 & \textbf{1.022} & 1,030 & 1.003 & 1,793 & 1.003 & 1,979 \\
 & 0.9 & 1.003 & 3,136 & 1.003 & 2,045 & 1.004 & 1,573 & 1.004 & 1,515 \\
\bottomrule
\end{tabular}
}
\end{table}

\begin{table}
\caption{MCMC diagnostics for $\boldsymbol{\beta}$ ($p = 15$). For each DGP condition and model: mean (over simulation repetitions) of the per-repetition maximum $\hat{R}$ and minimum bulk ESS across $\beta$ coefficients. Bold entries indicate $\bar{\hat{R}} > 1.01$.\label{tab:mcmc_diagnostics_beta_p15}}
\resizebox{\textwidth}{!}{%
\begin{tabular}{@{}llrrrrrrrr@{}}
\toprule
 &  & \multicolumn{2}{c}{$\Rtdprox$} & \multicolumn{2}{c}{$\nig$} & \multicolumn{2}{c}{$\Rtd$} & \multicolumn{2}{c}{$\rhs$} \\
\cmidrule(lr){3-4}\cmidrule(lr){5-6}\cmidrule(lr){7-8}\cmidrule(lr){9-10}
DGP & $R^2$ & $\bar{\hat{R}}$ & $\overline{\text{ESS}}$ & $\bar{\hat{R}}$ & $\overline{\text{ESS}}$ & $\bar{\hat{R}}$ & $\overline{\text{ESS}}$ & $\bar{\hat{R}}$ & $\overline{\text{ESS}}$ \\
\midrule
\multirow{3}{*}{Sparse, correct} & 0.1 & 1.004 & 3,150 & \textbf{1.232} & 1,606 & 1.004 & 2,503 & 1.004 & 2,881 \\
 & 0.5 & 1.004 & 2,400 & \textbf{1.106} & 591 & 1.004 & 1,636 & 1.003 & 1,711 \\
 & 0.9 & 1.003 & 1,988 & \textbf{1.021} & 755 & 1.005 & 1,155 & 1.005 & 1,293 \\
\midrule
\multirow{3}{*}{Dense, correct} & 0.1 & 1.004 & 3,071 & \textbf{1.251} & 1,226 & 1.004 & 2,432 & 1.004 & 2,489 \\
 & 0.5 & 1.004 & 2,522 & \textbf{1.155} & 483 & 1.003 & 1,772 & 1.003 & 1,870 \\
 & 0.9 & 1.003 & 2,016 & \textbf{1.019} & 773 & 1.005 & 1,208 & 1.005 & 1,293 \\
\midrule
\multirow{3}{*}{Sparse, misp.} & 0.1 & 1.004 & 3,087 & \textbf{1.226} & 1,558 & 1.004 & 2,362 & 1.004 & 4,354 \\
 & 0.5 & 1.003 & 2,673 & \textbf{1.027} & 1,002 & 1.003 & 1,972 & 1.003 & 2,385 \\
 & 0.9 & 1.003 & 2,166 & 1.005 & 1,598 & 1.005 & 1,321 & 1.004 & 1,666 \\
\midrule
\multirow{3}{*}{Dense, misp.} & 0.1 & 1.004 & 3,149 & \textbf{1.230} & 1,352 & 1.004 & 2,388 & 1.003 & 2,705 \\
 & 0.5 & 1.003 & 2,710 & \textbf{1.059} & 886 & 1.003 & 2,003 & 1.003 & 2,128 \\
 & 0.9 & 1.003 & 2,360 & 1.005 & 1,486 & 1.004 & 1,411 & 1.004 & 1,548 \\
\bottomrule
\end{tabular}
}
\end{table}

\begin{table}
\caption{MCMC diagnostics for $\boldsymbol{\beta}$ ($p = 25$). For each DGP condition and model: mean (over simulation repetitions) of the per-repetition maximum $\hat{R}$ and minimum bulk ESS across $\beta$ coefficients. Bold entries indicate $\bar{\hat{R}} > 1.01$.\label{tab:mcmc_diagnostics_beta_p25}}
\resizebox{\textwidth}{!}{%
\begin{tabular}{@{}llrrrrrrrr@{}}
\toprule
 &  & \multicolumn{2}{c}{$\Rtdprox$} & \multicolumn{2}{c}{$\nig$} & \multicolumn{2}{c}{$\Rtd$} & \multicolumn{2}{c}{$\rhs$} \\
\cmidrule(lr){3-4}\cmidrule(lr){5-6}\cmidrule(lr){7-8}\cmidrule(lr){9-10}
DGP & $R^2$ & $\bar{\hat{R}}$ & $\overline{\text{ESS}}$ & $\bar{\hat{R}}$ & $\overline{\text{ESS}}$ & $\bar{\hat{R}}$ & $\overline{\text{ESS}}$ & $\bar{\hat{R}}$ & $\overline{\text{ESS}}$ \\
\midrule
\multirow{3}{*}{Sparse, correct} & 0.1 & 1.005 & 2,467 & \textbf{1.257} & 1,515 & 1.005 & 2,873 & 1.005 & 2,205 \\
 & 0.5 & 1.005 & 2,552 & \textbf{1.170} & 593 & 1.005 & 2,197 & 1.005 & 1,595 \\
 & 0.9 & 1.005 & 1,820 & \textbf{1.064} & 333 & 1.005 & 1,283 & 1.007 & 885 \\
\midrule
\multirow{3}{*}{Dense, correct} & 0.1 & 1.005 & 2,514 & \textbf{1.317} & 1,298 & 1.005 & 2,919 & 1.005 & 2,170 \\
 & 0.5 & 1.005 & 2,789 & \textbf{1.244} & 480 & 1.004 & 2,434 & 1.004 & 1,646 \\
 & 0.9 & 1.004 & 2,174 & \textbf{1.091} & 310 & 1.006 & 1,358 & \textbf{1.013} & 847 \\
\midrule
\multirow{3}{*}{Sparse, misp.} & 0.1 & 1.005 & 2,513 & \textbf{1.263} & 1,941 & 1.005 & 2,872 & 1.006 & 3,303 \\
 & 0.5 & 1.005 & 2,996 & \textbf{1.064} & 963 & 1.004 & 2,625 & 1.005 & 2,405 \\
 & 0.9 & 1.005 & 2,228 & \textbf{1.043} & 540 & 1.005 & 1,664 & \textbf{1.011} & 1,236 \\
\midrule
\multirow{3}{*}{Dense, misp.} & 0.1 & 1.005 & 2,543 & \textbf{1.228} & 1,498 & 1.005 & 2,949 & 1.005 & 2,272 \\
 & 0.5 & 1.005 & 3,120 & \textbf{1.115} & 774 & 1.004 & 2,759 & 1.004 & 2,063 \\
 & 0.9 & 1.004 & 2,350 & \textbf{1.032} & 479 & 1.009 & 1,691 & \textbf{1.011} & 1,035 \\
\bottomrule
\end{tabular}
}
\end{table}

\newpage

\subsection{Constraint feasibility: Different DGPs}

\begin{figure}[htbp]
    \centering
    \includegraphics[width=\linewidth]{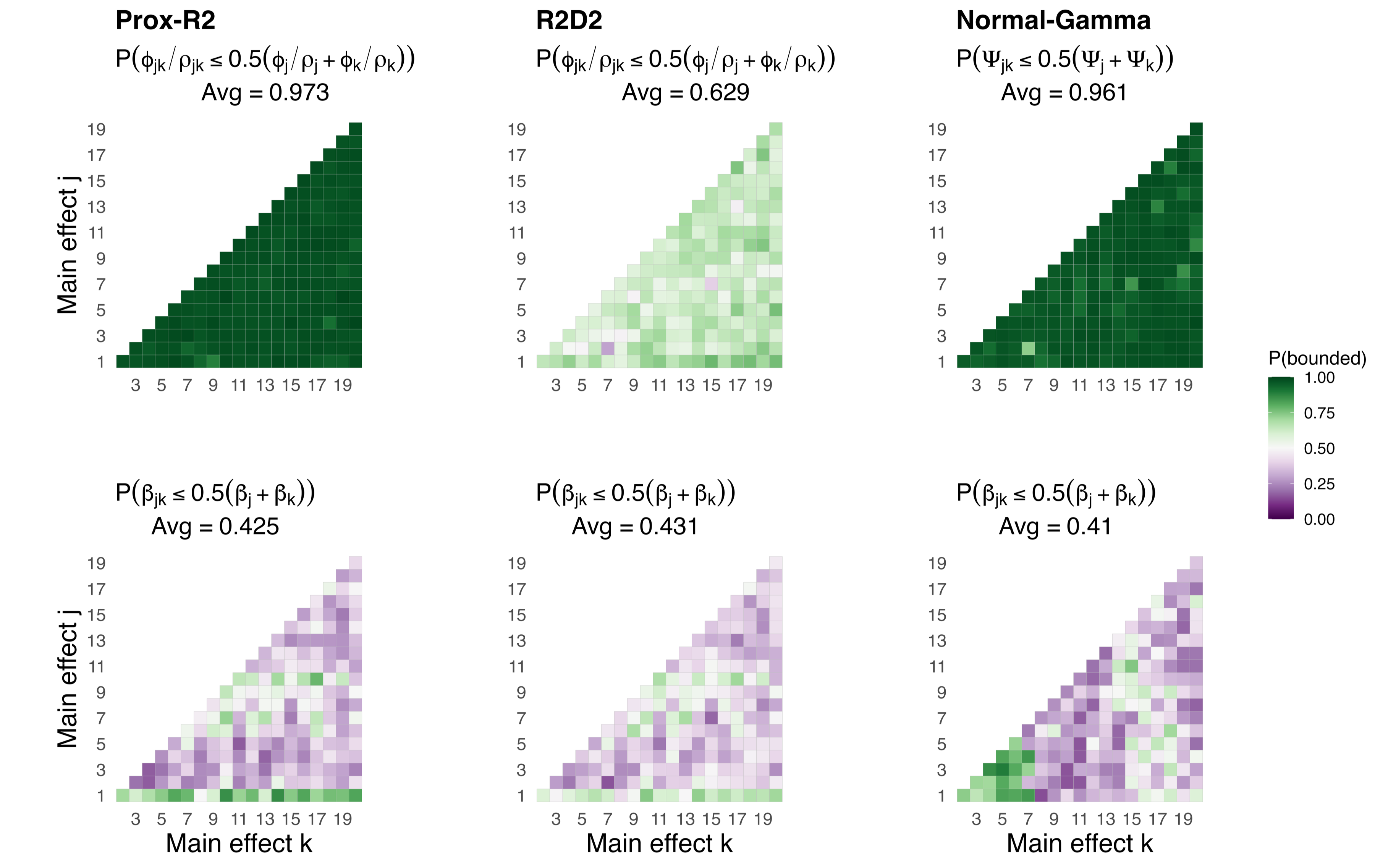}
    \caption{Constraint feasibility heatmaps for $p=20$, sparse, misspecified DGP~(H2) with $R^2 = 0.5$. Panels compare the $\Rtdprox$ model against the unconstrained $\Rtd$ and the $\nig$ models.}
    \label{fig:feasibility_sparse_misp}
\end{figure}

\begin{figure}[htbp]
    \centering
    \includegraphics[width=\linewidth]{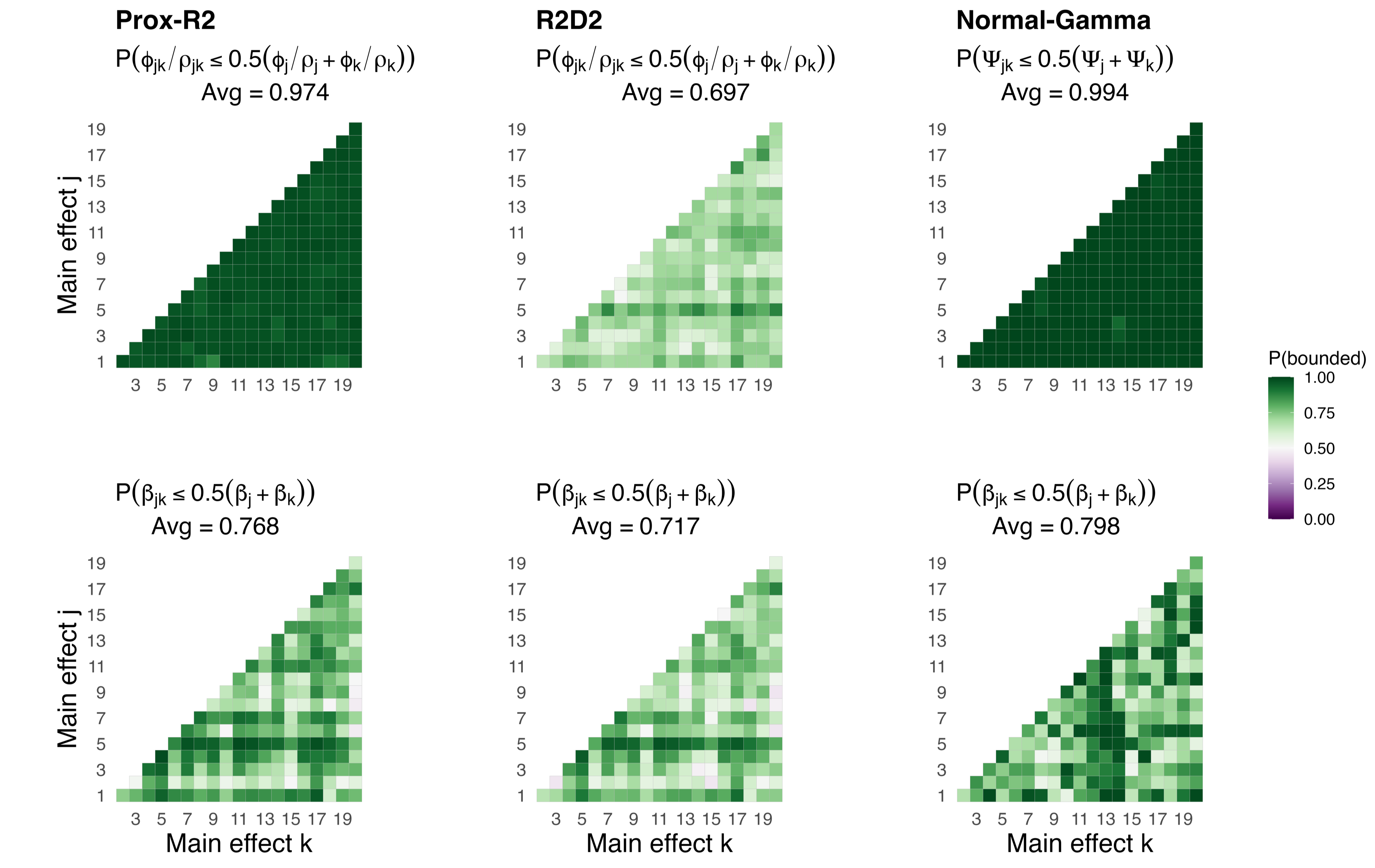}
    \caption{Constraint feasibility heatmaps for $p=20$, dense, correctly specified DGP~(H1) with $R^2 = 0.5$. Panels compare the $\Rtdprox$ model against the unconstrained $\Rtd$ and the $\nig$ models.}
    \label{fig:feasibility_dense_correct}
\end{figure}

\begin{figure}[htbp]
    \centering
    \includegraphics[width=\linewidth]{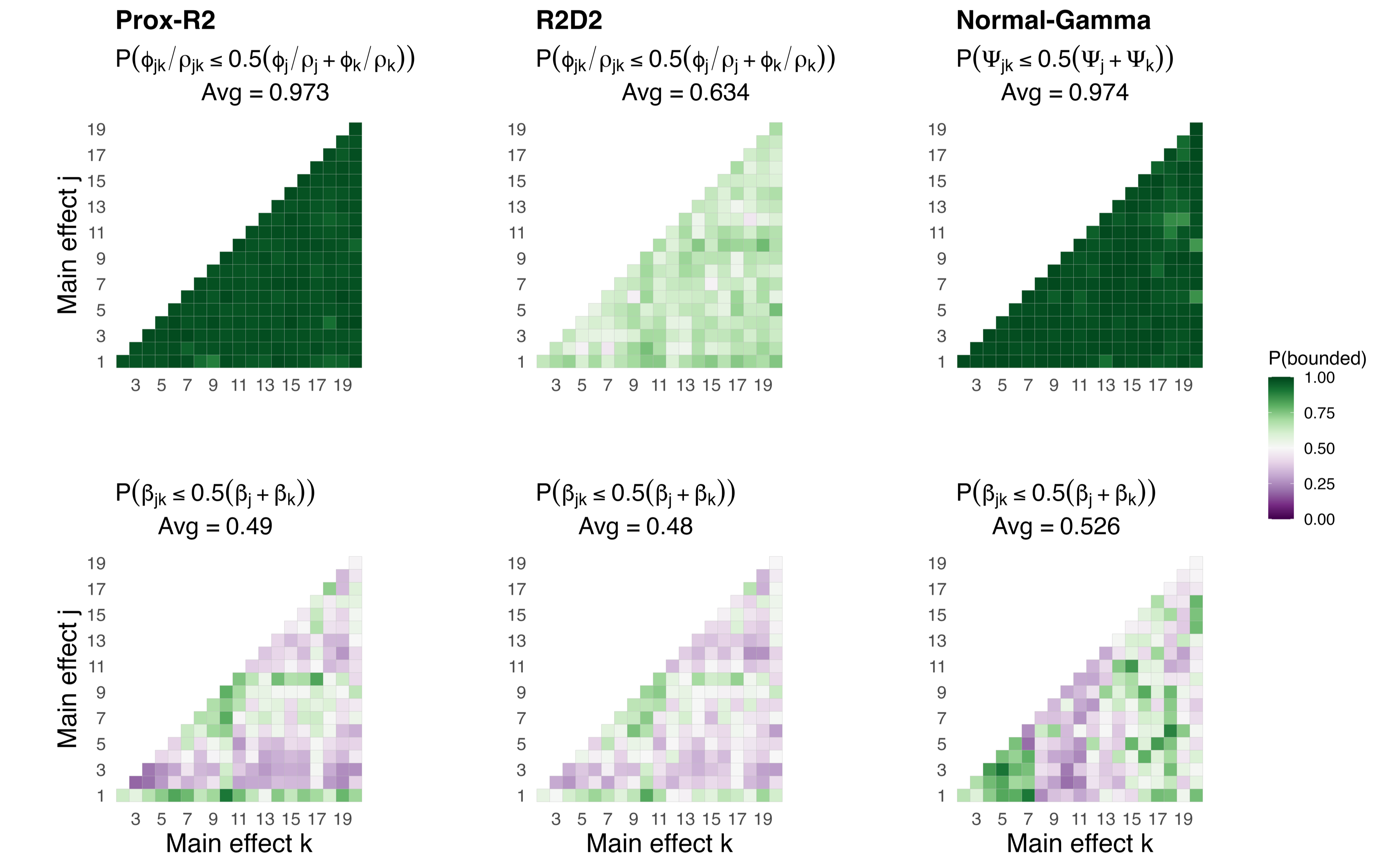}
    \caption{Constraint feasibility heatmaps for $p=20$, dense, misspecified DGP~(H2) with $R^2 = 0.5$. Panels compare the $\Rtdprox$ model against the unconstrained $\Rtd$ and the $\nig$ models.}
    \label{fig:feasibility_dense_misp}
\end{figure}

\begin{table}
\caption{Empirical coverage of the central 90\% posterior interval $[q_{5}, q_{95}]$ for $\boldsymbol{\beta}$ ($p = 10$). Coverage is pooled over coefficients and simulation repetitions, split by coefficient type (main vs.\ interaction) and zero status (true non-zero vs.\ true zero). Bold entries deviate from the nominal 0.90 by more than 0.05.\label{tab:coverage_beta_p10}}
\resizebox{\textwidth}{!}{%
\begin{tabular}{@{}lllrrrrrrrr@{}}
\toprule
 &  &  & \multicolumn{2}{c}{$\Rtdprox$} & \multicolumn{2}{c}{$\nig$} & \multicolumn{2}{c}{$\Rtd$} & \multicolumn{2}{c}{$\rhs$} \\
\cmidrule(lr){4-5}\cmidrule(lr){6-7}\cmidrule(lr){8-9}\cmidrule(lr){10-11}
DGP & $R^2$ & Coef. & non-0 & zero & non-0 & zero & non-0 & zero & non-0 & zero \\
\midrule
\multirow{6}{*}{Sparse, correct} & \multirow{2}{*}{0.1} & main & 0.93 & \textbf{1.00} & 0.87 & \textbf{1.00} & \textbf{0.78} & \textbf{1.00} & \textbf{0.70} & \textbf{1.00} \\
 &  & int. & 0.93 & \textbf{1.00} & \textbf{0.48} & \textbf{1.00} & 0.93 & \textbf{1.00} & 0.90 & \textbf{1.00} \\
 & \multirow{2}{*}{0.5} & main & 0.91 & \textbf{0.98} & 0.91 & \textbf{0.98} & \textbf{0.84} & \textbf{1.00} & \textbf{0.84} & \textbf{1.00} \\
 &  & int. & 0.88 & \textbf{0.99} & \textbf{0.68} & \textbf{1.00} & 0.88 & \textbf{0.99} & 0.87 & \textbf{0.99} \\
 & \multirow{2}{*}{0.9} & main & 0.91 & \textbf{0.96} & 0.91 & 0.95 & 0.90 & \textbf{0.99} & 0.89 & \textbf{0.98} \\
 &  & int. & 0.88 & \textbf{0.97} & \textbf{0.82} & \textbf{0.99} & 0.86 & \textbf{0.98} & 0.86 & \textbf{0.98} \\
\midrule
\multirow{6}{*}{Dense, correct} & \multirow{2}{*}{0.1} & main & \textbf{0.98} & -- & 0.93 & -- & 0.92 & -- & \textbf{0.82} & -- \\
 &  & int. & \textbf{0.95} & -- & \textbf{0.29} & -- & \textbf{0.95} & -- & 0.89 & -- \\
 & \multirow{2}{*}{0.5} & main & 0.95 & -- & 0.90 & -- & 0.90 & -- & 0.88 & -- \\
 &  & int. & 0.88 & -- & \textbf{0.57} & -- & 0.88 & -- & 0.86 & -- \\
 & \multirow{2}{*}{0.9} & main & 0.91 & -- & 0.89 & -- & 0.88 & -- & 0.87 & -- \\
 &  & int. & 0.87 & -- & \textbf{0.78} & -- & 0.85 & -- & \textbf{0.84} & -- \\
\midrule
\multirow{6}{*}{Sparse, misp.} & \multirow{2}{*}{0.1} & main & \textbf{0.99} & \textbf{1.00} & \textbf{0.98} & \textbf{1.00} & \textbf{1.00} & \textbf{1.00} & \textbf{0.99} & \textbf{1.00} \\
 &  & int. & \textbf{1.00} & \textbf{1.00} & \textbf{0.40} & \textbf{1.00} & \textbf{1.00} & \textbf{1.00} & \textbf{1.00} & \textbf{1.00} \\
 & \multirow{2}{*}{0.5} & main & \textbf{0.97} & \textbf{0.99} & \textbf{0.96} & \textbf{0.99} & \textbf{0.96} & \textbf{1.00} & \textbf{0.97} & \textbf{1.00} \\
 &  & int. & \textbf{0.99} & \textbf{0.99} & 0.94 & \textbf{1.00} & \textbf{0.99} & \textbf{0.99} & \textbf{0.99} & \textbf{0.99} \\
 & \multirow{2}{*}{0.9} & main & 0.93 & \textbf{0.96} & 0.91 & 0.94 & 0.91 & \textbf{0.99} & 0.92 & \textbf{0.99} \\
 &  & int. & \textbf{0.97} & \textbf{0.98} & 0.94 & \textbf{0.99} & \textbf{0.96} & \textbf{0.98} & \textbf{0.96} & \textbf{0.98} \\
\midrule
\multirow{6}{*}{Dense, misp.} & \multirow{2}{*}{0.1} & main & \textbf{0.99} & -- & \textbf{0.99} & -- & \textbf{1.00} & -- & \textbf{0.99} & -- \\
 &  & int. & \textbf{1.00} & -- & \textbf{0.56} & -- & \textbf{1.00} & -- & \textbf{0.99} & -- \\
 & \multirow{2}{*}{0.5} & main & \textbf{0.98} & -- & \textbf{0.97} & -- & \textbf{0.98} & -- & \textbf{0.98} & -- \\
 &  & int. & \textbf{1.00} & -- & 0.94 & -- & \textbf{0.99} & -- & \textbf{0.99} & -- \\
 & \multirow{2}{*}{0.9} & main & 0.95 & -- & 0.92 & -- & 0.95 & -- & 0.94 & -- \\
 &  & int. & \textbf{0.98} & -- & \textbf{0.95} & -- & \textbf{0.96} & -- & \textbf{0.95} & -- \\
\bottomrule
\end{tabular}
}
\end{table}

\begin{table}
\caption{Empirical coverage of the central 90\% posterior interval $[q_{5}, q_{95}]$ for $\boldsymbol{\beta}$ ($p = 15$). Coverage is pooled over coefficients and simulation repetitions, split by coefficient type (main vs.\ interaction) and zero status (true non-zero vs.\ true zero). Bold entries deviate from the nominal 0.90 by more than 0.05.\label{tab:coverage_beta_p15}}
\resizebox{\textwidth}{!}{%
\begin{tabular}{@{}lllrrrrrrrr@{}}
\toprule
 &  &  & \multicolumn{2}{c}{$\Rtdprox$} & \multicolumn{2}{c}{$\nig$} & \multicolumn{2}{c}{$\Rtd$} & \multicolumn{2}{c}{$\rhs$} \\
\cmidrule(lr){4-5}\cmidrule(lr){6-7}\cmidrule(lr){8-9}\cmidrule(lr){10-11}
DGP & $R^2$ & Coef. & non-0 & zero & non-0 & zero & non-0 & zero & non-0 & zero \\
\midrule
\multirow{6}{*}{Sparse, correct} & \multirow{2}{*}{0.1} & main & \textbf{0.77} & \textbf{1.00} & \textbf{0.72} & \textbf{1.00} & \textbf{0.46} & \textbf{1.00} & \textbf{0.39} & \textbf{1.00} \\
 &  & int. & 0.88 & \textbf{1.00} & \textbf{0.35} & \textbf{1.00} & 0.89 & \textbf{1.00} & \textbf{0.83} & \textbf{1.00} \\
 & \multirow{2}{*}{0.5} & main & \textbf{0.84} & \textbf{1.00} & 0.89 & \textbf{1.00} & \textbf{0.73} & \textbf{1.00} & \textbf{0.73} & \textbf{1.00} \\
 &  & int. & \textbf{0.83} & \textbf{1.00} & \textbf{0.57} & \textbf{1.00} & \textbf{0.84} & \textbf{1.00} & \textbf{0.82} & \textbf{1.00} \\
 & \multirow{2}{*}{0.9} & main & 0.90 & \textbf{0.98} & 0.91 & \textbf{0.97} & 0.88 & \textbf{1.00} & 0.86 & \textbf{1.00} \\
 &  & int. & \textbf{0.84} & \textbf{1.00} & \textbf{0.76} & \textbf{1.00} & \textbf{0.84} & \textbf{1.00} & \textbf{0.83} & \textbf{1.00} \\
\midrule
\multirow{6}{*}{Dense, correct} & \multirow{2}{*}{0.1} & main & \textbf{0.99} & -- & \textbf{0.95} & -- & 0.89 & -- & \textbf{0.71} & -- \\
 &  & int. & 0.94 & -- & \textbf{0.25} & -- & 0.95 & -- & \textbf{0.85} & -- \\
 & \multirow{2}{*}{0.5} & main & \textbf{0.97} & -- & 0.95 & -- & 0.91 & -- & 0.87 & -- \\
 &  & int. & 0.87 & -- & \textbf{0.50} & -- & 0.88 & -- & 0.85 & -- \\
 & \multirow{2}{*}{0.9} & main & \textbf{0.96} & -- & 0.93 & -- & 0.92 & -- & 0.89 & -- \\
 &  & int. & 0.86 & -- & \textbf{0.75} & -- & 0.86 & -- & \textbf{0.84} & -- \\
\midrule
\multirow{6}{*}{Sparse, misp.} & \multirow{2}{*}{0.1} & main & \textbf{1.00} & \textbf{1.00} & \textbf{0.99} & \textbf{1.00} & \textbf{1.00} & \textbf{1.00} & \textbf{1.00} & \textbf{1.00} \\
 &  & int. & \textbf{1.00} & \textbf{1.00} & \textbf{0.37} & \textbf{1.00} & \textbf{1.00} & \textbf{1.00} & \textbf{1.00} & \textbf{1.00} \\
 & \multirow{2}{*}{0.5} & main & \textbf{0.98} & \textbf{1.00} & \textbf{0.97} & \textbf{1.00} & \textbf{0.97} & \textbf{1.00} & \textbf{0.98} & \textbf{1.00} \\
 &  & int. & \textbf{1.00} & \textbf{1.00} & 0.92 & \textbf{1.00} & \textbf{0.99} & \textbf{1.00} & \textbf{1.00} & \textbf{1.00} \\
 & \multirow{2}{*}{0.9} & main & \textbf{0.96} & \textbf{0.98} & 0.94 & \textbf{0.98} & \textbf{0.95} & \textbf{1.00} & \textbf{0.95} & \textbf{1.00} \\
 &  & int. & \textbf{0.99} & \textbf{1.00} & \textbf{0.96} & \textbf{1.00} & \textbf{0.98} & \textbf{1.00} & \textbf{0.98} & \textbf{1.00} \\
\midrule
\multirow{6}{*}{Dense, misp.} & \multirow{2}{*}{0.1} & main & \textbf{1.00} & -- & \textbf{1.00} & -- & \textbf{1.00} & -- & \textbf{1.00} & -- \\
 &  & int. & \textbf{1.00} & -- & \textbf{0.66} & -- & \textbf{1.00} & -- & \textbf{1.00} & -- \\
 & \multirow{2}{*}{0.5} & main & \textbf{0.99} & -- & \textbf{0.98} & -- & \textbf{0.99} & -- & \textbf{0.99} & -- \\
 &  & int. & \textbf{1.00} & -- & 0.95 & -- & \textbf{1.00} & -- & \textbf{1.00} & -- \\
 & \multirow{2}{*}{0.9} & main & \textbf{0.98} & -- & \textbf{0.96} & -- & \textbf{0.97} & -- & \textbf{0.97} & -- \\
 &  & int. & \textbf{0.99} & -- & \textbf{0.98} & -- & \textbf{0.99} & -- & \textbf{0.98} & -- \\
\bottomrule
\end{tabular}
}
\end{table}

\begin{table}
\caption{Empirical coverage of the central 90\% posterior interval $[q_{5}, q_{95}]$ for $\boldsymbol{\beta}$ ($p = 25$). Coverage is pooled over coefficients and simulation repetitions, split by coefficient type (main vs.\ interaction) and zero status (true non-zero vs.\ true zero). Bold entries deviate from the nominal 0.90 by more than 0.05.\label{tab:coverage_beta_p25}}
\resizebox{\textwidth}{!}{%
\begin{tabular}{@{}lllrrrrrrrr@{}}
\toprule
 &  &  & \multicolumn{2}{c}{$\Rtdprox$} & \multicolumn{2}{c}{$\nig$} & \multicolumn{2}{c}{$\Rtd$} & \multicolumn{2}{c}{$\rhs$} \\
\cmidrule(lr){4-5}\cmidrule(lr){6-7}\cmidrule(lr){8-9}\cmidrule(lr){10-11}
DGP & $R^2$ & Coef. & non-0 & zero & non-0 & zero & non-0 & zero & non-0 & zero \\
\midrule
\multirow{6}{*}{Sparse, correct} & \multirow{2}{*}{0.1} & main & \textbf{0.67} & \textbf{1.00} & \textbf{0.74} & \textbf{1.00} & \textbf{0.26} & \textbf{1.00} & \textbf{0.28} & \textbf{1.00} \\
 &  & int. & \textbf{0.84} & \textbf{1.00} & \textbf{0.37} & \textbf{1.00} & 0.85 & \textbf{1.00} & \textbf{0.77} & \textbf{1.00} \\
 & \multirow{2}{*}{0.5} & main & \textbf{0.67} & \textbf{1.00} & \textbf{0.82} & \textbf{1.00} & \textbf{0.45} & \textbf{1.00} & \textbf{0.55} & \textbf{1.00} \\
 &  & int. & \textbf{0.72} & \textbf{1.00} & \textbf{0.47} & \textbf{1.00} & \textbf{0.74} & \textbf{1.00} & \textbf{0.72} & \textbf{1.00} \\
 & \multirow{2}{*}{0.9} & main & \textbf{0.79} & \textbf{1.00} & 0.86 & \textbf{0.99} & \textbf{0.73} & \textbf{1.00} & \textbf{0.74} & \textbf{1.00} \\
 &  & int. & \textbf{0.71} & \textbf{1.00} & \textbf{0.65} & \textbf{1.00} & \textbf{0.74} & \textbf{1.00} & \textbf{0.71} & \textbf{1.00} \\
\midrule
\multirow{6}{*}{Dense, correct} & \multirow{2}{*}{0.1} & main & \textbf{1.00} & -- & \textbf{0.99} & -- & 0.93 & -- & \textbf{0.71} & -- \\
 &  & int. & \textbf{0.96} & -- & \textbf{0.41} & -- & \textbf{0.97} & -- & 0.87 & -- \\
 & \multirow{2}{*}{0.5} & main & \textbf{0.98} & -- & \textbf{0.96} & -- & \textbf{0.85} & -- & \textbf{0.79} & -- \\
 &  & int. & 0.88 & -- & \textbf{0.46} & -- & 0.89 & -- & \textbf{0.85} & -- \\
 & \multirow{2}{*}{0.9} & main & \textbf{0.97} & -- & \textbf{0.95} & -- & 0.90 & -- & 0.87 & -- \\
 &  & int. & 0.86 & -- & \textbf{0.67} & -- & 0.87 & -- & \textbf{0.84} & -- \\
\midrule
\multirow{6}{*}{Sparse, misp.} & \multirow{2}{*}{0.1} & main & \textbf{1.00} & \textbf{1.00} & \textbf{1.00} & \textbf{1.00} & \textbf{1.00} & \textbf{1.00} & \textbf{1.00} & \textbf{1.00} \\
 &  & int. & \textbf{1.00} & \textbf{1.00} & \textbf{0.31} & \textbf{1.00} & \textbf{1.00} & \textbf{1.00} & \textbf{1.00} & \textbf{1.00} \\
 & \multirow{2}{*}{0.5} & main & \textbf{0.99} & \textbf{1.00} & \textbf{0.98} & \textbf{1.00} & \textbf{0.99} & \textbf{1.00} & \textbf{0.99} & \textbf{1.00} \\
 &  & int. & \textbf{1.00} & \textbf{1.00} & 0.89 & \textbf{1.00} & \textbf{0.99} & \textbf{1.00} & \textbf{1.00} & \textbf{1.00} \\
 & \multirow{2}{*}{0.9} & main & \textbf{0.98} & \textbf{1.00} & \textbf{0.96} & \textbf{0.99} & \textbf{0.96} & \textbf{1.00} & \textbf{0.97} & \textbf{1.00} \\
 &  & int. & \textbf{0.99} & \textbf{1.00} & \textbf{0.95} & \textbf{1.00} & \textbf{0.99} & \textbf{1.00} & \textbf{0.99} & \textbf{1.00} \\
\midrule
\multirow{6}{*}{Dense, misp.} & \multirow{2}{*}{0.1} & main & \textbf{1.00} & -- & \textbf{1.00} & -- & \textbf{1.00} & -- & \textbf{1.00} & -- \\
 &  & int. & \textbf{1.00} & -- & \textbf{0.85} & -- & \textbf{1.00} & -- & \textbf{1.00} & -- \\
 & \multirow{2}{*}{0.5} & main & \textbf{1.00} & -- & \textbf{1.00} & -- & \textbf{1.00} & -- & \textbf{1.00} & -- \\
 &  & int. & \textbf{1.00} & -- & 0.95 & -- & \textbf{1.00} & -- & \textbf{1.00} & -- \\
 & \multirow{2}{*}{0.9} & main & \textbf{1.00} & -- & \textbf{0.98} & -- & \textbf{1.00} & -- & \textbf{0.99} & -- \\
 &  & int. & \textbf{1.00} & -- & \textbf{0.99} & -- & \textbf{1.00} & -- & \textbf{1.00} & -- \\
\bottomrule
\end{tabular}
}
\end{table}

\clearpage

\subsection{Predictive performance on benchmark data}
\label{app:predperf}

\begin{table}
\caption{Hold-out predictive performance on three real datasets (75\% of each held out). RMSE, elpd, and $R^2$ are evaluated on the held-out test set. RMSE: posterior-mean
predictions.
train-$R^2$: posterior-mean Bayesian $R^2$.
test-$R^2$: hold-out Bayesian $R^2$.\label{tab:predperf}}
\resizebox{\textwidth}{!}{%
\begin{tabular}{lrrrrr}
\toprule
Model & RMSE &  $\text{train-}R^2$ & $\text{test-}R^2$ & $\text{elpd}_{\text{diff}}$ & $\text{SE}_{\text{diff}}$ \\
\midrule
\multicolumn{6}{l}{\textit{Boston Housing}} \\
\cmidrule(l){1-1}
\quad $\Rtdprox$ & 4.0 & 0.86 & 0.79 & 2.9 & 1.4 \\
\quad $\nig$ & 4.53 & 0.87 & 0.73 & 7.5 & 2.9 \\
\quad $\Rtd$ & 4.30 &  0.87 & 0.74 & 0.0 & 0.0 \\
\quad $\rhs$ & 4.84 &  0.87 & 0.70 & 2.4 & 2.2 \\
\addlinespace
\multicolumn{6}{l}{\textit{Ozone}} \\
\cmidrule(l){1-1}
\quad $\Rtdprox$ & 4.21 &  0.80 & 0.73 & 0.8 & 0.8 \\
\quad $\nig$ & 4.24 &  0.81 & 0.73 & -0.4 & 1.3 \\
\quad $\Rtd$ & 4.21 &  0.81 & 0.73 & 0.0 & 0.0 \\
\quad $\rhs$ & 4.22 &  0.81 & 0.73 & 0.3 & 0.8 \\
\addlinespace
\multicolumn{6}{l}{\textit{Blood Glucose}} \\
\cmidrule(l){1-1}
\quad $\Rtdprox$ & 57.79 &  0.61 & 0.45 & 2.2 & 1.0 \\
\quad $\nig$ & 57.09 &  0.58 & 0.47 & 5.7 & 2.7 \\
\quad $\Rtd$ & 58.16 &  0.60 & 0.45 & 0.0 & 0.0 \\
\quad $\rhs$ & 57.91 &  0.56 & 0.45 & 1.4 & 1.3 \\
\bottomrule
\end{tabular}
}
\end{table}

We apply the interaction models alongside the same set of competing priors as in the simulation section to the Boston Housing \citep{harrison_hedonic_1978}, Ozone \citep{breiman_estimating_1985}, and Blood Glucose \citep{efron_least_2004} data sets. In each data set, the model includes $p$ main effects and $p_{\mathrm{int}} = \binom{p}{2}$ interaction effects, giving $(p,\,p_{\mathrm{int}}) = (13,\,78)$ for Boston Housing, $(9,\,36)$ for Ozone, and $(10,\,45)$ for Blood Glucose. To make estimation more challenging, we hold out $75\%$ of the observations as a true out-of-sample test set and fit each model on the remaining $25\%$, so that $p_{\mathrm{int}}$ is large relative to the training sample. Table~\ref{tab:predperf} summarises the results.

Overall, the results confirm the good performance of the $\Rtdprox$ model. While competitive across all data sets, it outperforms on point predictions for the Boston Housing data set and shows the largest fraction of explained variance on the test-data. 

As in the simulation study of Section~\ref{sec:experiments}, and consistent with the diagnostics reported in Tables~\ref{tab:mcmc_diagnostics_beta_p10}--\ref{tab:mcmc_diagnostics_beta_p25}, the $\Rtdprox$ model again shows a large benefit in terms of sampling efficiency.

\newpage
\section{Proofs}

\subsection{Marginal prior behaviour}

Throughout this subsection, let $\gamma\mid\omega^2\sim\mathsf{Normal}(0,\omega^2)$ and $\omega^2\sim\mathsf{BetaPrime}(a,b)$, independently of $\phi_i$. Then $\beta_i$ has the product representation $\beta_i\overset{d}=\sqrt{\phi_i}\,\gamma$.

\begin{proposition}
Let $\omega^2 \sim \mathsf{BetaPrime}(a,b)$ with $a>0$ and $b>0$. Let $\beta_i \mid \phi_i , \omega^2 \sim \mathsf{Normal} (0,\phi_i \omega^2)$ and let $\phi_i \sim p(\phi_i)$ where $\phi_i \in[0,1]$ is a continuous random variable independent of $\omega^2$. Then, as $|s|\to\infty$,
\begin{equation}
p_{\beta_i}(s)\sim C_{a,b}\,\mathbb{E}[\phi_i^{\,b}]\,|s|^{-(2b+1)}.    
\end{equation}
where $C_{a,b}=\frac{2^{b}\Gamma \left(b+\tfrac12\right)}{\sqrt{\pi}\,B(a,b)}$.
\end{proposition}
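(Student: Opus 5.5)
The plan is to go through the product representation $\beta_i\overset{d}{=}\sqrt{\phi_i}\,\gamma$: first establish the tail of $p_\gamma$, then transfer it through the independent bounded multiplier $\sqrt{\phi_i}$. Conditioning on $\phi_i=\phi$ gives
\[
p_{\beta_i}(s)=\int_0^1 \phi^{-1/2}\,p_\gamma\!\left(s/\sqrt{\phi}\right)p(\phi)\,d\phi .
\]
So if $p_\gamma(t)\sim C_{a,b}|t|^{-(2b+1)}$, then pointwise in $\phi$ we have $\phi^{-1/2}p_\gamma(s/\sqrt\phi)\,|s|^{2b+1}\to C_{a,b}\phi^{-1/2}\phi^{b+1/2}=C_{a,b}\phi^{b}$. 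Integrating this limit against $p(\phi)$ yields exactly $C_{a,b}\mathbb E[\phi_i^b]$.

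\emph{Step 1 (tail of $\gamma$).} Write
\[
p_\gamma(t)=\frac{1}{B(a,b)}\int_0^\infty (2\pi w)^{-1/2}e^{-t^2/(2w)}\,w^{a-1}(1+w)^{-a-b}\,dw .
\]
Substitute $w=t^2 u$. This gives
\[
|t|^{2a-1}\int_0^\infty (2\pi u)^{-1/2}e^{-1/(2u)}u^{a-1}(1+t^2u)^{-a-b}\,du .
\]
Multiply by $|t|^{2b+1}$, so that the integrand carries the factor $t^{2a+2b}(1+t^2u)^{-a-b}=(t^{-2}+u)^{-a-b}$. This factor increases to $u^{-a-b}$ as $|t|\to\infty$, and monotone convergence applies. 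The limit is
\[
\frac{1}{B(a,b)}\int_0^\infty (2\pi)^{-1/2}u^{-b-3/2}e^{-1/(2u)}\,du .
\]
With $v=1/(2u)$ this equals $\frac{2^{b}\Gamma(b+1/2)}{\sqrt{\pi}\,B(a,b)}$, which matches the stated constant.

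\emph{Step 2 (transfer through $\phi_i$).} This is the main obstacle, since we need a dominating function to exchange the limit $|s|\to\infty$ with the $\phi$-integral. It requires the global bound $\sup_t |t|^{2b+1}p_\gamma(t)\le K<\infty$. This bound follows from Step 1 for large $|t|$, and for bounded $|t|$ from the fact that the integrand in Step 1 is dominated by $(t^{-2}+u)^{-a-b}\le u^{-a-b}$ only for $|t|\ge1$. So I will split the analysis at $|t|=1$, as follows.

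For $|t|\ge1$, the monotone bound gives $|t|^{2b+1}p_\gamma(t)\le C_{a,b}$ directly.

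For $|t|<1$, we have $s/\sqrt\phi$ with $|s|$ large. This means $\phi<s^2$ trivially, so in fact $|s/\sqrt\phi|\ge|s|\ge1$ always holds because $\phi\le1$. Hence the region $|t|<1$ never arises once $|s|\ge1$, and the bound
\[
|s|^{2b+1}\phi^{-1/2}p_\gamma(s/\sqrt\phi)\le C_{a,b}\,\phi^{b}\le C_{a,b}
\]
holds uniformly. This is integrable against $p(\phi)$, and dominated convergence gives the claim.

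The only delicate points are therefore the monotonicity in Step 1 and the observation $\phi\le1$ ensuring $|s|/\sqrt{\phi}\ge|s|$. If $\phi_i$ has an atom or no density, the same argument works with $p(\phi)\,d\phi$ replaced by the law of $\phi_i$. Finally, $\mathbb E[\phi_i^b]>0$ because $\phi_i>0$ with positive probability, so the asymptotic equivalence is nondegenerate.
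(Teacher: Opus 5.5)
Your proof is correct and follows essentially the paper's route. The paper also uses an exact rescaling of the beta-prime mixture integral to get both the limit $C_{a,b}$ and the uniform bound $|t|^{2b+1}p_\gamma(t)\le C_{a,b}$. It then applies dominated convergence under the law of $\phi_i$ with dominating function $C_{a,b}\phi^b$; the only difference is that it substitutes $u=s^2/(2t)$ and uses dominated convergence where you substitute $w=t^2u$ and use monotone convergence. One small correction: $(t^{-2}+u)^{-a-b}\le u^{-a-b}$ holds for every $t\neq 0$, not only for $|t|\ge 1$, so the uniform bound holds on all of $t\neq0$ and your split at $|t|=1$ (with the observation $|s|/\sqrt{\phi}\ge|s|$) is unnecessary, though harmless.
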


\begin{proof}

The beta-prime density for $\omega^2$ is
\begin{equation}
f_{\omega^2}(t) =\frac{t^{a-1}(1+t)^{-(a+b)}}{B(a,b)}, \qquad t>0.
\end{equation}
Integrating the conditional normal density of $\gamma$ over $\omega^2$
\begin{equation}
p_\gamma(s)
=\int_0^\infty
\frac{1}{\sqrt{2\pi t}}
\exp\left(-\frac{s^2}{2t}\right)
f_{\omega^2}(t)\,dt.
\end{equation}
For $s \neq 0$, we let $u = s^2/(2t)$ and have so that $t=s^2/(2u)$ and $dt=-s^2(2u^2)^{-1}du$. Using
\begin{equation}
f_{\omega^2}(t)
=\frac{t^{-(b+1)}}{B(a,b)}(1+t^{-1})^{-(a+b)},
\end{equation}
and simplifying all powers of $s$ and $u$ gives the identity
\begin{equation}
\label{eq:gamma-tail-exact}
|s|^{2b+1}p_\gamma(s) =\frac{2^b}{\sqrt\pi B(a,b)} \int_0^\infty e^{-u}u^{b-1/2} \left(1+\frac{2u}{s^2}\right)^{-(a+b)}du.
\end{equation}
To justify the limit, take an arbitrary sequence $\{s_m\}$ with $|s_m|\to\infty$ and define, for $u>0$,
\begin{equation}
f_m(u) =e^{-u}u^{b-1/2} \left(1+\frac{2u}{s_m^2}\right)^{-(a+b)}.
\end{equation}
For each fixed $u>0$,
\begin{equation}
f_m(u)\longrightarrow e^{-u}u^{b-1/2}.
\end{equation}
Also, $0\leq f_m(u)\leq g(u)$ with $g(u)=e^{-u}u^{b-1/2},$ since the final factor in $f_m$ lies in $(0,1]$. Moreover, the dominating function is integrable since $\int_0^\infty g(u)\,du=\Gamma(b+1/2)<\infty$ for $b >0$. By dominated convergence,
\begin{equation}
\begin{aligned}
\lim_{m\to\infty}|s_m|^{2b+1}p_\gamma(s_m)  &=\frac{2^b}{\sqrt\pi B(a,b)} \lim_{m\to\infty}\int_0^\infty f_m(u) du   =\frac{2^b}{\sqrt\pi B(a,b)} \int_0^\infty\lim_{m\to\infty}f_m(u) du\\
&=\frac{2^b\Gamma(b+1/2)}{\sqrt\pi B(a,b)} =C_{a,b},
\end{aligned}
\end{equation}

showing that $\lim_{|s| \to \infty} |s|^{2b+1} p_{\gamma_i}(s) \sim C_{a,b}$.

Conditioning on $\phi_i = \phi$ leads to 
\begin{equation}
p_{\beta_i}(s\mid\phi) =\phi^{-1/2}p_\gamma(s/\sqrt\phi).
\end{equation}
Now define
\begin{equation}
R(r)=r^{2b+1}p_\gamma(r), \qquad r>0.
\end{equation}
Equation~\eqref{eq:gamma-tail-exact} implies both $R(r)\to C_{a,b}$ as $r\to\infty$ and $0\le R(r)\le C_{a,b}$ for every $r>0$. Therefore, 
\begin{equation}
|s|^{2b+1}p_{\beta_i}(s) =\mathbb{E}\left[ \phi_i^b R\left(\frac{|s|}{\sqrt{\phi_i}}\right) \right].
\end{equation}
Now, let $|s_m|$ be an arbitrary sequence such that $|s_m|\to\infty$ and define  
\begin{equation}
H_m(\phi) =\phi^b R\left(\frac{|s_m|}{\sqrt\phi}\right).
\end{equation}
For every fixed $\phi\in(0,1]$,
\begin{equation}
H_m(\phi)\longrightarrow C_{a,b}\phi^b.
\end{equation}
Moreover
\begin{equation}
0\le H_m(\phi)\le C_{a,b}\phi^b\le C_{a,b}.
\end{equation}
Hence the $C_{a,b}$ is an integrable dominating function under the probability law of $\phi_i$. By dominated convergence,
\begin{equation}
\begin{aligned}
\lim_{m\to\infty}|s_m|^{2b+1}p_{\beta_i}(s_m) &=\lim_{m\to\infty}\mathbb{E}\{H_m(\phi_i)\} =  \mathbb{E}\left\{\lim_{m\to\infty}H_m(\phi_i)\right\}=C_{a,b}\mathbb{E}(\phi_i^b),
\end{aligned}
\end{equation}
which is the required tail asymptotic.
\end{proof}

We now demonstrate Proposition~\ref{prop:origin}. We first show the following lemma, which exhibits the negative moments needed when the allocation $\phi$ dominates.

\begin{lemma}
Let $\gamma\mid\omega^2\sim\mathsf{Normal}(0,\omega^2)$ and $\omega^2\sim\mathsf{BetaPrime}(a,b)$, where $a,b>0$. If $0<a<1/2$, then $p_\gamma(s)\sim A_{a,b}|s|^{2a-1}$,
where $A_{a,b} > 0$. Moreover, for $0<c<\min(a,1/2)$, $\mathbb E|\gamma|^{-2c} < \infty$
\end{lemma}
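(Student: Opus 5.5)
We start from the integral representation of $p_\gamma$ already used in the proof of Proposition~\ref{prop:tail}, but now we substitute so as to expose the behaviour as $s\to0$. Writing
\[
p_\gamma(s)=\frac{1}{B(a,b)}\int_0^\infty (2\pi t)^{-1/2}e^{-s^2/(2t)}\,t^{a-1}(1+t)^{-(a+b)}\,dt ,
\]
we set $t=s^2 v$ for $s\neq0$, so that $dt=s^2\,dv$. Collecting powers gives the exact identity
\[
|s|^{1-2a}p_\gamma(s)=\frac{1}{\sqrt{2\pi}\,B(a,b)}\int_0^\infty v^{a-3/2}e^{-1/(2v)}\,(1+s^2v)^{-(a+b)}\,dv .
\]
As $s\to0$ the integrand increases pointwise to $v^{a-3/2}e^{-1/(2v)}$, and it is dominated by this function because the last factor lies in $(0,1]$. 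The dominating function is integrable near $0$ thanks to the factor $e^{-1/(2v)}$. Near $\infty$ it behaves like $v^{a-3/2}$, which is integrable exactly when $a<1/2$; this is the only place the hypothesis enters.

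Applying dominated convergence along an arbitrary sequence $s_m\to0$ then gives the first claim with
\[
A_{a,b}=\frac{1}{\sqrt{2\pi}B(a,b)}\int_0^\infty v^{a-3/2}e^{-1/(2v)}dv .
\]
Substituting $w=1/(2v)$ shows that this integral equals $2^{1/2-a}\Gamma(1/2-a)$, which is positive and finite. Hence $A_{a,b}=2^{-a}\Gamma(1/2-a)/(\sqrt{\pi}B(a,b))$.

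For the negative moment we avoid the density and condition on $\omega^2$ instead. Given $\omega^2$, $\gamma=\omega Z$ with $Z\sim\mathsf{Normal}(0,1)$ independent of $\omega$, so
\[
\mathbb E|\gamma|^{-2c}=\mathbb E\bigl[(\omega^2)^{-c}\bigr]\;\mathbb E|Z|^{-2c}
\]
by Tonelli. The Gaussian factor is finite iff $2c<1$, that is $c<1/2$, since $|z|^{-2c}$ is integrable at the origin in one dimension. The beta-prime factor is $\int_0^\infty t^{a-c-1}(1+t)^{-(a+b)}dt/B(a,b)=B(a-c,b+c)/B(a,b)$, which is finite iff $c<a$; the tail exponent $-(b+c+1)$ is always integrable. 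Together these give finiteness for $0<c<\min(a,1/2)$.

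The main obstacle is the domination step in the origin asymptotic: we must check that $v^{a-3/2}$ is integrable at infinity, which forces $a<1/2$. We must also make sure the constant is identified correctly rather than merely bounded. Both come from the same explicit Gamma-integral computation, so once the substitution $t=s^2v$ is in place the rest is routine.
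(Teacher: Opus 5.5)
Your proposal is correct and follows the paper's proof. Your substitution $t=s^2v$ is the paper's $u=s^2/(2t)$ up to the reciprocal change $u=1/(2v)$, so the dominated-convergence step, the role of $a<1/2$, and the constant $A_{a,b}=2^{-a}\Gamma(1/2-a)/\{\sqrt{\pi}\,B(a,b)\}$ all coincide, and the negative-moment part is the same Tonelli factorization into $B(a-c,b+c)/B(a,b)$ times $\mathbb E|Z|^{-2c}$ (the paper's proof additionally records $p_\gamma(s)=O(|s|^{2c-1})$ as $s\to0$ for later use in Proposition~\ref{prop:origin}, which this statement does not require).
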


\begin{proof}
Starting from the normal mixture integral presented above and letting $u=s^2/(2t)$ yields
\begin{equation}
\label{eq:gamma-origin-exact}
p_\gamma(s) =\frac{2^{-a}|s|^{2a-1}}{\sqrt\pi B(a,b)} \int_0^\infty e^{-u}u^{-a-1/2} \left(1+\frac{s^2}{2u}\right)^{-(a+b)}du.
\end{equation}
Let $\{s_m\}$ be an arbitraty sequence of nonzero numbers satisfying $s_m\to0$. Define
\begin{equation}
f_m(u)=e^{-u}u^{-a-1/2} \left(1+\frac{s_m^2}{2u}\right)^{-(a+b)}.
\end{equation}
For every $u>0$,
\begin{equation}
f_m(u)\longrightarrow e^{-u}u^{-a-1/2}.
\end{equation}
Additionally,
\begin{equation}
0\le f_m(u)\le g(u), \qquad g(u)=e^{-u}u^{-a-1/2}.
\end{equation}
The condition $a<1/2$ makes $g$ integrable both near zero and at infinity since $\int_0^\infty g(u)\,du=\Gamma(1/2-a)<\infty$. Dominated convergence yields
\begin{equation}
\begin{aligned}
\lim_{m\to\infty}|s_m|^{1-2a}p_\gamma(s_m) 
&=\frac{2^{-a}}{\sqrt\pi B(a,b)} \lim_{m\to\infty}\int_0^\infty f_m(u) du =\frac{2^{-a}}{\sqrt\pi B(a,b)} \int_0^\infty\lim_{m\to\infty}f_m(u) du\\
&=\frac{2^{-a}\Gamma(1/2-a)}{\sqrt\pi B(a,b)} =A_{a,b}.
\end{aligned}
\end{equation}
Since $\gamma=\sqrt{\omega^2}Z$, where $Z\sim\mathsf{Normal}(0,1)$ is independent of $\omega^2$, by Tonelli's theorem 
\begin{equation}
\mathbb{E}|\gamma|^{-2c} = \mathbb{E} \{(\omega^2)^{-c}\} \mathbb{E}|Z|^{-2c}.
\end{equation}
The beta-prime factor is given by
\begin{equation}
\begin{aligned}
\mathbb{E}\{(\omega^2)^{-c}\} &=\frac{1}{B(a,b)} \int_0^\infty t^{a-c-1}(1+t)^{-(a+b)}dt =\frac{B(a-c,b+c)}{B(a,b)},
\end{aligned}
\end{equation}
which is finite when $c<a$. Using the symmetry of $Z$ and letting $x = z^2/2$ we have
\begin{equation}
\mathbb{E}|Z|^{-2c} =\frac{2}{\sqrt{2\pi}} \int_0^\infty z^{-2c}e^{-z^2/2}dz = \frac{2^{-c}}{\sqrt\pi} \int_0^\infty x^{-c-1/2}e^{-x}dx =\frac{2^{-c}\Gamma(1/2-c)}{\sqrt\pi},
\end{equation}
which is finite when $c<1/2$. Multiplying the two factors proves the claimed formula.

We also establish a uniform upper bound near zero, which we will need below. Fix $0<c<\min(a,1/2)$. On $0< t \leq 1$, the beta-prime density is bounded by $K_c t^{c-1}$ for a finite constant $K_c$, since $t^{a-1}\leq t^{c-1}$ when $c<a$. This follows by  split the normal-mixture integral at $t=1$ leading to $I_1(s)$ and $I_2(s)$. Precisely, the contribution from $0<t\leq 1$ satisfies
\begin{equation}
\begin{aligned}
I_1(s) &\leq\frac{K_c}{\sqrt{2\pi}} \int_0^1 t^{c-3/2} \exp\left(-\frac{s^2}{2t}\right)dt  
\leq\frac{K_c2^{-c}\Gamma(1/2-c)}{\sqrt\pi} |s|^{2c-1}.
\end{aligned}
\end{equation}
For $t\geq 1$, the exponential part is at most one, so
\begin{equation}
I_2(s) \leq\frac{1}{\sqrt{2\pi}} \int_1^\infty t^{-1/2}f_{\omega^2}(t)dt<\infty.
\end{equation}
Since $2c-1<0$, the quantity $|s|^{2c-1}$ diverges as $s \to 0$. This implies that the bounded term is also $O(|s|^{2c-1})$. Combining $I_1$ and $I_2$ gives
\begin{equation}
\label{eq:gamma-origin-upper}
p_\gamma(s)=O(|s|^{2c-1}) \qquad  \textrm{as } s\to0.
\end{equation}  
\end{proof}


\begin{proposition}
Consider the assumptions of Proposition~\ref{prop:tail}. Assume, in addition, that $\phi_i$ is absolutely continuous with density satisfying $f_{\phi_i}(\phi) \sim C\phi^{\alpha - 1}$ as $\phi  \to 0$ for some $\alpha > 0$. Then as $s \to 0$, the following statements hold:
\begin{enumerate}
    \item If $0 < a < \alpha$ and $a < 1/2$, then 
        $p_{\beta_i}(s) \sim A_{a,b}\, \mathbb{E}[\phi_i^{-a}]\, |s|^{2a-1}$.
    \item If $0 < \alpha < a$ and $\alpha < 1/2$, then 
        $p_{\beta_i}(s) \sim K_{a,b,\alpha}\, |s|^{2\alpha - 1}$.
    \item If $\alpha = a < 1/2$, then 
        $p_{\beta_i}(s) \sim D\, |s|^{2a-1} \log(1/|s|)$.
\end{enumerate}
Here  $A_{a,b}, K_{a,b,\alpha},D>0$ are constants.
\end{proposition}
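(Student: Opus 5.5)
We will work throughout with the scale-mixture representation $p_{\beta_i}(s)=\mathbb{E}\bigl[\phi_i^{-1/2}p_\gamma(s/\sqrt{\phi_i})\bigr]=\int_0^1 \phi^{-1/2}p_\gamma(s/\sqrt\phi)f_{\phi_i}(\phi)\,d\phi$, together with the two facts about $p_\gamma$ already established: the exact origin identity \eqref{eq:gamma-origin-exact} (which also gives $p_\gamma(r)\le A_{a,b}|r|^{2a-1}$ for all $r$, since the extra factor lies in $(0,1]$), and the tail identity \eqref{eq:gamma-tail-exact} (which gives $p_\gamma(r)\le C_{a,b}|r|^{-(2b+1)}$). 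Define the normalised profile $Q(r)=|r|^{1-2a}p_\gamma(r)$. Then $Q$ is bounded by $A_{a,b}$, $Q(r)\to A_{a,b}$ as $r\to0$, and $Q(r)=O(|r|^{-2a-2b})$ as $|r|\to\infty$. After the substitution we obtain
\begin{equation*}
|s|^{1-2a}p_{\beta_i}(s)=\int_0^1 \phi^{-a}\,Q\bigl(s/\sqrt\phi\bigr)f_{\phi_i}(\phi)\,d\phi .
\end{equation*}

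\emph{Case 1} ($a<\alpha$, $a<1/2$). The integrand converges pointwise to $A_{a,b}\phi^{-a}f_{\phi_i}(\phi)$. It is dominated by $A_{a,b}\phi^{-a}f_{\phi_i}(\phi)$, which is integrable because $f_{\phi_i}(\phi)\sim C\phi^{\alpha-1}$ near zero and $\alpha>a$, so that $\mathbb{E}[\phi_i^{-a}]<\infty$. Dominated convergence, along an arbitrary sequence $s_m\to0$ as in the previous proofs, then gives the stated limit with constant $A_{a,b}\mathbb{E}[\phi_i^{-a}]$.

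\emph{Case 2} ($\alpha<a$, $\alpha<1/2$). Here the small values of $\phi$ dominate. We substitute $\phi=s^2 v$, so that $s/\sqrt\phi=\pm v^{-1/2}$, and obtain
\begin{equation*}
p_{\beta_i}(s)=|s|\int_0^{1/s^2} v^{-1/2}p_\gamma(v^{-1/2})f_{\phi_i}(s^2v)\,dv .
\end{equation*}
Writing $f_{\phi_i}(s^2v)=(s^2v)^{\alpha-1}L(s^2v)$ with $L\to C$ at zero yields
\begin{equation*}
|s|^{1-2\alpha}p_{\beta_i}(s)=\int_0^{1/s^2} v^{\alpha-3/2}p_\gamma(v^{-1/2})\,L(s^2v)\,dv .
\end{equation*}
The expected limit is $K_{a,b,\alpha}=C\int_0^\infty v^{\alpha-3/2}p_\gamma(v^{-1/2})\,dv$. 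This integral is finite: as $v\to\infty$ we have $p_\gamma(v^{-1/2})\asymp v^{1/2-a}$, so the integrand is of order $v^{\alpha-a-1}$, which is integrable since $\alpha<a$; as $v\to0$ the tail bound gives order $v^{\alpha-3/2+b+1/2}$, which is integrable since $\alpha+b>0$. The step requiring care is domination, because $L$ is controlled only near zero and $f_{\phi_i}$ may be unbounded or badly behaved on the rest of $(0,1]$. I will split $\phi\le\delta$ from $\phi>\delta$. On $\phi\le\delta$, $L\le 2C$, so the integrand is dominated by $2C\,v^{\alpha-3/2}p_\gamma(v^{-1/2})$. On $\phi>\delta$, the contribution to $p_{\beta_i}(s)$ is at most $A_{a,b}|s|^{2a-1}\delta^{-a}$, and after multiplying by $|s|^{1-2\alpha}$ this is $O(|s|^{2(a-\alpha)})\to0$. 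Equivalently, this piece is the Case 1 estimate applied to the bounded part, now of smaller order because $a>\alpha$.

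\emph{Case 3} ($\alpha=a<1/2$). We use the same two representations, but now neither integral converges. Both the Case 1 integral $\int\phi^{-a}f_{\phi_i}$ and the Case 2 integral $\int^{\infty}v^{-1}\,dv$ diverge logarithmically. We split at $\phi=\delta$, and inside it split at $\phi=|s|^{2}M$ in the original variable.
\begin{itemize}
\item On $|s|^2M<\phi<\delta$, we have $Q(s/\sqrt\phi)\approx A_{a,b}$ and $f_{\phi_i}\approx C\phi^{a-1}$. This piece gives $|s|^{2a-1}A_{a,b}C\int_{s^2M}^{\delta}\phi^{-1}d\phi\sim 2A_{a,b}C\,|s|^{2a-1}\log(1/|s|)$.
\item The piece $\phi<s^2M$ is $O(|s|^{2a-1})$ by the Case 2 change of variables, since the truncated $v$-integral over $(0,M)$ is finite.
\item The piece $\phi>\delta$ is $O(|s|^{2a-1})$ by the bound $Q\le A_{a,b}$.
\end{itemize}
Letting $M\to\infty$ and $\delta\to0$ after $s\to0$ yields $D=2CA_{a,b}$. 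The main obstacle throughout is making the two-scale splitting rigorous with uniform error control, particularly in Case 3, where only the ratio to $\log(1/|s|)$ is controlled.
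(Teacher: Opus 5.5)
Your Cases 1 and 3 are correct and match the paper's proof. Your substitution $\phi=s^2v$ is the paper's $r=s/\sqrt{\phi}$ with $v=r^{-2}$, and your split points $s^2M$ and $\delta$ correspond to the paper's $r_0=M^{-1/2}$ and $s/\delta$. You also obtain the same constants, $K_{a,b,\alpha}=C\,\mathbb{E}|\gamma|^{-2\alpha}$ and $D=2CA_{a,b}$.

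\textbf{Gap in Case 2.} The statement of Case 2 assumes only $\alpha<a$ and $\alpha<1/2$, so $a\ge 1/2$ is allowed. The two facts about $p_\gamma$ near the origin that you rely on both require $a<1/2$.
\begin{itemize}
\item The bound $p_\gamma(r)\le A_{a,b}|r|^{2a-1}$ holds only for $a<1/2$. The constant $A_{a,b}$ comes from $\int_0^\infty e^{-u}u^{-a-1/2}\,du$, which diverges for $a\ge 1/2$.
\item The asymptotic $p_\gamma(v^{-1/2})\asymp v^{1/2-a}$ also fails there. For $a>1/2$ the density $p_\gamma(0)$ is finite, and at $a=1/2$ the singularity is only logarithmic.
\end{itemize}
Consequently, for $a\ge1/2$ two steps are unsupported. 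Your argument that $K_{a,b,\alpha}<\infty$ at the $v\to\infty$ end no longer applies. Your bound $A_{a,b}|s|^{2a-1}\delta^{-a}$ on the $\phi>\delta$ contribution is also unavailable.

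\textbf{Repair.} The paper avoids this as follows. Fix $c$ with $\alpha<c<\min(a,1/2)$ and use the upper bound $p_\gamma(x)=O(|x|^{2c-1})$ as $x\to0$. This bound is proved at the end of the proof of Lemma~\ref{lem:global-origin} for every $a>0$, by bounding the beta-prime density by $K_c t^{c-1}$ on $(0,1]$. It gives the following.
\begin{itemize}
\item The integrand $v^{\alpha-3/2}p_\gamma(v^{-1/2})$ is $O(v^{\alpha-c-1})$ as $v\to\infty$, hence integrable.
\item The $\phi>\delta$ piece is at most a constant times $|s|^{2c-1}\int_\delta^1\phi^{-c}f_{\phi_i}(\phi)\,d\phi\le \mathrm{const}\cdot|s|^{2c-1}\delta^{-c}$, which is $o(|s|^{2\alpha-1})$.
\end{itemize}
Equivalently, your dominating function has integral $2C\int_0^\infty r^{-2\alpha}p_\gamma(r)\,dr=C\,\mathbb{E}|\gamma|^{-2\alpha}$. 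This is finite by the negative-moment part of Lemma~\ref{lem:global-origin}, which needs only $\alpha<\min(a,1/2)$. With this substitution, the rest of your Case 2 argument goes through unchanged.
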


\begin{proof}
The three possible orderings of $a$ and $\alpha$ lead to different asymptotic regimes and are treated separately.

\textbf{Case I:} Global scale $\alpha$ dominates $ a < \alpha$ and $a < 1/2$. Define
\begin{equation}
q(r)=r^{1-2a}p_\gamma(r), \qquad r>0.
\end{equation}
By Equation~\eqref{eq:gamma-origin-exact}  $q(r)\to A_{a,b}$ as $r\downarrow0$. Additionally, it also gives the global bound $0\le q(r)\le A_{a,b}$ whenever $r > 0$, because the factor $(1+r^2/(2u))^{-(a+b)}$ is at most one.

Since $a<\alpha$, $\mathbb{E}(\phi_i^{-a})$ is finite. Choose $\delta>0$ such that $f_{\phi_i}(\phi)\le2C\phi^{\alpha-1}$ for $0<\phi\le\delta$, then
\begin{equation}
\int_0^\delta\phi^{-a}f_{\phi_i}(\phi)d\phi \leq 2C \int_0^\delta\phi^{\alpha-a-1}d\phi<\infty.
\end{equation}
Moreover, the integral over $[\delta,1]$ is finite since $\phi^{-a}$ is bounded.

Using the initial conditional scaling identity gives
\begin{equation}
|s|^{1-2a}p_{\beta_i}(s) =\mathbb{E}\left[ \phi_i^{-a}q\left(\frac{|s|}{\sqrt{\phi_i}}\right) \right].
\end{equation}
Let $s_m \to 0$ and define
\begin{equation}
H_m(\phi)=\phi^{-a}q(|s_m|/\sqrt\phi).
\end{equation}
For each fixed $\phi>0$, 
\begin{equation}
H_m(\phi)\longrightarrow A_{a,b}\phi^{-a}.
\end{equation}
The bound on $q$ gives
\begin{equation}
0\le H_m(\phi)\le A_{a,b}\phi^{-a}.
\end{equation}
The right-hand side is integrable by the preceding calculation. By dominated convergence we have
\begin{equation}
\begin{aligned}
\lim_{m\to\infty}|s_m|^{1-2a}p_{\beta_i}(s_m) &=\lim_{m\to\infty}\mathbb{E}\{H_m(\phi_i)\}  =\mathbb{E}\{\lim_{m\to\infty}H_m(\phi_i)\} =A_{a,b}\mathbb{E}(\phi_i^{-a}).
\end{aligned}
\end{equation}

\textbf{Case II:} Local scale $a$ dominates $ \alpha < a$ and $ \alpha < 1/2$. 
Let $U=\sqrt{\phi_i}$ and let $\phi=u^2$. The density of $U$ is
\begin{equation}
h(u)=2u f_{\phi_i}(u^2), \qquad 0<u<1.
\end{equation}
Hence, as $u \downarrow 0$
\begin{equation}
h(u)\sim2C u^{2\alpha-1}.
\end{equation}
Since $\beta_i=U\gamma$, then for $s>0$,
\begin{equation}
\label{eq:product-density-origin}
p_{\beta_i}(s) =\int_0^1h(u)u^{-1}p_\gamma(s/u)du.
\end{equation}
Choose $\delta\in(0,1)$ small enough that $\ell(u)=\frac{h(u)}{u^{2\alpha-1}}$ is bounded on $(0,\delta]$. The same asymptotic gives $\ell(u)\to2C$ as $u\downarrow0$. In the part of Equation~\eqref{eq:product-density-origin} with $u\le\delta$ let $r=s/u$. Normalizing by $s^{2\alpha-1}$ gives
\begin{equation}
J_s=\int_{s/\delta}^\infty p_\gamma(r)r^{-2\alpha}\ell(s/r)dr.
\end{equation}
Let $s_m\downarrow0$ and define functions on $(0,\infty)$ by
\begin{equation}
F_m(r) =
\begin{cases}
p_\gamma(r)r^{-2\alpha}\ell(s_m/r),
&r\geq s_m/\delta,\\
0,&0<r<s_m/\delta.
\end{cases}
\end{equation}
For each fixed $r>0$, the indicator eventually equals one and $s_m/r\to0$. Therefore
\begin{equation}
F_m(r)\longrightarrow2C p_\gamma(r)r^{-2\alpha}.
\end{equation}
If $L_\delta=\sup_{0<u\le\delta}\ell(u)$, then 
\begin{equation}
0 \leq F_m(r)\leq L_\delta p_\gamma(r)r^{-2\alpha}.
\end{equation}
The proposed dominating function is integrable since
\begin{equation}
\int_0^\infty p_\gamma(r)r^{-2\alpha}dr =\frac12\mathbb{E}|\gamma|^{-2\alpha}<\infty,
\end{equation}
where finiteness follows from $\alpha<\min(a,1/2)$ and Lemma~\ref{lem:global-origin}. Dominated convergence thus gives
\begin{equation}
\begin{aligned}
\lim_{m\to\infty}J_{s_m} &=\lim_{m\to\infty}\int_0^\infty F_m(r)\,dr =\int_0^\infty\lim_{m\to\infty}F_m(r)\,dr\\  
&=2C\int_0^\infty p_\gamma(r)r^{-2\alpha}dr =C\mathbb{E}|\gamma|^{-2\alpha} =C\frac{2^{-\alpha}\Gamma(1/2-\alpha)}{\sqrt\pi} \frac{B(a-\alpha,b+\alpha)}{B(a,b)}.
\end{aligned}
\end{equation}
It remains to show that the integral over $u>\delta$ is smaller. Choose $c$ satisfying
\begin{equation}
\alpha<c<\min(a,1/2).
\end{equation}
By Equation~\eqref{eq:gamma-origin-upper} there are constants $K$ and $s_0$ such that $p_\gamma(x)\le Kx^{2c-1}$ whenever $0<x<s_0$. For sufficiently small $s$, the $s/u$ is below $s_0$ uniformly over $u\in[\delta,1]$. Therefore
\begin{equation}
\begin{aligned}
\int_\delta^1h(u)u^{-1}p_\gamma(s/u)du &\le Ks^{2c-1} \int_\delta^1h(u)u^{-2c}du =O(s^{2c-1}).
\end{aligned}
\end{equation}
The integral in $u$ is finite because $u\ge\delta$. Moreover,
\begin{equation}
\frac{s^{2c-1}}{s^{2\alpha-1}} =s^{2(c-\alpha)}\longrightarrow0.
\end{equation}
Thus the $u>\delta$ part is negligible, proving the second regime and its constant.

\textbf{Case III:}  Equal scales $\alpha = a < 1/2$. 

Retain $h$, $\ell$, and $\delta$ from the preceding argument. The contribution from $u>\delta$ is $O(s^{2a-1})$ by the global bound on $q$, and hence is negligible relative to $s^{2a-1}\log(1/s)$. For $u\le\delta$, the same change of variables gives
\begin{equation}
\frac{1}{s^{2a-1}} \int_0^\delta h(u)u^{-1}p_\gamma(s/u)du =\int_{s/\delta}^\infty p_\gamma(r)r^{-2a}\ell(s/r)dr.
\end{equation}
Let $\varepsilon>0$. By taking $\delta$ smaller if necessary, we may ensure that, for all $0<u\le\delta$,
\begin{equation}
2C-\varepsilon\le\ell(u)\le2C+\varepsilon.
\end{equation}
Because $ r^{1-2a} p_\gamma(r)\to A_{a,b}$, there is $r_0>0$ such that, for $0<r\le r_0$,
\begin{equation}
\frac{A_{a,b}-\varepsilon}{r} \leq p_\gamma(r)r^{-2a} \leq \frac{A_{a,b}+\varepsilon}{r}.
\end{equation}
When $s$ is small enough that $s/\delta<r_0$, both bounds apply throughout $[s/\delta,r_0]$. Hence the integral over this interval is bounded below by
\begin{equation}
(2C-\varepsilon)(A_{a,b}-\varepsilon) \int_{s/\delta}^{r_0}\frac{dr}{r}
\end{equation}
and above by the same expression with both minus signs replaced by plus signs. The logarithmic integral is
\begin{equation}
\int_{s/\delta}^{r_0}\frac{dr}{r} =\log(1/s)+\log(r_0\delta).
\end{equation}
The remaining range $r \geq r_0$ contributes only $O(1)$ after division by $s^{2a-1}$. Indeed, $\ell$ is bounded and $\int_{r_0}^\infty p_\gamma(r)r^{-2a}dr<\infty$. Divide the upper and lower bounds by $\log(1/s)$, let $s\downarrow0$, and then let $\varepsilon\downarrow0$. This yields
\begin{equation}
\frac{p_{\beta_i}(s)}{s^{2a-1}\log(1/s)}
\longrightarrow2CA_{a,b}.
\end{equation}
Symmetry of the mixture extends the result from $s\downarrow0$ to $s \to 0$ through either sign.

\end{proof}

We now show that the negative moments of the logistic-normal are finite. 

\begin{lemma}
Let $\boldsymbol{\eta}\sim \mathsf{Normal}(\boldsymbol{\mu}_\eta,\Sigma_\eta)$ on $\mathbb{R}^d$, where $\Sigma_\eta$ is positive definite, and define $ \phi_i = \frac{\exp(\eta_i)}{\sum_{r=1}^{d}\exp(\eta_r)}$. Then for for every $c>0$,  $\mathbb{E}\left[\phi_i^{-c}\right]<\infty$ . The same conclusion holds under the smoothed prior $\Pi^\lambda$ for every $\lambda > 0$.   
\end{lemma}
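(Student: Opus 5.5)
The plan is to write the negative moment explicitly in terms of the latent Gaussian vector and then bound it by exponential moments. Since
\[
\phi_i^{-c}=\Big(\sum_{r=1}^d e^{\eta_r-\eta_i}\Big)^{c},
\]
we can use the elementary bound $\big(\sum_{r=1}^d x_r\big)^c\le d^{\max(c,1)}\max_r x_r^c\le d^{\max(c,1)}\sum_{r=1}^d x_r^c$ for $x_r>0$. (This follows from $\sum_r x_r\le d\max_r x_r$.) It gives
\[
\phi_i^{-c}\le d^{\max(c,1)}\sum_{r=1}^d \exp\{c(\eta_r-\eta_i)\}.
\]
Each contrast $\eta_r-\eta_i$ is a linear functional of $\boldsymbol\eta$. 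Under the unconstrained prior $\mathsf{Normal}(\boldsymbol\mu_\eta,\Sigma_\eta)$ it is univariate Gaussian, with mean $\mu_r-\mu_i$ and variance $v_{ri}=(e_r-e_i)^\prime\Sigma_\eta(e_r-e_i)$. Its moment generating function is therefore finite at $c$:
\[
\mathbb E\,e^{c(\eta_r-\eta_i)}=\exp\{c(\mu_r-\mu_i)+c^2v_{ri}/2\}<\infty.
\]
Summing these finitely many terms proves the claim for the plain logistic-normal. Positive definiteness is not even needed here, only finiteness of the covariance entries.

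For the smoothed prior $\Pi^\lambda$, the density of $\boldsymbol\eta$ is
\[
\pi^\lambda(\boldsymbol\eta)=Z_\lambda^{-1}\,\pi_0(\boldsymbol\eta)\exp\{-\delta^\lambda_{\mathcal E}(\boldsymbol\eta)\},\qquad Z_\lambda=\int \pi_0(\boldsymbol t)\exp\{-\delta^\lambda_{\mathcal E}(\boldsymbol t)\}\,d\boldsymbol t .
\]
The key observations are that $\delta^\lambda_{\mathcal E}\ge0$, so $\exp\{-\delta^\lambda_{\mathcal E}\}\le1$, and that $Z_\lambda>0$. Positivity of $Z_\lambda$ follows because the integrand equals $\pi_0>0$ on $\mathcal E$. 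Since $\mathcal E$ is nonempty, closed and convex, it has positive Gaussian mass as assumed in Section~\ref{sec:comp} (for instance, under uniform strict feasibility it contains a ball). In any case, the envelope is finite everywhere, so the integrand is strictly positive on all of $\mathbb R^d$, which gives $Z_\lambda>0$ even without that assumption.

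Combining these, for any nonnegative function $f$,
\[
\mathbb E^{\lambda}[f(\boldsymbol\eta)]\le Z_\lambda^{-1}\,\mathbb E_{0}[f(\boldsymbol\eta)].
\]
Applying this with $f=\phi_i^{-c}$ transfers finiteness from the unconstrained case to $\Pi^\lambda$.

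The only step needing care is the normalizing constant, that is, confirming $0<Z_\lambda\le1$. This holds because the Moreau--Yosida envelope is finite-valued, which is precisely the soft versus hard distinction. Under hard truncation one would instead need $\Pi(\mathcal E)>0$, and the same domination argument would still work.
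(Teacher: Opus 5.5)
Your proof is correct and follows essentially the same route as the paper: the identity $\phi_i^{-c}=(\sum_r e^{\eta_r-\eta_i})^c$, a power-of-$d$ bound reducing everything to finite Gaussian moment generating functions of the contrasts, and domination of the smoothed expectation by $Z_\lambda^{-1}$ times the unconstrained one. Two small remarks. Your constant $d^{\max(c,1)}$ can be sharpened to the paper's $d^c$. Your justification of $Z_\lambda>0$ through finiteness of the envelope is a useful detail that the paper simply asserts. Your aside that every nonempty closed convex set has positive Gaussian mass is false for lower-dimensional sets, but your argument does not rely on it.
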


\begin{proof}
Fix $c > 0$. It can be shown that under the softmax 
\begin{equation}
\phi_i^{-c} =\left\{\sum_{r=1}^d e^{\eta_r-\eta_i}\right\}^c.
\end{equation}
Moreover, this implies 
\begin{equation}
\phi_i^{-c} \leq d^c\sum_{r=1}^d e^{c(\eta_r-\eta_i)}.
\end{equation}
Since the contrast $\eta_r-\eta_i$ is univariate Gaussian, its moment generating function is finite at $c$, therefore 
\begin{equation}
\begin{aligned}
\mathbb{E}(\phi_i^{-c}) &\le d^c\sum_{r=1}^d
\mathbb{E}\{e^{c(\eta_r-\eta_i)}\}
=d^c\sum_{r=1}^d \exp\left[ c(\mu_r-\mu_i) +\frac{c^2}{2}\var(\eta_r-\eta_i) \right]<\infty.
\end{aligned}
\end{equation}

Let $\Pi^{\lambda}_\mathcal{E}$ denote the smoothed prior on $\mathcal{E}$ and $Z_\lambda$ denote its normalizing constant and define the weight  $w_{\lambda}(\boldsymbol\eta) = \exp\left\{ -\frac{\operatorname{dist} (\boldsymbol\eta,\mathcal E)^2}{2\lambda} \right\}$, then
\begin{equation}
\mathbb{E}_{\Pi^{\lambda}}(\phi_i^{-c}) =\frac{\mathbb{E}\{\phi_i^{-c} w_{\lambda}(\boldsymbol\eta) \}}{Z_\lambda}.
\end{equation}
Since $0<w_\lambda\le1$ and $Z_\lambda>0$,
\begin{equation}
\mathbb{E}_{\Pi^{\lambda}}(\phi_i^{-c}) \leq\frac{\mathbb{E}(\phi_i^{-c})}{Z_\lambda}<\infty.
\end{equation}
\end{proof}


\begin{proposition}
Under the setup of Proposition~\ref{prop:tail}, suppose that $0<a<1/2$ and $\mathbb{E}\left[\phi_i^{-a}\right]<\infty$. Then, as $s\to0$,
\begin{equation}
    p_{\beta_i}(s) \sim A_{a,b}\, \mathbb{E}\left[\phi_i^{-a}\right] |s|^{2a-1},
\end{equation}
\end{proposition}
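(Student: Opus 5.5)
The argument is the Case~I argument from the proof of Proposition~\ref{prop:origin}, with the integrability of $\phi_i^{-a}$ assumed directly rather than derived from polynomial boundary behaviour. We begin from the conditional scaling identity $p_{\beta_i}(s\mid\phi)=\phi^{-1/2}p_\gamma(s/\sqrt\phi)$, valid for $\phi\in(0,1]$. Since $\mathbb E[\phi_i^{-a}]<\infty$, we have $\phi_i>0$ almost surely, so conditioning on $\phi_i$ is legitimate. Set $q(r)=r^{1-2a}p_\gamma(r)$ for $r>0$. Multiplying by $|s|^{1-2a}$ and integrating over the law of $\phi_i$ gives
\begin{equation*}
|s|^{1-2a}p_{\beta_i}(s)=\mathbb E\left[\phi_i^{-a}\,q\left(\frac{|s|}{\sqrt{\phi_i}}\right)\right],
\end{equation*}
because $\phi^{-1/2}(|s|/\sqrt\phi)^{2a-1}|s|^{1-2a}=\phi^{-a}$.

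Next we use the two properties of $q$ that follow from Lemma~\ref{lem:global-origin} and its exact representation \eqref{eq:gamma-origin-exact}. First, $q(r)\to A_{a,b}$ as $r\downarrow0$, which requires $a<1/2$. Second, $0\le q(r)\le A_{a,b}$ for all $r>0$, because the factor $(1+r^2/(2u))^{-(a+b)}$ in the integrand is at most one. Take any sequence $s_m\to0$ with $s_m\neq0$ and define $H_m(\phi)=\phi^{-a}q(|s_m|/\sqrt\phi)$. For each fixed $\phi\in(0,1]$ we have $|s_m|/\sqrt\phi\to0$, so $H_m(\phi)\to A_{a,b}\phi^{-a}$. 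The domination $0\le H_m(\phi)\le A_{a,b}\phi^{-a}$ holds uniformly in $m$, and the bound is integrable exactly because $\mathbb E[\phi_i^{-a}]<\infty$. Dominated convergence therefore gives $|s_m|^{1-2a}p_{\beta_i}(s_m)\to A_{a,b}\mathbb E[\phi_i^{-a}]$. Since the sequence was arbitrary, and the limit is positive and finite, this is the claimed asymptotic equivalence. Symmetry of $p_{\beta_i}$ in $s$ covers both signs.

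The step that needs care is the uniform bound $q\le A_{a,b}$. Without it, the dominating function near $\phi=0$ (where $|s|/\sqrt\phi$ is not small) would be unclear. This is the reason to work from the exact formula \eqref{eq:gamma-origin-exact} rather than only from the asymptotic statement of Lemma~\ref{lem:global-origin}. Finally, combining this with Lemma~\ref{lem:neg-moments} shows that the finiteness hypothesis holds for the smoothed logistic-normal allocation for every $\lambda>0$, which yields Corollary~\ref{cor:marginal-ln}.
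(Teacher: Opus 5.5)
Your proposal is correct and follows the paper's proof essentially step for step. It uses the same normalized function $q(r)=r^{1-2a}p_\gamma(r)$, the same conditional scaling identity $|s|^{1-2a}p_{\beta_i}(s)=\mathbb{E}[\phi_i^{-a}q(|s|/\sqrt{\phi_i})]$, the same uniform bound $0\le q\le A_{a,b}$ read off from the exact representation \eqref{eq:gamma-origin-exact}, and dominated convergence along an arbitrary sequence $s_m\to 0$ with dominating function $A_{a,b}\phi^{-a}$, which is integrable by the negative-moment hypothesis.
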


\begin{proof}
Let $q(r)=r^{1-2a}p_\gamma(r)$. As established in the proof of Lemma~\ref{lem:global-origin}, as $r\downarrow0$ then $q(r)\longrightarrow A_{a,b}$ and Equation~\eqref{eq:gamma-origin-exact} gives the uniform bound $0\leq q(r)\leq A_{a,b}$ for $r> 0$. We have shown that the conditional scaling identity implies
\begin{equation}
|s|^{1-2a}p_{\beta_i}(s) =\mathbb{E}\left[ \phi_i^{-a}q\left(\frac{|s|}{\sqrt{\phi_i}}\right) \right].
\end{equation}

Proceeding as before, take any sequence of nonzero values $s_m\to0$ and define
\begin{equation}
H_m(\phi) = \phi^{-a}q(|s_m|/\sqrt\phi).
\end{equation}
For every fixed $\phi>0$,
\begin{equation}
H_m(\phi)\longrightarrow A_{a,b}\phi^{-a}.
\end{equation}
The uniform bound on $q$ yields
\begin{equation}
0\leq H_m(\phi)\leq A_{a,b}\phi^{-a}.
\end{equation}
The function on the right is integrable by assumption. By dominated convergence,
\begin{equation}
\begin{aligned}
\lim_{m\to\infty}|s_m|^{1-2a}p_{\beta_i}(s_m)
&=\lim_{m\to\infty}\mathbb{E}\{H_m(\phi_i)\}  =\mathbb{E}\{\lim_{m\to\infty}H_m(\phi_i)\} =A_{a,b}\mathbb E(\phi_i^{-a}).
\end{aligned}
\end{equation}
\end{proof}

\begin{corollary}{}
Let $\phi \sim \mathsf{LN}^\lambda_\mathcal{E}(\boldsymbol{\mu}_\eta, \boldsymbol{\Sigma}_{\boldsymbol{\eta}}) $ and let $\omega^2\sim\mathsf{BetaPrime}(a,b)$. Then,
\begin{equation}
p_{\beta_i}(s)\sim C_{a,b}\,\mathbb E[\phi_i^{\,b}]\,|s|^{-(2b+1)} \quad (|s|\to\infty)
\end{equation}
When $0 < a < \tfrac{1}{2}$, 
\begin{equation}
\qquad p_{\beta_i}(s)\sim A_{a,b}\,\mathbb E[\phi_i^{-a}]\,|s|^{2a-1} \quad (s\to0).
\end{equation}
In particular, the marginal prior is then unbounded at the origin with a polynomial rate.
\end{corollary}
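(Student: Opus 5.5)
The corollary is a direct assembly of results already proved. I will check that the smoothed constrained logistic-normal allocation meets the hypotheses of Proposition~\ref{prop:tail} and Proposition~\ref{prop:origin-ln}, and then read off the two asymptotics.

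\textbf{Tail.} Under $\bphi\sim\mathsf{LN}^\lambda_{\mathcal E}(\boldsymbol\mu_\eta,\boldsymbol\Sigma_\eta)$, each coordinate $\phi_i=\exp(\eta_i)/\sum_r\exp(\eta_r)$ takes values in $(0,1)$. The smoothed prior on $\boldsymbol\eta$ has Lebesgue density proportional to the Gaussian density times $\exp\{-\delta^\lambda_{\mathcal E}(\boldsymbol\eta)\}$. Since the softmax coordinate map $\boldsymbol\eta\mapsto\phi_i$ is smooth with a nonvanishing gradient, $\phi_i$ is a continuous random variable. It is independent of $\omega^2$ by construction of the hierarchy. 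Proposition~\ref{prop:tail} therefore applies verbatim and gives the tail statement. The constant $\mathbb E[\phi_i^b]\in(0,1]$ is finite and strictly positive because $\phi_i>0$ almost surely.

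\textbf{Origin.} For $0<a<1/2$, Proposition~\ref{prop:origin-ln} requires only $\mathbb E[\phi_i^{-a}]<\infty$. This follows from Lemma~\ref{lem:neg-moments} with $c=a$. The key points of that lemma are as follows.
\begin{itemize}
\item The bound $\phi_i^{-a}\le d^a\sum_r e^{a(\eta_r-\eta_i)}$ holds.
\item Gaussian contrasts have finite moment generating functions.
\item The Moreau--Yosida weight $w_\lambda\le 1$ transfers finiteness from the unconstrained Gaussian to $\Pi^\lambda$, with normalising constant $Z_\lambda>0$ because $\mathcal E$ has positive Gaussian mass.
\end{itemize}
Proposition~\ref{prop:origin-ln} then yields $p_{\beta_i}(s)\sim A_{a,b}\,\mathbb E[\phi_i^{-a}]\,|s|^{2a-1}$.

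\textbf{Unboundedness.} Since $A_{a,b}>0$, $\mathbb E[\phi_i^{-a}]\ge 1>0$, and $2a-1<0$, we get $p_{\beta_i}(s)\to\infty$ polynomially as $s\to0$.

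The only point needing care is that $Z_\lambda>0$ and that the smoothed prior is a genuine probability law. Both hold because $\exp\{-\delta^\lambda_{\mathcal E}\}$ is strictly positive everywhere and bounded by one, so no additional obstacle arises.
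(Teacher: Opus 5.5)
Your proposal is correct and follows the paper's proof essentially step for step. The paper also notes that the smoothed allocation stays on the open simplex, so Proposition~\ref{prop:tail} applies, and then uses Lemma~\ref{lem:neg-moments} with $c=a$ to verify the hypothesis of Proposition~\ref{prop:origin-ln}; your extra checks (continuity of $\phi_i$, independence from $\omega^2$, $Z_\lambda>0$, and $\mathbb E[\phi_i^{-a}]\ge 1$ with the strict inequality $2a-1<0$) only make explicit what the paper leaves implicit, and your use of strict $a<1/2$ for divergence is slightly more careful than the paper's remark that it holds for $a\le 1/2$.
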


\begin{proof}
The smoothed allocation remains supported on the open simplex, so $0<\phi_i\leq 1$ and $\mathbb{E}_{\Pi^{\lambda}}(\phi_i^b)\leq1$. Hence, Proposition~\ref{prop:tail} applies and gives the displayed tail asymptotic.

Lemma~\ref{lem:neg-moments} gives $\mathbb{E}_{\Pi^{\lambda}}(\phi_i^{-c})<\infty$ for every $c>0$. In particular, choosing $c=a$ verifies the assumption of Proposition~\ref{prop:origin-ln}. That proposition gives the displayed origin asymptotic. Clearly, the factor $|s|^{2a-1}$ diverges as $s \to 0$ when $a \leq 1/2$, proving that the marginal density is unbounded at the origin.
\end{proof}


\subsection{Posterior consistency}

For completeness, we restate the model, the assumptions, and the theorem before giving the proof. We assume that the data follow the model
\begin{equation}
\byobs_n=\bZ_n\bbeta_n^*+\bvarepsilon_n, \qquad \bvarepsilon_n\sim\mathsf{Normal}(\mathbf 0,\sigma^2I_n),
\end{equation}
where $\sigma^2$ is known. Conditionally on $\bphi$ and $\omega^2$, the prior coordinates are independent and satisfy
\begin{equation} 
\beta_{nj}\mid\bphi,\omega^2 \sim\mathsf{Normal} \left(0,\frac{\sigma^2\phi_j\omega^2}{\rho_{nj}}\right),
\end{equation}
where $\rho_{nj}=\frac1n\sum_{i=1}^nz_{ij}^2$.

\noindent\textbf{Assumption~\ref{asmp:dimension} (Dimension).}
The dimension satisfies $d_n\log n = o(n)$.

\medskip
\noindent\textbf{Assumption~\ref{asmp:design} (Design).}
Let $\Lambda_{n\min}$ and $\Lambda_{n\max}$ denote the smallest and largest singular values of $\bZ_n$, respectively. There exist $0<\Lambda_{\min}\leq\Lambda_{\max}<\infty$ such that  $\Lambda_{\min}<\liminf_n \Lambda_{n\min}/\sqrt n \le \limsup_n \Lambda_{n\max}/\sqrt n<\Lambda_{\max}$.

\medskip
\noindent\textbf{Assumption~\ref{asmp:truth} (Bounded Truth).}
There exists a constant $M_\beta<\infty$ such that $\sup_{n}\max_{j\leq d_n}|\beta^*_{nj}|\le M_\beta<\infty$.

\medskip
\noindent\textbf{Assumption~\ref{asmp:hyper} (Hyperparameters).}
The hyperparameters $a,b>0$ are fixed. Moreover, there exists a constant $M_\mu<\infty$ such that $\sup_n\|\boldsymbol\mu_{\eta,n}\|_\infty\leq M_\mu<\infty$ and the eigenvalues of $\Sigma_{\eta,n}$ lie in $[\sigma_L^2,\sigma_U^2]$ for constants $0<\sigma_L^2\le\sigma_U^2<\infty$.

\medskip
\noindent\textbf{Assumption~\ref{asmp:feasibility}
(Uniform strict feasibility).}
There exist $M_\eta<\infty$, independent of $n$, such that for every $n$, there exists a point $\boldeta^*_n\in\mathbb{R}^{d_n}$ satisfying $\|\boldsymbol\eta^*_n\|_\infty\le M_\eta$ such that $\bg_{i}^\prime\boldsymbol\eta^*_n\le-\tfrac{1}{2}\|\bg_{i}\|_1$ for every constraint $i=1,\dots,m_n$, where $m_n$ is the number of constraints.

\begin{theorem}[Strong posterior consistency]
Suppose Assumptions~\ref{asmp:dimension}--\ref{asmp:feasibility} hold, and let $\Pi_n$ be the marginal prior on $\bbeta_n$ defined by \eqref{eq:beta-marginal-prior}. If $\boldsymbol\eta$ has the smoothed prior $\Pi^{\lambda_n}$ for an arbitrary sequence $\lambda_n\in(0,\infty)$, then for every $\varepsilon>0$,
\begin{equation}
\Pi_n\left( \bbeta_n: \|\bbeta_n-\bbeta_n^*\|_2 >\varepsilon \,\middle|\, \byobs_n \right) 
 = \int_{\mathbb{R}^{d_n}}
 \mathbf{1}\left\{
    \|\boldsymbol t-\bbeta_n^*\|_2>\varepsilon
\right\}
  \pi_n\left(\boldsymbol t\mid\byobs_n\right) \,d\boldsymbol t 
\longrightarrow 0,
\end{equation}
$\Pr_{\bbeta_n^*}$-almost surely as $n\to\infty$.
\end{theorem}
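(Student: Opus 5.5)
The plan is to apply the consistency criterion of \citet{armagan_posterior_2013}, which in the Gaussian linear model with $d_n=o(n)$ and a design satisfying Assumption~\ref{asmp:design} gives strong posterior consistency whenever, for all $\Delta>0$ and some $\rho>0$,
\[
\Pi_n\left\{\bbeta_n:\|\bbeta_n-\bbeta_n^*\|_2<\Delta n^{-\rho/2}\right\}>e^{-cn}
\]
for every $c>0$ and all large $n$. We will take $\rho=1$. This reduces the theorem to a prior-mass lower bound of the form $-\log\Pi_n\{\cdots\}\le C_\Delta d_n\log n=o(n)$, with Assumption~\ref{asmp:dimension} supplying the final $o(n)$.

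\emph{Latent allocation.} Let $B_n=\{\boldsymbol\eta:\|\boldsymbol\eta-\boldsymbol\eta_n^*\|_\infty\le 1/2\}$. For $\boldsymbol\eta\in B_n$ we have $\bg_i'\boldsymbol\eta\le\bg_i'\boldsymbol\eta_n^*+\tfrac12\|\bg_i\|_1\le0$ by Assumption~\ref{asmp:feasibility}, so $B_n\subseteq\mathcal E_n$. The penalty $\delta_{\mathcal E}^{\lambda_n}$ therefore vanishes on $B_n$. Since the normalising constant $Z_{\lambda_n}\le1$, this gives
\[
\Pi^{\lambda_n}(B_n)\ge\Pi(B_n)
\]
uniformly in $\lambda_n$, which is what makes the argument insensitive to the smoothing sequence. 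The unconstrained Gaussian mass of $B_n$ is at least $\mathrm{vol}(B_n)\inf_{B_n}\pi_0$. With $\|\boldsymbol\eta\|_\infty,\|\boldsymbol\mu\|_\infty$ bounded and the eigenvalues of $\Sigma_{\eta,n}$ in $[\sigma_L^2,\sigma_U^2]$ (Assumption~\ref{asmp:hyper}), the density satisfies $\pi_0\ge(2\pi\sigma_U^2)^{-d_n/2}\exp\{-d_n(M_\eta+\tfrac12+M_\mu)^2/(2\sigma_L^2)\}$. Hence $\Pi(B_n)\ge e^{-C d_n}$. On $B_n$ all coordinates lie within a bounded window, so $\phi_j\in[\kappa^{-1}/d_n,\kappa/d_n]$ with $\kappa=e^{2(M_\eta+1/2)}$.

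\emph{Global scale.} Next restrict to $\omega^2\in[d_n,2d_n]$. Its beta-prime mass is $\int_{d_n}^{2d_n}t^{a-1}(1+t)^{-a-b}dt/B(a,b)\gtrsim d_n^{-b}$, whose negative logarithm is $O(\log d_n)=O(\log n)$.

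\emph{Conditional Gaussian mass.} On the two events above, each conditional variance $v_j=\sigma^2\phi_j\omega^2/\rho_{nj}$ lies in a fixed interval $[v_-,v_+]\subset(0,\infty)$, because $\rho_{nj}\in[\Lambda_{\min}^2,\Lambda_{\max}^2]$ for large $n$. The coordinatewise box $|\beta_j-\beta_{nj}^*|\le\Delta/\sqrt{nd_n}$ lies inside the Euclidean ball of radius $\Delta n^{-1/2}$. On this box the Gaussian density is at least $(2\pi v_+)^{-1/2}\exp\{-(M_\beta+\Delta)^2/(2v_-)\}$, using Assumption~\ref{asmp:truth}. Each coordinate therefore contributes probability at least $c_1\Delta/\sqrt{nd_n}$, and the product over $j$ is $\ge\exp\{-d_n[\tfrac12\log(nd_n)+C']\}\ge e^{-C''d_n\log n}$, using $d_n\le n$.

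Multiplying the three lower bounds gives $-\log\Pi_n\{\cdot\}\le C_\Delta d_n\log n$, and Armagan et al.'s theorem then delivers almost-sure consistency. The main obstacle is the first step: controlling the Gaussian mass of $B_n$ uniformly in the growing dimension while the penalty is unbounded outside $\mathcal E_n$. The strict-feasibility margin $\tfrac12\|\bg_i\|_1$ is exactly what ensures an entire sup-norm box of fixed radius lies in $\mathcal E_n$, so the constraints never cost mass. A secondary check is that Armagan et al.'s criterion is stated for $d_n=o(n)$ with eigenvalue conditions matching Assumption~\ref{asmp:design}.
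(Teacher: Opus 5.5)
Your proposal is correct and follows essentially the same route as the paper. Both apply the Armagan et al.\ prior-mass criterion with $\rho=1$, and both verify it by restricting $\boldsymbol\eta$ to the sup-norm box $B_n\subseteq\mathcal E_n$ of radius $1/2$ around $\boldsymbol\eta_n^*$ (so that $\Pi^{\lambda_n}(B_n)\ge\Pi(B_n)\ge e^{-Cd_n}$ uniformly in $\lambda_n$), restricting $\omega^2$ to $[d_n,2d_n]$, and restricting $\bbeta_n$ to the coordinatewise box of radius $\Delta/\sqrt{nd_n}$. One step you use only implicitly should be stated: $\boldsymbol\eta$ and $\omega^2$ remain independent under the smoothed prior, because the penalty depends on $\boldsymbol\eta$ alone, and this is what justifies multiplying the three lower bounds.
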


The proof uses the following sufficient condition from \cite{armagan_posterior_2013}. We restate it here. 

\begin{lemma}
\label{lem:adl}
Suppose that $d_n=o(n)$ and Assumption~\ref{asmp:design} holds. Fix $\varepsilon>0$. If, for some $\rho>0$,
\begin{equation}
\Pi_n\left( \|\bbeta_n-\bbeta_n^*\|_2<\Delta n^{-\rho/2}\right)>e^{-cn}
\end{equation}
for every
\begin{equation}
0<\Delta<\frac{\varepsilon^2\Lambda_{\min}^2}{48\Lambda_{\max}^2}
\end{equation}
and every
\begin{equation}
0<c<
\frac{\varepsilon^2\Lambda_{\min}^2}{32\sigma^2}-\frac{3\Delta\Lambda_{\max}^2}{2\sigma^2},
\end{equation}
for all sufficiently large $n$, then the posterior is strongly consistent.
\end{lemma}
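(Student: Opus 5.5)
We would follow the classical Schwartz-type argument that \cite{armagan_posterior_2013} adapt to growing-dimensional Gaussian regression. Write $\mathcal B_n=\{\bbeta_n:\|\bbeta_n-\bbeta_n^*\|_2>\varepsilon\}$ and express the posterior mass as a ratio of integrated likelihood ratios,
\begin{equation*}
\Pi_n(\mathcal B_n\mid\byobs_n)=\frac{\int_{\mathcal B_n} \{p(\byobs_n\mid\bbeta_n)/p(\byobs_n\mid\bbeta_n^*)\}\,d\Pi_n(\bbeta_n)}{\int \{p(\byobs_n\mid\bbeta_n)/p(\byobs_n\mid\bbeta_n^*)\}\,d\Pi_n(\bbeta_n)}=\frac{N_n}{D_n}.
\end{equation*}
We introduce a test $\Phi_n$ and bound $\Pi_n(\mathcal B_n\mid\byobs_n)\le \Phi_n+(1-\Phi_n)N_n/D_n$. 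Almost sure convergence will then follow from Borel--Cantelli once each relevant probability is shown to be summable in $n$.

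\textbf{Step 1 (tests).} We take $\Phi_n=\mathbf 1\{\|\hat\bbeta_n^{\mathrm{ols}}-\bbeta_n^*\|_2>\varepsilon/2\}$, which is well defined because Assumption~\ref{asmp:design} gives full column rank for large $n$.
\begin{itemize}
\item \emph{Type I error.} Under the truth, $\hat\bbeta_n^{\mathrm{ols}}-\bbeta_n^*\sim\mathsf{Normal}(0,\sigma^2(\bZ_n'\bZ_n)^{-1})$, so $\|\hat\bbeta_n^{\mathrm{ols}}-\bbeta_n^*\|_2^2\le \sigma^2\chi^2_{d_n}/(n\Lambda_{\min}^2)$. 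A chi-square tail bound then gives $\Pr_{\bbeta^*}(\Phi_n=1)\le e^{-c_1 n}$, using $d_n=o(n)$.
\item \emph{Type II error.} For $\bbeta_n\in\mathcal B_n$ we bound $\mathbb E_{\bbeta_n}(1-\Phi_n)$ in the same way. On the event $\{\Phi_n=0\}$ we have $\|\hat\bbeta_n^{\mathrm{ols}}-\bbeta_n\|_2\ge\varepsilon/2$. This yields $\sup_{\bbeta_n\in\mathcal B_n}\mathbb E_{\bbeta_n}(1-\Phi_n)\le e^{-c_1n}$, and by Fubini $\mathbb E_{\bbeta^*}\{(1-\Phi_n)N_n\}\le e^{-c_1 n}$.
\end{itemize}
Markov's inequality turns these into $(1-\Phi_n)N_n\le e^{-c_1n/2}$ eventually, almost surely. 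The explicit constant $\varepsilon^2\Lambda_{\min}^2/(32\sigma^2)$ should come out of this step.

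\textbf{Step 2 (denominator).} Restrict $D_n$ to $\mathcal N_n=\{\|\bbeta_n-\bbeta_n^*\|_2<\Delta n^{-\rho/2}\}$. On $\mathcal N_n$ the log likelihood ratio equals
\begin{equation*}
-\frac{1}{2\sigma^2}\|\bZ_n(\bbeta_n-\bbeta_n^*)\|^2+\frac{1}{\sigma^2}\bvarepsilon_n'\bZ_n(\bbeta_n-\bbeta_n^*).
\end{equation*}
\begin{itemize}
\item The first term is at least $-n\Lambda_{\max}^2\Delta^2n^{-\rho}/(2\sigma^2)$.
\item The second term is bounded below by $-\|\bZ_n'\bvarepsilon_n\|_2\,\Delta n^{-\rho/2}/\sigma^2$. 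Since $\|\bZ_n'\bvarepsilon_n\|_2^2\le n\Lambda_{\max}^2\sigma^2\chi^2_{d_n}$, with summable exceptional probabilities this is at least $-c_2\Delta\Lambda_{\max}^2 n/\sigma^2$ eventually.
\end{itemize}
Together with the prior mass hypothesis $\Pi_n(\mathcal N_n)>e^{-cn}$, this gives $D_n\ge \exp\{-(c+3\Delta\Lambda_{\max}^2/(2\sigma^2))n\}$ eventually, almost surely. The stated range of $c$ is exactly what makes this exponent smaller than $c_1/2$, so $N_n(1-\Phi_n)/D_n\to0$ almost surely.

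\textbf{Main obstacle.} The delicate part is the cross term $\bvarepsilon_n'\bZ_n(\bbeta_n-\bbeta_n^*)$ in Step 2. It must be controlled uniformly over $\mathcal N_n$ and with almost sure rather than in-probability guarantees, which requires a chi-square concentration with tails summable in $n$ while $d_n$ grows. We would first try the Laurent--Massart bound $\Pr(\chi^2_{d}\ge d+2\sqrt{dx}+2x)\le e^{-x}$ with $x\asymp n$. Tracking the constants so that they match the displayed thresholds on $\Delta$ and $c$ is routine but error-prone bookkeeping.
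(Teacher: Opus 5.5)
Your proposal is correct. It is essentially the argument of \citet{armagan_posterior_2013}, which the paper imports without reproving: an OLS-based test with exponentially small type~I and type~II errors, Markov plus Borel--Cantelli for the numerator, and a lower bound on the denominator over the shrinking neighbourhood, with $\varepsilon^2\Lambda_{\min}^2/(32\sigma^2)$ playing the role of half the test exponent.

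Your Step~2 is more generous than it needs to be. For $\rho>0$ both terms of the log-likelihood ratio on that neighbourhood are $o(n)$ almost surely: the cross term is $O(n^{1-\rho/2})$ by Cauchy--Schwarz together with $\chi^2_{d_n}=O(n)$. So the uniformity you flag as the main obstacle is automatic, and the $3\Delta\Lambda_{\max}^2/(2\sigma^2)$ slack is not actually used.
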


We now prove the prior-mass estimate needed to apply the criterion.

\begin{lemma}
\label{lem:prior-mass}
For every fixed $\Delta>0$, there is a constant $C_\Delta<\infty$, independent of $n$ and $\lambda_n$, such that
\begin{equation}
-\log\Pi_n\left(
\|\bbeta_n-\bbeta_n^*\|_2<\Delta n^{-1/2}
\right)
\le C_\Delta d_n\log n
\end{equation}
for all sufficiently large $n$.
\end{lemma}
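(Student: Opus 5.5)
We will bound the prior mass from below by a product of three explicit factors. We restrict $(\boldsymbol\eta,\omega^2)$ to a well-behaved set on which every conditional prior variance of $\beta_{nj}$ is bounded above and away from zero. We then bound the conditional Gaussian probability of a small coordinatewise box around $\bbeta_n^*$. Concretely, let $B_n=\{\boldsymbol\eta:\|\boldsymbol\eta-\boldsymbol\eta_n^*\|_\infty\le 1/2\}$ with $\boldsymbol\eta^*_n$ from Assumption~\ref{asmp:feasibility}, let $W_n=[d_n,2d_n]$, and let $Q_n=\{\bbeta:\|\bbeta-\bbeta_n^*\|_\infty<\Delta/\sqrt{nd_n}\}$. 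Since $\|\bs\|_2\le\sqrt{d_n}\|\bs\|_\infty$, we have $Q_n\subset\{\|\bbeta-\bbeta_n^*\|_2<\Delta n^{-1/2}\}$. Using conditional independence of $\boldsymbol\eta$ and $\omega^2$,
\begin{equation*}
\Pi_n\left(\|\bbeta_n-\bbeta_n^*\|_2<\Delta n^{-1/2}\right)\ge \Pi^{\lambda_n}(B_n)\,\Pr(\omega^2\in W_n)\inf_{\boldsymbol\eta\in B_n,\,\omega^2\in W_n}\Pr(\bbeta_n\in Q_n\mid\boldsymbol\eta,\omega^2).
\end{equation*}
It then suffices to show that each factor has $-\log$ at most a constant times $d_n\log n$.

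\textbf{First factor: the latent allocations.} For $\boldsymbol\eta\in B_n$ and each constraint $i$, Hölder's inequality gives $\bg_i'\boldsymbol\eta\le \bg_i'\boldsymbol\eta_n^*+\tfrac12\|\bg_i\|_1\le 0$. Hence $B_n\subset\mathcal E_n$ and the Moreau--Yosida weight $w_{\lambda_n}$ equals one on $B_n$. The smoothed prior is the Gaussian reweighted by $w_{\lambda_n}\le 1$, so its normalising constant satisfies $Z_{\lambda_n}\le 1$. Therefore $\Pi^{\lambda_n}(B_n)\ge \mathsf{Normal}(\boldsymbol\mu_{\eta,n},\Sigma_{\eta,n})(B_n)$, uniformly in $\lambda_n$. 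This step is the only place where $\lambda_n$ enters, and the point to handle with care is that the bound must not depend on $Z_{\lambda_n}$ other than through $Z_{\lambda_n}\le1$. On $B_n$ we have $\|\boldsymbol\eta-\boldsymbol\mu_{\eta,n}\|_2^2\le d_n(M_\eta+M_\mu+1/2)^2$. Using the eigenvalue bounds of Assumption~\ref{asmp:hyper}, the Gaussian density on $B_n$ is at least $(2\pi\sigma_U^2)^{-d_n/2}\exp\{-d_n(M_\eta+M_\mu+1/2)^2/(2\sigma_L^2)\}$, and $B_n$ has unit volume. Hence $-\log\Pi^{\lambda_n}(B_n)\le C_1d_n$.

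\textbf{Second and third factors.} On $W_n$ the beta-prime density is at least a constant times $d_n^{a-1}(1+2d_n)^{-(a+b)}\gtrsim d_n^{-b-1}$. Integrating over an interval of length $d_n$ gives $-\log\Pr(\omega^2\in W_n)\le C_2\log d_n\le C_2\log n$. For $\boldsymbol\eta\in B_n$, each softmax coordinate satisfies $\phi_j\in[e^{-2M'}/d_n,\,e^{2M'}/d_n]$ with $M'=M_\eta+1/2$. Combining this with $\omega^2\in[d_n,2d_n]$ and $\rho_{nj}\in[\Lambda_{\min}^2,\Lambda_{\max}^2]$ (from Assumption~\ref{asmp:design}, for large $n$), every conditional variance $v_j=\sigma^2\phi_j\omega^2/\rho_{nj}$ lies in a fixed interval $[v_L,v_U]\subset(0,\infty)$.

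Each coordinate interval of $Q_n$ lies within $M_\beta+1$ of the origin by Assumption~\ref{asmp:truth}. The corresponding $\mathsf{Normal}(0,v_j)$ probability is therefore at least
\begin{equation*}
\frac{2\Delta}{\sqrt{nd_n}}(2\pi v_U)^{-1/2}\exp\{-(M_\beta+1)^2/(2v_L)\}.
\end{equation*}
Taking the product over $j$ gives $-\log\Pr(Q_n\mid\cdot)\le d_n\{\tfrac12\log(nd_n)+C_3\}\le C_4d_n\log n$, using $d_n\le n$. Summing the three bounds yields $C_\Delta d_n\log n$, with $C_\Delta$ depending only on $\Delta$ and the constants in the assumptions, and not on $\lambda_n$.
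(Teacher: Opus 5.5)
Your proposal is correct and follows the paper's proof essentially step for step. Both use the same feasible box $B_n$ of half-width $1/2$ around $\boldsymbol\eta_n^*$, shown feasible via H\"older so that the Moreau--Yosida weight equals one there and $Z_{\lambda_n}\le 1$ removes $\lambda_n$. Both also use the same window $[d_n,2d_n]$ for $\omega^2$ to offset the $1/d_n$ softmax shares, and the same coordinatewise box of radius $\Delta/\sqrt{n d_n}$ with uniformly bounded conditional variances.

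The only slip is cosmetic: $-\log\Pr(\omega^2\in W_n)\le C_2\log d_n$ should read $b\log d_n+C_2$, since the additive constant matters when $d_n=1$. It is absorbed into $C_\Delta d_n\log n$ exactly as in the paper.
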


\begin{proof}
Set $r_n = \frac{\Delta}{ \sqrt{nd_n}}$. Consider the coordinatewise event 
\begin{equation}
C_n = \bigcap_{j=1}^{d_n} \left\{ |\beta_{nj}-\beta_{nj}^*|<r_n \right\}.
\end{equation}
Suppose that $\bbeta_n \in C_n$, then for every $j=1,\ldots, d_n$, we have
\begin{equation}
|\beta_{nj}-\beta_{nj}^*|<r_n.    
\end{equation}
Therefore 
\begin{equation}
\|\bbeta_n-\bbeta_n^*\|_2^2 =\sum_{j=1}^{d_n}(\beta_{nj}-\beta_{nj}^*)^2 <d_nr_n^2=\frac{\Delta^2}{n}.
\end{equation}
This shows that $C_n \subseteq \left\{ \boldsymbol{\beta}_n: \|\boldsymbol{\beta}_n-\boldsymbol{\beta}_n^*\|_2 < \Delta n^{-1/2} \right\}$, hence the coordinate wise box is contained in the desired Euclidean ball. 

Consider the latent vector $\boldeta$ on $B_n=\prod_{j=1}^{d_n} \left[\eta_{nj}^* -\frac12,\eta_{nj}^*+\frac12\right]$ and the global scale $\omega^2$ on the interval $I_n=[d_n,2d_n]$. 

We claim that the box $B_n$ lies inside the feasible set $\mathcal{E}_n$. To see this, let $\boldeta  = \boldeta_n^*+ \bu$ with $\|\boldsymbol u\|_\infty\le1/2$ then, for every constraint vector $\bg_{ni}$,
\begin{equation}
\begin{aligned}
\bg_{ni}^\prime\boldsymbol\eta &=\bg_{ni}^\prime\boldsymbol\eta_n^*
+\bg_{ni}^\prime\boldsymbol u\\
&\le-\frac12\|\bg_{ni}\|_1+|\bg_{ni}^\prime\boldsymbol u|\\
&\le-\frac12\|\bg_{ni}\|_1+\|\bg_{ni}\|_1\|\boldsymbol u\|_\infty \leq 0,
\end{aligned}
\end{equation}
where we have used H\"olders inequality for the $\ell_1$ and $\ell_\infty$ norms. Since this inequality holds for every $i=1,\ldots,m_n$, every $\boldsymbol\eta\in B_n$ satisfies all the constraints defining $\mathcal E_n$. Therefore $B_n\subseteq\mathcal E_n$.

Note that on $B_n$, by the triangle inequality, every coordinate satisfies $|\eta_{nj}|\leq M_\eta+1/2$. Therefore, 
\begin{equation}
e^{-M_\eta-1/2} \leq e^{\eta_{nj}} \leq e^{M_\eta+1/2}.
\end{equation}
This shows that 
\begin{equation}
d_n e^{-M_\eta-1/2} \leq \sum_{j=1}^{d_n} e^{\eta_{nj}} \leq d_n e^{M_\eta+1/2}, 
\end{equation}
which implies
\begin{equation}
\frac{c_0}{d_n} \leq \phi_j\leq\frac{c_1}{d_n},
\qquad j=1,\ldots,d_n,
\end{equation}
where $c_0 = e^{-2M_\eta-1}$ and $c_1=e^{2M_\eta+1}$.

By Assumption~\ref{asmp:design} all the eigenvalues of $\bZ_n^\prime\bZ_n/n$ eventually lie in $[\Lambda_{\min}^2,\Lambda_{\max}^2]$. By the Rayleigh inequality, for every real symmetric matrix $A_n$ we have $\lambda_{\min}(A_n)\|x\|_2^2 \leq x'A_n x \leq \lambda_{\max}(A_n)\|x\|_2^2.$ Taking $A_n=\frac{1}{n}Z_n' Z_n$ and  $x = e_j$ yields 
\begin{equation}
    \Lambda_{\min}^2 \leq \rho_{nj} \leq \Lambda_{\max}^2,
\end{equation}
for all $j$ and sufficiently large $n$.
Therefore the conditional variance $v_{nj} = \frac{\sigma^2\phi_j\omega^2}{\rho_{nj}}$ is bounded uniformly in $j$ and $n$ on the set $B_n \times I_n$; that is we have
\begin{equation}
v_L\le v_{nj}\le v_U,
\end{equation}
where 
\begin{equation}
v_L=\frac{\sigma^2c_0}{\Lambda_{\max}^2},
\qquad
v_U=\frac{2\sigma^2c_1}{\Lambda_{\min}^2}.
\end{equation}
Conditionally on $(\boldsymbol\eta,\omega^2)\in B_n\times I_n$, the coefficients are independent. For one coordinate,
\begin{equation}
\begin{aligned}
\Pr\left( |\beta_{nj}-\beta_{nj}^*|<r_n \,\middle|\,\boldsymbol\eta,\omega^2 \right)= \int_{\beta_{nj}^* -r_n}^{\beta_{nj}^*+r_n} \frac{1}{\sqrt{2\pi v_{nj}}} \exp\left(-\frac{x^2}{2v_{nj}}\right)dx.
\end{aligned}
\end{equation}
Note that in this interval, $|x|\le M_\beta+r_n$. Using $v_{nj}\le v_U$ in the normalization constant and $v_{nj}\ge v_L$ in the exponential gives
\begin{equation}
\Pr\left( |\beta_{nj}-\beta_{nj}^*|<r_n\,\middle|\,\boldsymbol\eta,\omega^2\right) \geq \frac{2r_n}{\sqrt{2\pi v_U}} \exp\left\{-\frac{(M_\beta+r_n)^2}{2v_L}\right\}.
\end{equation}
Multiplying this over the $d_n$ conditionally independent coordinates and taking negative logarithm results in
\begin{equation}
-\log\Pr\left( \bigcap_{j=1}^{d_n} \{|\beta_{nj}-\beta_{nj}^*|<r_n\} \,\middle|\,\boldsymbol\eta,\omega^2 \right) \leq d_n\left[ \log\left(\frac{\sqrt{2\pi v_U}}{2r_n}\right) +\frac{(M_\beta+r_n)^2}{2v_L} \right].
\end{equation}
Since $r_n = \frac{\Delta }{\sqrt{n d_n}}$, we have $ r_n \to0$. This implies that $r_n \leq 1$ for sufficiently large $n$ and so $( M_\beta+r_n)^2 \leq (M_\beta+1)^2$.

We now bound $\log(1/r_n)$. By Assumption~\ref{asmp:dimension}. $\frac{d_n\log n}{n}\longrightarrow 0$, thus, for all sufficiently large $n$, $\frac{d_n\log n}{n}<1$. For such $n$, and taking $n$ sufficiently large that $\log n>1$, we obtain $d_n < \frac{n}{\log n} < n$, implying $\log d_n\leq\log n$. Hence,
\begin{equation}
\log(1/r_n) = \frac12\log n+\frac12\log d_n+\log(1/\Delta) \leq \log n+|\log(\Delta)|.
\end{equation}
Therefore
\begin{equation}
 \log\left(\frac{\sqrt{2\pi v_U}}{2r_n}\right) +\frac{(M_\beta+r_n)^2}{2v_L} =  \log(1/r_n) + \log\left(\frac{\sqrt{2\pi v_U}}{2}\right) + \frac{(M_\beta+r_n)^2}{2v_L} \leq \log n + C_{1, \Delta},
\end{equation}
where, for example, one may take $C_{1,\Delta} = |\log\Delta| + \log\left(\frac{\sqrt{2\pi v_U}}{2}\right) + \frac{(M_\beta+1)^2}{2v_L}$. 
The constant $C_{1,\Delta}$ is independent of $n$ and of the particular value of $(\boldsymbol\eta,\omega^2)\in B_n\times I_n$. Thus, 
\begin{equation}
-\log \Pr\left( \bigcap_{j=1}^{d_n} \left\{ |\beta_{nj}-\beta_{nj}^*|<r_n \right\} \,\middle|\, \eta,\omega^2 \right) \leq d_n\left\{ \log n+C_{1,\Delta} \right\}
\end{equation}

We now lower bound the prior probability of $B_n$. Let $\Pi_{\boldsymbol\eta,n}^{0}$ denote the unconstrained Gaussian law $\Pi_{\boldsymbol\eta,n}^{0} = \mathsf{Normal}_{d_n} \left( \boldsymbol\mu_{\eta,n}, \Sigma_{\eta,n} \right),$ and let $\pi_{\boldsymbol\eta,n}^{0}$ denote its density
\begin{equation}
\pi_{\boldsymbol\eta,n}^{0}(\boldsymbol\eta) =\frac{1}{(2\pi)^{d_n/2}|\Sigma_{\eta,n}|^{1/2}}\exp\left[ -\frac12(\boldsymbol\eta-\boldsymbol\mu_{\eta,n})' \Sigma_{\eta,n}^{-1}(\boldsymbol\eta-\boldsymbol\mu_{\eta,n})
\right].
\end{equation}

Let $\lambda_{n1},\ldots,\lambda_{nd_n}$ denote the eigenvalues of $\Sigma_{\eta,n}$. By assumption, $\sigma_L^2\leq\lambda_{nj}\leq\sigma_U^2$, therefore the following bound holds
\begin{equation}
\label{eq:det-bound}
|\Sigma_{\eta,n}| = \prod_{j=1}^{d_n}\lambda_{nj} \leq(\sigma_U^2)^{d_n}.
\end{equation}
This implies
\begin{equation}
\label{eq:inverse-cov-bound}
|\Sigma_{\eta,n}|^{-1/2}
\geq
(\sigma_U^2)^{-d_n/2}.
\end{equation}

Since $\lambda_{nj}\geq\sigma_L^2$, it follows that $\Sigma_{\eta,n}^{-1} \preceq \sigma_L^{-2}I_{d_n}.$ Equivalently, for every $x\in\mathbb R^{d_n}$, we have $x'\Sigma_{\eta,n}^{-1}x \leq \sigma_L^{-2}\|x\|_2^2.$

Now let $\boldsymbol\eta\in B_n$, then $ |\eta_{nj}| \leq M_\eta + 1/2 $. Additionally, by Assumption $\|\boldsymbol\mu_{\eta,n}\|_\infty\leq M_\mu$ it follows that $|\mu_{\eta,n,j}|\leq M_\mu$. Hence, for every $j= 1, \ldots, d_n$ we have
\begin{equation}
|\eta_{nj}-\mu_{\eta,n,j}| \leq
|\eta_{nj}|+|\mu_{\eta,n,j}| \leq M_\eta+M_\mu+\frac12.
\end{equation}
Therefore
\begin{equation}
\|\boldeta-\boldsymbol\mu_{\eta,n}\|_2^2 =
\sum_{j=1}^{d_n} (\eta_{nj}-\mu_{\eta,n,j})^2  \leq
d_n\left(M_\eta+M_\mu+\frac12\right)^2.
\end{equation}
Now, since $x'\Sigma_{\eta,n}^{-1}x \leq \sigma_L^{-2}\|x\|_2^2$ with $x=\boldeta-\boldsymbol\mu_{\eta,n} $,
\begin{equation}
\begin{aligned}
(\boldeta-\boldsymbol\mu_{\eta,n})'\Sigma_{\eta,n}^{-1}(\boldeta-\boldsymbol\mu_{\eta,n})&\leq\sigma_L^{-2} \|\boldsymbol\eta-\boldsymbol\mu_{\eta,n}\|_2^2 \leq
\frac{d_n}{\sigma_L^2}
\left(M_\eta+M_\mu+\frac12\right)^2.
\end{aligned}
\end{equation}
Combining this inequality with the determinant bound~\eqref{eq:det-bound} yields uniformly over $B_n$ 
\begin{equation}
\begin{aligned}
\pi_{\boldsymbol\eta,n}^{0}(\boldsymbol\eta) &= \frac{1}{(2\pi)^{d_n/2}|\Sigma_{\eta,n}|^{1/2}}\exp\left[ -\frac12(\boldsymbol\eta-\boldsymbol\mu_{\eta,n})' \Sigma_{\eta,n}^{-1}(\boldsymbol\eta-\boldsymbol\mu_{\eta,n})\right] \\
&\geq 
(2\pi\sigma_U^2)^{-d_n/2}
\exp\left\{
-\frac{d_n(M_\eta+M_\mu+1/2)^2}
{2\sigma_L^2}
\right\}.    
\end{aligned}
\end{equation}
Integrating the uniform lower bound for $\pi_{\boldsymbol\eta,n}^{0}$ over $B_n$ yields
\begin{equation}
\begin{aligned}
\Pi_{\boldsymbol\eta,n}^{0}(B_n) &= \int_{B_n} \pi_{\boldsymbol\eta,n}^{0}(\boldeta)\,d\boldeta \geq
(2\pi\sigma_U^2)^{-d_n/2}
\exp\left\{
-\frac{d_n(M_\eta+M_\mu+1/2)^2}
{2\sigma_L^2}
\right\}.
\end{aligned}
\end{equation}
The preceding bound can be rewritten as
\begin{equation}
\Pi_{\boldsymbol\eta,n}^{0}(B_n)\geq \exp(-C_2d_n),
\end{equation}
with $C_2 = \frac12\log(2\pi\sigma_U^2) + \frac{(M_\eta+M_\mu+1/2)^2} {2\sigma_L^2}$ finite and independent of $n$.

We now transfer this bound to the smoothed prior $\Pi^{\lambda_n}$. For $\lambda_n > 0$, define the weight
\begin{equation}
w_{\lambda_n}(\boldsymbol\eta) = \exp\left\{ -\frac{\operatorname{dist} (\boldsymbol\eta,\mathcal E_n)^2}{2\lambda_n} \right\}.
\end{equation}
The function $w_{\lambda_n}(\boldsymbol\eta)$ is such that
$0<w_{\lambda_n}(\boldeta)\leq1$ for every $\boldeta \in \mathbb{R}^{d_n}$. The corresponding normalizing constant of the smoothed prior $\Pi^{\lambda_n}$ is given by
\begin{equation}
Z_{\lambda_n}= \int_{\mathbb R^{d_n}} w_{\lambda_n}(\boldeta)\, 
\Pi_{\boldeta,n}^{0}(d\boldeta) =
\int_{\mathbb R^{d_n}}
w_{\lambda_n}(\boldsymbol\eta)
\pi_{\boldeta,n}^{0}(\boldsymbol\eta)
\,d\boldsymbol\eta.
\end{equation}
Given that $\Pi_{\boldeta,n}^{0}$ is a probability measure we have $0<Z_{\lambda_n}\leq 1$.
The smoothed prior is the probability measure $\Pi^{\lambda_n}$ defined by
\begin{equation}
\Pi^{\lambda_n}(A) = \frac{1}{Z_{\lambda_n}} \int_A w_{\lambda_n}(\boldsymbol\eta) \,\Pi_{\boldeta,n}^{0}(d\boldsymbol\eta), \qquad A\subseteq\mathbb R^{d_n}.
\end{equation}
Equivalently, its density is
\begin{equation}
\pi_{\boldeta,n}^{\lambda_n}(\boldsymbol\eta) = \frac{1}{Z_{\lambda_n}} w_{\lambda_n}(\boldsymbol\eta) \pi_{\boldeta,n}^{0}(\boldsymbol\eta).
\end{equation}
Since $B_n\subseteq\mathcal E_n$, we have $\operatorname{dist}
(\boldsymbol\eta,\mathcal E_n)= 0$ whenever $\boldeta \in B_n$, which implies that $w_{\lambda_n}(\boldsymbol\eta)=1$. It follows that 
\begin{equation}
\begin{aligned}
\Pi^{\lambda_n}(B_n) &=\frac{1}{Z_{\lambda_n}}\int_{B_n}w_{\lambda_n}(\boldeta)\,\Pi_{\boldeta,n}^{0}(d\boldeta) = \frac{\Pi_{\boldeta,n}^{0}(B_n)} {Z_{\lambda_n}} 
\geq \exp(-C_2d_n).
\end{aligned}
\end{equation}

\begin{equation}
(1+t)^{-(a+b)} \geq (2t)^{-(a+b)} = 2^{-(a+b)}t^{-(a+b)}.
\end{equation}
Hence, 
\begin{equation}
f_{\omega^2}(t) \geq \frac{2^{-(a+b)}}{B(a,b)} t^{a-1}t^{-(a+b)} = \frac{2^{-(a+b)}}{B(a,b)} t^{-(b+1)}.
\end{equation}

Let $I_n=[d_n,2d_n]$ and assume that $\omega^2 \in I_n$. Using that  $(1+t)^{-(a+b)} \geq (2t)^{-(a+b)}$ whenever $t\geq 1$,
\begin{equation}
\Pr(\omega^2\in I_n) \geq \frac{2^{-(a+b)}}{B(a,b)}\int_{d_n}^{2d_n}t^{-(b+1)}\,dt =  \frac{2^{-(a+b)}(1-2^{-b})} {bB(a,b)}d_n^{-b}.
\end{equation}
Define the constant $K_{a,b} = \frac{2^{-(a+b)}(1-2^{-b})} {bB(a,b)} >0$ and  $C_3=\max\{0,-\log K_{a,b}\}$, which depends only on $a$ and $b$. It follows that 
\begin{equation}
-\log\Pr(\omega^2\in I_n) \leq b\log d_n+C_3.
\end{equation}
Since the weight $w_{\lambda_n}(\boldsymbol\eta)$ depends only on $\boldsymbol\eta$, $\boldsymbol\eta$ and $\omega^2$ remain independent under the smoothed prior.

We proceed to lower bound $\Pi_n\left( \|\boldsymbol\beta_n-\boldsymbol\beta_n^*\|_2 < \Delta n^{-1/2} \right)$ by combining the bounds we have calculated. By the law of total probability,
\begin{equation}
\Pi_n(C_n) = \int\Pr\left( C_n \,\middle|\, \boldeta,\omega^2 \right) \Pi^{\lambda_n}(d\boldeta) \operatorname{Pr}_{\omega^2} (d\omega^2).
\end{equation}
The integrand is nonnegative, so restricting the domain of integration to $B_n\times I_n$ yields
\begin{equation}
\begin{aligned}
\Pi_n(C_n) &\geq
\int_{B_n} \int_{I_n} \Pr\left( C_n \,\middle|\, \boldsymbol\eta,\omega^2 \right) \Pr_{\omega^2}(d\omega^2) \Pi^{\lambda_n}(d\boldsymbol\eta)\\
&\geq \exp\left\{ -d_n(\log n+C_{1,\Delta}) \right\} \Pi^{\lambda_n}(B_n) \Pr(\omega^2\in I_n) \\
&\geq \exp\left\{ -d_n(\log n+C_{1,\Delta}) - C_2 d_n - b \log d_n-  C_3  \right\}
\end{aligned} 
\end{equation}
Using $C_n  \subseteq \left\{ \boldsymbol{\beta}_n: \|\boldsymbol{\beta}_n-\boldsymbol{\beta}_n^*\|_2 < \Delta n^{-1/2} \right\} $, we obtain
\begin{equation}
 \begin{aligned}
-\log\Pi_n\left(\|\boldsymbol\beta_n-\boldsymbol\beta_n^*\|_2 <\Delta n^{-1/2}
\right) \leq d_n(\log n+C_{1,\Delta}) +C_2d_n +b\log d_n +C_3.
\end{aligned}
\end{equation}
By Assumption~\ref{asmp:dimension} $d_n<n$ for all sufficiently
large $n$, and hence $\log d_n\leq\log n$. Also, $\log n\geq1$ and $d_n\geq1$ for all sufficiently large $n$. After enlarging $C_{1,\Delta}$, $C_2$, and $C_3$ if necessary, we may assume that they are nonnegative. It follows that
\begin{equation}
\begin{aligned}
d_n(\log n+C_{1,\Delta}) +C_2d_n +b\log d_n+C_3 \leq \left( 1+C_{1,\Delta}+C_2+b+C_3 \right) d_n\log n.
\end{aligned}
\end{equation}
Thus, defining $C_\Delta = 1+C_{1,\Delta}+C_2+b+C_3$, we obtain
\begin{equation}
-\log\Pi_n\left( \|\bbeta_n-\bbeta_n^*\|_2 < \Delta n^{-1/2}
\right) \leq C_\Delta d_n\log n,
\end{equation}
for all sufficiently large $n$. Since none of the constants in this bound depends on $\lambda_n$, the result holds uniformly over every sequence $\lambda_n>0$.
\end{proof}

\begin{proof}[Proof of the strong posterior consistency theorem]
Assumption~\ref{asmp:dimension} implies $d_n=o(n)$, as required by the prior-mass criterion. Moreover, the preceding lemma gives, for every fixed $\Delta>0$,
\begin{equation}
-\log\Pi_n\left( \|\boldsymbol\beta_n-\boldsymbol\beta_n^*\|_2 < \Delta n^{-1/2} \right) \leq C_\Delta d_n\log n.
\end{equation}
Since $d_n\log n=o(n)$,  implies $\frac{C_\Delta d_n\log n}{n} \longrightarrow 0$.  Therefore, for every fixed admissible $c>0$, we have $C_\Delta d_n\log n<c n$ for all sufficiently large $n$. Hence,
\begin{equation}
\Pi_n\left(\|\boldsymbol\beta_n-\boldsymbol\beta_n^*\|_2<\Delta n^{-1/2}\right)> \exp(-cn)
\end{equation}
eventually. This verifies the prior-mass criterion  with $\rho=1$, and strong posterior consistency follows.
\end{proof}

\subsection{Posterior contraction}

To prove contraction we follow the approach of \cite{ghosal_convergence_2007}, where we need to show that the posterior probability outside a ball of radius $M\varepsilon_n$ around $\bbeta_n^*$ vanishes for a fixed constant $M$. To achieve this we apply their general contraction theorem to the Gaussian regression model and then verify its prior-mass condition using Lemma~\ref{lem:prior-mass}, which was proved above for the proximal R2 prior.

Consider the following sequence of families 
\begin{equation}
\left\{ P_{n,\bbeta} =\mathsf{Normal}(\bZ_n\bbeta,\sigma^2I_n) :\bbeta\in\mathbb R^{d_n} \right\},
\end{equation}
where we index over $n$ and by the unknown coefficient vector $\bbeta$. In our experiment, applying Theorem~1 of \cite{ghosal_convergence_2007} in the finite -dimensional case yields the following lemma:

\begin{lemma}
\label{lem:gaussian-regression-contraction}
Suppose that Assumption~\ref{asmp:design} holds and let $Q_n$ be any prior on $\bbeta\in\mathbb R^{d_n}$. Let $\epsilon_n\to0$ satisfy $n\epsilon_n^2\to\infty$ and $d_n=o(n\epsilon_n^2)$. If, for some fixed $C>0$,
\begin{equation}
Q_n\left( \|\bbeta-\bbeta_n^*\|_2 \leq \frac{\sigma}{\Lambda_{\max}}\epsilon_n \right) \geq \exp(-Cn\epsilon_n^2)
\end{equation}
for all sufficiently large $n$, then there is a fixed $M<\infty$ such that
\begin{equation}
Q_n\left( \|\bbeta-\bbeta_n^*\|_2>M\epsilon_n \,\middle|\,\byobs_n \right) \longrightarrow0
\end{equation}
in $P_{n,\bbeta_n^*}$-probability.
\end{lemma}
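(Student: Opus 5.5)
The plan is to verify the three conditions of Theorem~1 of \citet{ghosal_convergence_2007} for the independent Gaussian model $P_{n,\bbeta}=\mathsf{Normal}(\bZ_n\bbeta,\sigma^2I_n)$. We take the whole parameter space $\mathbb R^{d_n}$ as the sieve, so the remaining-mass condition is vacuous. What remains is the Kullback--Leibler prior-mass condition and the entropy/testing condition. Throughout, $n$ is large enough that Assumption~\ref{asmp:design} yields $n\Lambda_{\min}^2\|\bx\|_2^2\le\|\bZ_n\bx\|_2^2\le n\Lambda_{\max}^2\|\bx\|_2^2$ for all $\bx\in\mathbb R^{d_n}$. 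This inequality lets us move freely between the Euclidean norm and the prediction norm $\|\cdot\|_n$.

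\emph{Step 1 (KL neighbourhoods).} With $\sigma^2$ known, the explicit formulas are
\begin{equation*}
K(P_{n,\bbeta^*_n},P_{n,\bbeta})=\frac{\|\bZ_n(\bbeta-\bbeta^*_n)\|_2^2}{2\sigma^2},\qquad V=\frac{\|\bZ_n(\bbeta-\bbeta^*_n)\|_2^2}{\sigma^2}.
\end{equation*}
Both quantities are bounded by $n\Lambda_{\max}^2\|\bbeta-\bbeta_n^*\|_2^2/\sigma^2$ up to a factor of $1/2$. The Euclidean ball of radius $\sigma\epsilon_n/\Lambda_{\max}$ therefore lies inside the GvdV neighbourhood $\{K\le n\epsilon_n^2,\ V\le n\epsilon_n^2\}$, and the hypothesis gives the prior-mass condition with constant $C$. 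Via their Lemma~8.1, this yields an evidence lower bound: with $P_{n,\bbeta^*_n}$-probability tending to one,
\begin{equation*}
\int \frac{p_{n,\bbeta}}{p_{n,\bbeta^*_n}}(\byobs_n)\,dQ_n(\bbeta)\ge e^{-(C+2)n\epsilon_n^2}.
\end{equation*}

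\emph{Step 2 (tests).} For a fixed alternative $\bbeta_1$, the Neyman--Pearson test for $\bbeta^*_n$ against $\bbeta_1$ has both error probabilities at most $\exp\{-\|\bZ_n(\bbeta_1-\bbeta^*_n)\|_2^2/(8\sigma^2)\}$. Take $\bbeta$ in a Euclidean ball around $\bbeta_1$ of radius $\xi\|\bbeta_1-\bbeta^*_n\|_2$, with $\xi$ small depending on $\Lambda_{\min}/\Lambda_{\max}$. Because the Gaussian means shift by at most a fraction of the separation, the type II error stays below $\exp\{-c\,n\Lambda_{\min}^2\|\bbeta_1-\bbeta^*_n\|_2^2/\sigma^2\}$ uniformly over that ball. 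We then peel the complement into shells $S_j=\{jM\epsilon_n<\|\bbeta-\bbeta^*_n\|_2\le(j+1)M\epsilon_n\}$. Each shell is covered by at most $(C'/\xi)^{d_n}$ such balls, since $\mathbb R^{d_n}$ admits this volumetric covering bound. Taking the maximum of the local tests gives
\begin{align*}
\text{type I error} &\le \sum_{j\ge1} e^{C''d_n - c\Lambda_{\min}^2 j^2M^2n\epsilon_n^2/\sigma^2},\\
\text{type II error on }S_j &\le e^{-c\Lambda_{\min}^2j^2M^2n\epsilon_n^2/\sigma^2}.
\end{align*}
Since $d_n=o(n\epsilon_n^2)$, the entropy term is negligible, and the type I error tends to zero.

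\emph{Step 3 (combination).} We bound the posterior mass of $\bigcup_j S_j$ by the type I error plus the ratio of the integrated type II error, $\sum_j e^{-cj^2M^2n\epsilon_n^2}$, to the evidence bound $e^{-(C+2)n\epsilon_n^2}$. Choosing $M$ large enough that $c\Lambda_{\min}^2M^2/\sigma^2>C+3$ makes this ratio at most $e^{-n\epsilon_n^2}\to0$. The constant $M$ depends only on $C,\sigma,\Lambda_{\min},\Lambda_{\max}$. The prediction-norm version follows by the same argument, or directly from the norm equivalence.

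I expect the main obstacle to be Step~2: making the local tests uniform over balls whose radius is proportional to the separation, while keeping the covering count at $e^{O(d_n)}$ rather than something growing with $j$. Using relative radii $\xi\|\bbeta_1-\bbeta^*_n\|$ handles this, because scaling invariance makes the covering number of each shell independent of $j$.
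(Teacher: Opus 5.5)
Your proposal is correct and follows essentially the paper's argument: likelihood-ratio tests, a volumetric covering bound of order $e^{O(d_n)}$, the Gaussian KL/variance computation placing the Euclidean ball of radius $\sigma\epsilon_n/\Lambda_{\max}$ inside the KL neighbourhood, and the trivial bound $Q_n(S_{n,j})\le 1$ on the shells. The differences are minor. The paper works in the prediction semimetric $\|\bZ_n(\bbeta-\bbeta')\|_2/(\sigma\sqrt n)$, so its test radius $\xi=1/4$ does not depend on the design, converts to the Euclidean norm via $\Lambda_{\min}$ only at the end, and gets a fixed $M$ by reading bound (8.6) of the cited proof. You instead work in the Euclidean norm with $\xi\propto\Lambda_{\min}/\Lambda_{\max}$ and carry out the combination step yourself via their Lemma 8.1, which makes the fixed-$M$ conclusion explicit.
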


\begin{proof}
Apply \citet[Theorem~1]{ghosal_convergence_2007} to the regression experiment above. In the notation of that theorem, take both semimetrics to be
\begin{equation}
\delta_n(\bbeta,\bbeta') =\frac{\|\bZ_n(\bbeta-\bbeta')\|_2}{\sigma\sqrt n}  =\frac{1}{\sigma}\|\bbeta-\bbeta'\|_n,
\end{equation}
and take $\Theta_n=\mathbb R^{d_n}$. 

We first verify the testing condition. Fix $\bbeta_1$ and write 
\begin{equation}
\boldsymbol\mu_0=\bZ_n\bbeta_n^*, \qquad \boldsymbol\mu_1=\bZ_n\bbeta_1, \qquad \boldsymbol\Delta=\boldsymbol\mu_1-\boldsymbol\mu_0.
\end{equation}
The likelihood-ratio test of $\bbeta_n^*$ against $\bbeta_1$ is
\begin{equation}
\phi_n =\mathbf{1}\left\{ \boldsymbol\Delta^\prime \left(\byobs_n-\frac{\boldsymbol\mu_0+\boldsymbol\mu_1}{2}\right)>0 \right\}.    
\end{equation}
Under $P_{n,\bbeta_n^*}$, the statistic inside braces is Gaussian with mean $-\|\boldsymbol\Delta\|_2^2/2$ and variance $\sigma^2\|\boldsymbol\Delta\|_2^2$. If $\delta_n(\bbeta,\bbeta_1)\le \delta_n(\bbeta_1,\bbeta_n^*)/4$, its mean under $P_{n,\bbeta}$ is at least $\|\boldsymbol\Delta\|_2^2/4$, by the Cauchy--Schwarz inequality, and its variance is unchanged. Gaussian tail bounds give

\begin{equation}
P_{n,\bbeta_n^*}\phi_n \le \exp\left\{-\frac{n}{8} \delta_n^2(\bbeta_1,\bbeta_n^*)\right\},
\end{equation}
and, uniformly over $\delta_n(\bbeta,\bbeta_1)\le \delta_n(\bbeta_1,\bbeta_n^*)/4$,
\begin{equation}
P_{n,\bbeta}(1-\phi_n) \leq \exp\left\{-\frac{n}{32} \delta_n^2(\bbeta_1,\bbeta_n^*)\right\}.
\end{equation}
If $\delta_n(\bbeta_1,\bbeta_n^*)>\varepsilon$, then the neighborhood $\delta_n(\bbeta,\bbeta_1)<\varepsilon/4$ required in condition~(2.2) is contained in the neighborhood used in the preceding bound. For example take $\xi=1/4$ and $K_{\mathrm{test}}=1/32$.

For a subset $A$ of the parameter space, let $N(r,A,\delta_n)$ denote the smallest number of open $\delta_n$-balls of radius $r$ whose union contains $A$. Write
\begin{equation}
B_{\delta_n}(\bbeta_n^*,R) =\{\bbeta:\delta_n(\bbeta,\bbeta_n^*)\le R\}.
\end{equation}
By Assumption~\ref{asmp:design}, $\delta_n$ is a norm on $\mathbb R^{d_n}$, so this set is an ellipsoid with volume proportional to $R^{d_n}$. A maximal $r$-separated subset of this ball is an $r$-cover, while the balls of radius $r/2$ around its points are disjoint and contained in $B_{\delta_n}(\bbeta_n^*,R+r/2)$. Comparing their volumes gives
\begin{equation}
N\left(r,B_{\delta_n}(\bbeta_n^*,R),\delta_n\right)
\le\left(1+\frac{2R}{r}\right)^{d_n}.
\end{equation}
This argument is used by \citep{ghosal_convergence_2007} in Equation~(7.7). Taking $R=\varepsilon$ and $r=\xi\varepsilon/2$ yields, for every $\varepsilon>0$,
\begin{equation}
N\left( \frac{\xi\varepsilon}{2}, \{\bbeta:\delta_n(\bbeta,\bbeta_n^*)<\varepsilon\}, \delta_n \right) \le \left(1+\frac{4}{\xi}\right)^{d_n}.
\end{equation}
Consequently, using $\xi=1/4$,
\begin{equation}
\sup_{\varepsilon>\epsilon_n}\log N\left( \frac{\xi\varepsilon}{2}, \{\bbeta:\delta_n(\bbeta,\bbeta_n^*)<\varepsilon\}, \delta_n \right) \le d_n\log17.
\end{equation}
Since $d_n=o(n\epsilon_n^2)$, the last expression is at most
$n\epsilon_n^2$ for all sufficiently large $n$, which verifies
condition~(2.4).

For probability measures $P$ and $Q$, write $K(P,Q)=E_P\log(dP/dQ)$ for their Kullback--Leibler divergence and $V_0(P,Q)=\operatorname{Var}_P\{\log(dP/dQ)\}$. Direct calculation for the Gaussian regression likelihood gives
\begin{equation}
K(P_{n,\bbeta_n^*},P_{n,\bbeta}) =\frac{n}{2}\delta_n^2(\bbeta,\bbeta_n^*), \qquad V_0(P_{n,\bbeta_n^*},P_{n,\bbeta}) =n\delta_n^2(\bbeta,\bbeta_n^*).
\end{equation}
Taking $k=2$, the Kullback--Leibler neighborhood in the cited theorem is therefore
\begin{equation}
B_n(\bbeta_n^*,\epsilon_n;2) =\{\bbeta:\delta_n(\bbeta,\bbeta_n^*)\le\epsilon_n\}.
\end{equation}
Assumption~\ref{asmp:design} gives
\begin{equation}
\delta_n(\bbeta,\bbeta_n^*) \le\frac{\Lambda_{\max}}{\sigma} \|\bbeta-\bbeta_n^*\|_2.
\end{equation}
Hence the Euclidean ball in the assumption of the lemma is contained in $B_n(\bbeta_n^*,\epsilon_n;2)$ and 
\begin{equation}
Q_n\{B_n(\bbeta_n^*,\epsilon_n;2)\} \ge \exp(-Cn\epsilon_n^2).
\end{equation}
It remains to verify the prior-ratio condition. For the shell
\begin{equation}
S_{n,j} =\{\bbeta:j\epsilon_n<\delta_n(\bbeta,\bbeta_n^*) \le2j\epsilon_n\},
\end{equation}
we have $Q_n(S_{n,j})\le Q_n(\mathbb R^{d_n})=1$. It follows that
\begin{equation}
\frac{Q_n(S_{n,j})} {Q_n\{B_n(\bbeta_n^*,\epsilon_n;2)\}} \le \exp(Cn\epsilon_n^2) \le \exp\left(\frac{K_{\mathrm{test}}}{2} j^2n\epsilon_n^2\right)
\end{equation}
whenever $j^2\ge2C/K_{\mathrm{test}}$. This is condition~(2.5).

Theorem~1 in \citep{ghosal_convergence_2007} stated with a sequence $M_n\to\infty$ multiplying the rate $\epsilon_n$. In the present setting its proof gives the stronger conclusion with a fixed multiplier. To see this directly, set $k=2$ and set the auxiliary constant called $C$ in bound~(8.6) of the cited paper equal to one. Renaming their multiplier as $L$, define
\begin{equation}
R_n(L,J) =Q_n\left( \delta_n(\bbeta,\bbeta_n^*)>LJ\epsilon_n \,\middle|\,\byobs_n \right).
\end{equation}

Bound~(8.6) then becomes
\begin{equation}
\begin{aligned}
&E_{P_{n,\bbeta_n^*}}\{R_n(L,J)\}\\  
&\qquad\le \frac{1}{n\epsilon_n^2} +2\exp\left(-\frac{K_{\mathrm{test}}L^2}{2} n\epsilon_n^2\right) +\sum_{j\ge J} \exp\left[-n\epsilon_n^2\left\{\frac{K_{\mathrm{test}}L^2j^2}{2}-2\right\}
\right].
\end{aligned}
\end{equation}
Choose a fixed $L$ such that $K_{\mathrm{test}}L^2>4$, and choose a fixed $J$ large enough that the prior-ratio condition established above holds for every $j\ge J$. The first two terms converge to zero because $n\epsilon_n^2\to\infty$. Writing $a=K_{\mathrm{test}}L^2/2$ and $t_n=n\epsilon_n^2$, the series is bounded by
\begin{equation}
\frac{\exp[-t_n(aJ^2-2)]} {1-\exp[-t_na(2J+1)]},
\end{equation}
which also converges to zero. The right-hand side of the expectation bound
therefore converges to zero. For every $\eta>0$, Markov's inequality now gives
\begin{equation}
P_{n,\bbeta_n^*}\left(R_n(L,J)>\eta \right) \le \frac{1}{\eta}E_{P_{n,\bbeta_n^*}}\{R_n(L,J)\} \longrightarrow0.
\end{equation}
Equivalently,
\begin{equation}
Q_n\left( \delta_n(\bbeta,\bbeta_n^*)>LJ\epsilon_n \,\middle|\,\byobs_n \right)\longrightarrow0
\end{equation}
in $P_{n,\bbeta_n^*}$-probability.

Finally, the lower singular-value bound in Assumption~\ref{asmp:design} gives
\begin{equation}
\delta_n(\bbeta,\bbeta_n^*) \ge\frac{\Lambda_{\min}}{\sigma} \|\bbeta-\bbeta_n^*\|_2.
\end{equation}
Consequently,
\begin{equation}
\left\{\bbeta: \|\bbeta-\bbeta_n^*\|_2> \frac{\sigma LJ}{\Lambda_{\min}}\epsilon_n \right\} \subseteq \left\{\bbeta: \delta_n(\bbeta,\bbeta_n^*)>LJ\epsilon_n \right\}.
\end{equation}
Taking $M=\sigma LJ/\Lambda_{\min}$ proves the conclusion.
\end{proof}

We proceed with the proof of the main contraction theorem.

\begin{proof}[Proof of Theorem~\ref{thm:contraction}]
Set $\varepsilon_n=\left(\frac{d_n\log n}{n}\right)^{1/2}$, then $\varepsilon_n\to0$, $n\varepsilon_n^2=d_n\log n\to\infty$, and
\begin{equation}
\frac{d_n}{n\varepsilon_n^2}=\frac{1}{\log n}\longrightarrow0.
\end{equation}

Thus the rate and dimension conditions in Lemma~\ref{lem:gaussian-regression-contraction} are satisfied.

It remains to verify the prior-mass condition for the proximal R2 prior. Set $q_0=\sigma/\Lambda_{\max}$. Because $q_0\sqrt{d_n\log n}\ge1$ eventually,
\begin{equation}
\left\{\bbeta: \|\bbeta-\bbeta_n^*\|_2<n^{-1/2}\right\} \subseteq \left\{\bbeta: \|\bbeta-\bbeta_n^*\|_2\le q_0\varepsilon_n\right\}.
\end{equation}
Lemma~\ref{lem:prior-mass}, applied with $\Delta=1$ yields
\begin{equation}
\Pi_n\left( \|\bbeta-\bbeta_n^*\|_2\le q_0\varepsilon_n \right) \geq \exp\{-C d_n\log n\} = \exp\{-C n\varepsilon_n^2\}.
\end{equation}
The constant $C$ is independent of the positive smoothing sequence $\{\lambda_n\}$. This is precisely the prior-mass condition required by Lemma~\ref{lem:gaussian-regression-contraction} and proves the Euclidean-norm conclusion.

Finally, Assumption~\ref{asmp:design} gives
\begin{equation}
\|\bbeta-\bbeta_n^*\|_n =\frac{1}{\sqrt n}\|\bZ_n(\bbeta-\bbeta_n^*)\|_2 \leq\Lambda_{\max}\|\bbeta-\bbeta_n^*\|_2.
\end{equation}
The prediction-norm conclusion follows after increasing $M$ by the fixed factor $\Lambda_{\max}$ if necessary.
\end{proof}

\subsection{Global-scale collapse}

\begin{proposition}
Let $\bz\in\mathbb R^{n\times d}$ have full column rank and define $\hat\bbeta^{\mathrm{ols}}=(\bz^\prime \bz)^{-1}\bz^\prime\byobs$. For $\bphi$ in the open simplex and $\omega^2>0$, let $D=\operatorname{diag}(\phi_1\omega^2/\rho_1,\dots,\phi_d\omega^2/\rho_d)$.  The conditional posterior mean $\bar\bbeta=\mathsf E(\bbeta\mid\byobs,\omega^2,\bphi,\sigma^2)=(\bz^\prime \bz+D^{-1})^{-1}\bz^\prime\byobs$ does not depend on $\sigma^2$ and satisfies
\begin{equation}
\|\bar\bbeta\|_2 \le \Lambda_{\max}^2(\bz)\,\frac{\omega^2}{\rho_{\min}}\, \|\hat\bbeta^{\mathrm{ols}}\|_2, \qquad \|\hat\bbeta^{\mathrm{ols}}-\bar\bbeta\|_2 \ge \frac{\|\hat\bbeta^{\mathrm{ols}}\|_2} {1+\Lambda_{\max}^2(\bz)\,\omega^2/\rho_{\min}}.    
\end{equation}
where $\rho_{\min}=\min_{j}\rho_j>0$. Therefore, for every $\varepsilon\in(0,1)$,
\begin{equation}
\omega^2\le\frac{(1-\varepsilon)\,\rho_{\min}}{\varepsilon\,\Lambda_{\max}^2(\bz)} \quad\Longrightarrow\quad \|\hat\bbeta^{\mathrm{ols}} -\bar\bbeta\|_2\ge\varepsilon\,\|\hat\bbeta^{\mathrm{ols}}\|_2,    
\end{equation}
uniformly over the allocation simplex.
\end{proposition}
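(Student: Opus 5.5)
We start from the Gaussian conjugate form. Conditionally on $(\omega^2,\bphi,\sigma^2)$, the prior is $\bbeta\sim\mathsf{Normal}(\mathbf 0,\sigma^2 D)$ and the likelihood is $\mathsf{Normal}(\bz\bbeta,\sigma^2 I_n)$. Completing the square gives posterior precision $\sigma^{-2}(\bz^\prime\bz+D^{-1})$ and mean $(\bz^\prime\bz+D^{-1})^{-1}\bz^\prime\byobs$. The factor $\sigma^2$ cancels, which proves independence from $\sigma^2$.

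Next we write $\bar\bbeta$ in terms of the OLS estimator. Set $A=\bz^\prime\bz$, which is positive definite by full column rank. Then $\bz^\prime\byobs=A\hat\bbeta^{\mathrm{ols}}$, so $\bar\bbeta=(A+D^{-1})^{-1}A\hat\bbeta^{\mathrm{ols}}$. Factoring out $D^{-1}$ gives
\[
\bar\bbeta=(I+DA)^{-1}DA\,\hat\bbeta^{\mathrm{ols}}.
\]
It is cleaner to symmetrise with $S=D^{1/2}AD^{1/2}$, which is positive semidefinite:
\[
\bar\bbeta=D^{1/2}(I+S)^{-1}S\,D^{-1/2}\hat\bbeta^{\mathrm{ols}}, \qquad \hat\bbeta^{\mathrm{ols}}-\bar\bbeta=D^{1/2}(I+S)^{-1}D^{-1/2}\hat\bbeta^{\mathrm{ols}}.
\]
The second identity follows from $I-(I+S)^{-1}S=(I+S)^{-1}$.

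The key scalar bound is $\|DA\|_2\le\|D\|_2\|A\|_2\le(\omega^2/\rho_{\min})\Lambda_{\max}^2(\bz)$. This uses $\phi_j\le1$, so $D_{jj}\le\omega^2/\rho_{\min}$ uniformly on the simplex; this is where uniformity over the allocations enters. Call this bound $\kappa$.

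For the upper bound on $\|\bar\bbeta\|_2$, the plan is to use $\bar\bbeta=(I+DA)^{-1}DA\hat\bbeta^{\mathrm{ols}}$. The eigenvalues of $DA$ are those of $S$, hence nonnegative, so $(I+DA)^{-1}$ is a contraction in the $D^{-1/2}$-weighted norm. This gives $\|\bar\bbeta\|\le\|DA\hat\bbeta^{\mathrm{ols}}\|\le\kappa\|\hat\bbeta^{\mathrm{ols}}\|$.

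For the lower bound, invert the residual relation: $\hat\bbeta^{\mathrm{ols}}=(I+DA)(\hat\bbeta^{\mathrm{ols}}-\bar\bbeta)$. Hence
\[
\|\hat\bbeta^{\mathrm{ols}}\|\le(1+\|DA\|_2)\|\hat\bbeta^{\mathrm{ols}}-\bar\bbeta\|\le(1+\kappa)\|\hat\bbeta^{\mathrm{ols}}-\bar\bbeta\|,
\]
which is exactly the claimed inequality. This direction is clean in the Euclidean norm and needs no symmetrisation.

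The final implication is algebra. The condition $\omega^2\le(1-\varepsilon)\rho_{\min}/(\varepsilon\Lambda_{\max}^2)$ gives $1+\kappa\le1/\varepsilon$, hence $\|\hat\bbeta^{\mathrm{ols}}-\bar\bbeta\|\ge\varepsilon\|\hat\bbeta^{\mathrm{ols}}\|$. Since $\kappa$ does not depend on $\bphi$, the conclusion holds uniformly over the simplex.

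The main obstacle is the upper bound in the plain Euclidean norm. $(I+DA)^{-1}$ is not symmetric, so its Euclidean operator norm need not be at most one, and the weighted-norm contraction introduces a condition-number factor $\sqrt{\max D_{jj}/\min D_{jj}}$. To handle this, I will first try the alternative route $\bar\bbeta=(A+D^{-1})^{-1}\bz^\prime\byobs$, bounding $\|(A+D^{-1})^{-1}\|_2\le\|D\|_2$ because $A+D^{-1}\succeq D^{-1}$. This gives $\|\bar\bbeta\|\le\|D\|_2\|A\hat\bbeta^{\mathrm{ols}}\|\le\kappa\|\hat\bbeta^{\mathrm{ols}}\|$ directly, with no weighting, and I expect it to settle the step.
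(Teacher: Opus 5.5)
Your final argument is the paper's proof. The upper bound follows from $A+D^{-1}\succeq D^{-1}\succeq(\rho_{\min}/\omega^2)I$ with $A=\bz^\prime\bz$, so $\|(A+D^{-1})^{-1}\|_2\le\omega^2/\rho_{\min}$. The lower bound follows from $\hat\bbeta^{\mathrm{ols}}=(I+DA)(\hat\bbeta^{\mathrm{ols}}-\bar\bbeta)$ together with $\|DA\|_2\le\|D\|_2\|A\|_2$, and both bounds use only $\phi_j\le1$.

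You can drop the symmetrised, weighted-norm detour. As you noticed, it would only give a bound inflated by a condition-number factor. Your conjugate derivation of $\bar\bbeta$ and its $\sigma^2$-independence is a small addition that the paper simply takes as given.
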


\begin{proof}
Write $A=\boldsymbol{z}^{\top}\boldsymbol{z}$ and
$c=\omega^2/\rho_{\min}$. Since $\phi_j\leq 1$, every diagonal
entry of $D$ is at most $c$. Thus $\|D\|_2\leq c$ and
$D^{-1}\succeq c^{-1}I$.

The posterior mean is given by
\begin{equation}
\bar{\boldsymbol{\beta}}
=(A+D^{-1})^{-1}A\hat{\boldsymbol{\beta}}^{\mathrm{ols}}.
\end{equation}
Since $A+D^{-1}\succeq c^{-1}I$, its inverse has spectral norm at most $c$. Hence,
\begin{equation}
\|\bar{\boldsymbol{\beta}}\|_2
\leq c\|A\|_2\|\hat{\boldsymbol{\beta}}^{\mathrm{ols}}\|_2
=\frac{\omega^2\Lambda_{\max}^2(\boldsymbol{z})}{\rho_{\min}}
\|\hat{\boldsymbol{\beta}}^{\mathrm{ols}}\|_2.
\end{equation}

For the second bound, let
$\boldsymbol{r}=\hat{\boldsymbol{\beta}}^{\mathrm{ols}}-\bar{\boldsymbol{\beta}}$.
Multiplying the expression for $\bar{\boldsymbol{\beta}}$ by $A+D^{-1}$ gives
\begin{equation}
(A+D^{-1})\bar{\boldsymbol{\beta}} =A\hat{\boldsymbol{\beta}}^{\mathrm{ols}}.
\end{equation}
Rearranging yields
\begin{equation}
\hat{\boldsymbol{\beta}}^{\mathrm{ols}} =(I+DA)\boldsymbol{r}.
\end{equation}
Taking norms gives
\begin{equation}
\|\hat{\boldsymbol{\beta}}^{\mathrm{ols}}\|_2
\leq (1+\|D\|_2\|A\|_2)\|\boldsymbol{r}\|_2
\leq (1+c\|A\|_2)\|\boldsymbol{r}\|_2.
\end{equation}
Therefore,
\begin{equation}
\|\hat{\boldsymbol{\beta}}^{\mathrm{ols}}-\bar{\boldsymbol{\beta}}\|_2
\geq
\frac{\|\hat{\boldsymbol{\beta}}^{\mathrm{ols}}\|_2}
{1+\omega^2\Lambda_{\max}^2(\boldsymbol{z})/\rho_{\min}}.
\end{equation}
Under the stated threshold on $\omega^2$, the denominator is at most
$1/\epsilon$, proving the final implication. Both bounds use the allocation
only through $\phi_j\leq 1$, so they hold uniformly over the open simplex.
\end{proof}

\subsection{Predictive loss and coefficient recovery}
\label{app:predictive-loss}

This section connects the coefficient recovery results in Section~\ref{sec:experiments} to the out-of-sample $\elpd$ comparisons. Let a held-out observation be generated as $\tilde y = \tilde{\bz}^\top\bbeta^* + \tilde\varepsilon$ with $\tilde\varepsilon\sim\normal{0,\sigma^2}$ independent of $\tilde{\bz}$, where $\tilde{\bz}=h(\tilde{\bx})$ collects the main and interaction terms, and let $\boldsymbol\Sigma=\mathbb E[\tilde{\bz}\tilde{\bz}^\top]$.

For a point estimate $\hat\bbeta$ with estimation error $\boldsymbol\delta=\hat\bbeta-\bbeta^*$, the plug-in predictive density $p(\tilde y\mid\tilde{\bz})=\normal{\tilde{\bz}^\top\hat\bbeta,\sigma^2}$ has expected squared error
\begin{equation}
\mathbb E\!\left[(\tilde y-\tilde{\bz}^\top\hat\bbeta)^2\right]
=\mathbb E\!\left[(-\tilde{\bz}^\top\boldsymbol\delta+\tilde\varepsilon)^2\right]
=\boldsymbol\delta^\top\boldsymbol\Sigma\,\boldsymbol\delta+\sigma^2.
\end{equation}
Substituting this into the expected log predictive density gives
\begin{equation}
\elpd(\hat\bbeta)
=\mathbb E\!\left[\log p(\tilde y\mid\tilde{\bz})\right]
=-\tfrac12\log(2\pi\sigma^2)-\tfrac12-\frac{1}{2\sigma^2}\,\boldsymbol\delta^\top\boldsymbol\Sigma\,\boldsymbol\delta.
\end{equation}
The $\elpd$ difference between model $m$ (with error $\boldsymbol\delta_m$) and the baseline model (with error $\boldsymbol\delta_0$) is therefore
\begin{equation}
\Delta\elpd_m=\frac{1}{2\sigma^2}\left(\boldsymbol\delta_0^\top\boldsymbol\Sigma\,\boldsymbol\delta_0-\boldsymbol\delta_m^\top\boldsymbol\Sigma\,\boldsymbol\delta_m\right).
\end{equation}
If we partition the coefficients into blocks $g$ (active/inactive main effects and interactions) and neglect the off-diagonal cross-terms of $\boldsymbol\Sigma$ between blocks, this difference decomposes additively as
\begin{equation}
\Delta\elpd_m\approx\frac{1}{2\sigma^2}\sum_g\!\left(\boldsymbol\delta_{0,g}^\top\boldsymbol\Sigma_{gg}\boldsymbol\delta_{0,g}-\boldsymbol\delta_{m,g}^\top\boldsymbol\Sigma_{gg}\boldsymbol\delta_{m,g}\right).
\end{equation}
Unlike the group-wise relative RMSE, which averages errors within each coefficient group separately, $\Delta\elpd_m$ sums the design-weighted squared errors across all terms. Because the number of interaction terms $\binom{p}{2}$ grows quadratically with $p$, the improvements of $\Rtdprox$ on the interaction block outweigh its slightly higher error on inactive main effects.

\clearpage
\bibliographystyle{rss}
\bibliography{refs}

\end{document}